\Crefname{ALC@unique}{Line}{Lines}
\newcommand{\R}{\mathbb{R}}
\newcommand{\f}{\mathscr{F}}
\newcommand{\g}{\mathscr{G}}
\newcommand{\simplef}{f}
\newcommand{\fhat}{\widehat{f}}
\newcommand{\tmdl}{T_\textrm{model}}
\newcommand{\bs}{\boldsymbol}
\mathchardef\mhyphen="2D 
\newtheorem{assumption}{Assumption}
\colorlet{texcscolor}{blue!50!black}
\colorlet{texemcolor}{red!70!black}
\colorlet{texpreamble}{red!70!black}
\colorlet{codebackground}{black!25!white!25}
\lstdefinestyle{siamlatex}{%
  style=tcblatex,
  texcsstyle=*\color{texcscolor},
  texcsstyle=[2]\color{texemcolor},
  keywordstyle=[2]\color{texemcolor},
  moretexcs={cref,Cref,maketitle,mathcal,text,headers,email,url},
}
\DeclareTotalTCBox{\code}{ v O{} }
{ 
  fontupper=\ttfamily\color{black},
  nobeforeafter,
  tcbox raise base,
  colback=codebackground,colframe=white,
  top=0pt,bottom=0pt,left=0mm,right=0mm,
  leftrule=0pt,rightrule=0pt,toprule=0mm,bottomrule=0mm,
  boxsep=0.5mm,
  #2}{#1}
\patchcmd\newpage{\vfil}{}{}{}
\title{Nearest Neighbors GParareal: Improving Scalability of Gaussian Processes for Parallel-in-Time Solvers
}
\author{Guglielmo Gattiglio\thanks{Department of Statistics, University of Warwick, Coventry, CV4 7AL, UK (\email{Guglielmo.Gattiglio@warwick.ac.uk}, \email{Massimiliano.Tamborrino@warwick.ac.uk}).}
\and Lyudmila Grigoryeva\thanks{Faculty of Mathematics and Statistics, University of St.~Gallen, Bodanstrasse~6, CH-9000 St.~Gallen,~Switzerland (\email{Lyudmila.Grigoryeva@unisg.ch}).} {Honorary Associate Professor, Department of Statistics, University of Warwick, Coventry, CV4 7AL, UK. (\email{Lyudmila.Grigoryeva@warwick.ac.uk})}
\and Massimiliano Tamborrino\footnotemark[2]}
\begin{document}
\maketitle

\begin{tcbverbatimwrite}{tmp_\jobname_abstract.tex}
\begin{abstract}
With the advent of supercomputers, multi-processor environments and parallel-in-time (PinT) algorithms offer ways to solve initial value problems for ordinary and partial differential equations (ODEs and PDEs) over long time intervals, a task often unfeasible with sequential solvers within realistic time frames. A recent approach, GParareal, combines Gaussian Processes with traditional PinT methodology (Parareal) to achieve faster parallel speed-ups. The method is known to outperform Parareal for low-dimensional ODEs and a limited number of computer cores. Here, we present Nearest Neighbors GParareal (nnGParareal), a novel data-enriched PinT integration algorithm.  nnGParareal builds upon GParareal by improving its scalability properties for higher-dimensional systems and increased processor count. Through data reduction, the model complexity is reduced from cubic to log-linear in the sample size, yielding a fast and automated procedure to integrate initial value problems over long time intervals. First, we provide both an upper bound for the error and theoretical details on the speed-up benefits. Then, we empirically illustrate the superior performance of nnGParareal, compared to GParareal and Parareal, on nine different systems with unique features (e.g., stiff, chaotic, high-dimensional, or challenging-to-learn systems).
\end{abstract}

\begin{keywords}
Gaussian Processes, Nearest Neighbors Gaussian Processes, Parallel-in-time algorithms, scalability, ordinary and partial differential equations  
\end{keywords}

\begin{MSCcodes}
65M55, 65M22, 65L05, 50G15, 65Y05
\end{MSCcodes}
\end{tcbverbatimwrite}
\input{tmp_\jobname_abstract.tex}






\section{Introduction}
\label{sec:intro}

Sequential bottlenecks in time and the growing availability of high-performance computing have motivated the research and development of parallel-in-time (PinT) solvers for initial value problems (IVPs) for ordinary and partial differential equations (ODEs and PDEs), whose solutions using sequential time-stepping algorithms would be unattainable in realistic time frames. This is the case, for example, in plasma physics simulations, where traditional numerical schemes often reach saturation on modern supercomputers, leaving time parallelization as the only suitable option \cite{samaddar2019application}. Among others, molecular dynamics simulations present an additional challenge, as they involve averages over long trajectories of stochastic dynamics \cite{gorynina2022combining}.

Consider the following IVP, a system of $d \in \mathbb{N}$ ODEs (and similarly for PDEs)
\begin{equation}
\label{ODE}
\frac{d \boldsymbol{u}}{d t}=h(\boldsymbol{u}(t), t) \text { on } t \in\left[t_0, t_N\right] \text {, with } \boldsymbol{u}\left(t_0\right)=\boldsymbol{u}^0, \enspace N\in \mathbb{N},
\end{equation}
where $h: \mathbb{R}^d \times\left[t_0, t_N\right] \longrightarrow \mathbb{R}^d$ is a smooth multivariate function, $\boldsymbol{u}:\left[t_0, t_N\right] \longrightarrow \mathbb{R}^d$ is the time dependent vector solution, and $\boldsymbol{u}^0 \in \mathbb{R}^d$ are the initial values at $t_0$. Over the years, various PinT approaches have been proposed, differing in their perspectives on solving the problem (see, e.g., \cite{dutt2000spectral,lions2001resolution,maday2008parallelization} and \cite{ong2020applications} for a review). 

In this paper, we focus on improving the scalability performance of a particular PinT solver, GParareal \cite{pentland2023gparareal}, reducing its computational cost while maintaining the same accuracy. GParareal is built upon Parareal~\cite{lions2001resolution}, an iterative algorithm  that uses two solvers of different granularity: a fine one, denoted by $\f$, assumed accurate but slow, and a coarse one, denoted by $\g$, faster but less precise. First, the time span is divided into $N$ subintervals and the $\g$ solver is run sequentially on one core. Then, the computationally expensive numerical integrator $\f$ is run in parallel on $N$ cores, one for each time interval, starting from the initial condition provided by $\g$. All initial conditions are then iteratively
updated using an evaluation of $\g$ and a correction term based on the discrepancy $\f-\g$ between the two solvers computed at the previous iteration. Parareal stops once a desired tolerance is met, with the convergence of its solution to the true one obtained by running $\mathcal{F}$ serially guaranteed under certain mathematical conditions (see Section~\ref{sec:para}). The algorithm proceeds in a Markovian manner, relying only on the last iteration's data to propagate the algorithm further. This is potentially wasteful, as previously generated data can be leveraged within a Machine Learning framework. In particular, neural networks and their physics-informed versions (PINNs)~\cite{cuomo2022scientific} have been used to learn and approximate different components of the procedure, such as the fine solver $\f$,  reducing the running time, or the course solver $\g$, aiming to improve the course approximation, favoring a faster convergence (see, e.g., \cite{agboh2020parareal, ibrahim2023parareal,nguyen2023numerical,yalla2018parallel}). Recently, Parareal has also been used to speed-up the training of PINNs in \cite{meng2020ppinn}.

However, many of these studies precisely state neither the number of used processors, nor the computational cost of neural networks' training, as the latter is often assumed to be ``offline'' and thus not included in Parareal's runtimes.

Differently from Parareal, GParareal uses $d$ independent Gaussian Processes (GPs) to model the $d$ correction terms $(\f-\g)_s$, one per coordinate $s=1,\ldots, d$. Each GP is trained on a dataset consisting of all previously computed correction terms $\f-\g$, which are approximately $O(kN)$ at iteration $k$. By using all available data, as opposed to just those from the last iteration, GParareal achieves faster convergence than Parareal. Moreover, it is less susceptible  to poor starting conditions, and can reuse legacy data to provide an even faster convergence when applied to the same ODE/PDE with different initial conditions. 
However, although  GParareal needs fewer iterations than Parareal to converge, it has a higher computational cost per iteration, which impacts its overall running time. Indeed, the cubic cost of inverting the GP covariance matrix raises its computational complexity at iteration $k$ to $O((kN)^3)$.  As the algorithm proceeds, more data are accumulated until the model cost of the GP dominates or is comparable to that of running the fine solver, making GParareal no longer a viable/desirable algorithm. A further limitation comes from the fact that the training of $d$ independent univariate GPs in parallel requires $d$ cores, making it thus not suitable for high-dimensional systems of ODEs/PDEs.

In this work, we derive a new algorithm, Nearest Neighbors GParareal (nnGParareal), which reduces the computational cost of GParareal while maintaining or improving its performance, yielding better scalability properties in both $d$ and $N$. To achieve this goal, we replace the GPs within Parareal with the nearest neighbors GPs (nnGP) \cite{vecchia1988estimation}. We show that the same accuracy for learning the discrepancy $\f-\g$ can be obtained by training the GPs on a reduced dataset of size $m$, consisting of the nearest neighbors for the nnGP. We illustrate how we found $m$ to be independent of $N$ and $k$, with small values between $15-20$ proven sufficient. This yields a far more scalable algorithm at a computational complexity of $O(Nm^3+N\log(kN))$ at iteration $k$. We verify this empirically on both ODEs and PDEs.

The paper is structured as follows. In Sections~\ref{sec:para} and \ref{sec:gpara}, we recall Parareal and GParareal, respectively. The proposed new algorithm, nnGParareal, is introduced in Section~\ref{sec:nngp}. There, we describe its key features, derive its computational complexity, and provide an upper bound for its numerical error at a given iteration.  In Section~\ref{sec:num_exp}, we test nnGParareal on nine systems of ODEs and PDEs with unique features (e.g., challenging to learn from data, stiff, chaotic, non-autonomous, or high-dimensional systems), demonstrating the improved scalability, both in terms of computer cores $N$ and system dimension $d$. The discussion in Section~\ref{sec:discussion} concludes the paper.

\vspace{0.2cm}

\noindent {\bf Notation.} Column vectors are denoted by bold lowercase or uppercase symbols like $\boldsymbol{v}$ or $\boldsymbol{V}$. Given a vector $\boldsymbol{v} \in \mathbb{R}  ^n $, we denote its entries by $v_i$, 
with $i \in \left\{ 1, \dots, n
\right\} $, its Euclidean norm by $\| \boldsymbol{v}\|  $, and its infinity norm by $\| \boldsymbol{v}\|_{\infty}$. We denote by $\mathbb{R}^{n \times  m }$ the space of $\mathbb{R}$-valued $n\times m$ matrices ($m, n \in \mathbb{N} $) for which we use uppercase symbols. Given $A\in \mathbb{R}^{n \times  m }$, we write its components as $A _{ij} $, while $A_{(\cdot, j)}$ and $A_{(i, \cdot)}$ are used for the $j$th column and the $i$th row of $A$, respectively, $i \in \left\{ 1, \dots, n\right\} $, $j \in \left\{ 1, \dots m\right\} $. For any square matrix $A \in \mathbb{R}  ^{n \times n}$,   ${\rm det}(A)$ denotes its determinant. For any real matrix $A$, $A^\top$ denotes its transpose, while $\mathbb{I}_{n}$ is used for the identity matrix of dimension $n$.  We use calligraphic symbols, like $\mathcal{D}$, $\mathcal{U}$, $\mathcal{Y}$, for sets of vectors, with $\mathcal{F}$ and  $\mathcal{G}$ reserved for the fine and coarse solvers, respectively. Finally, wej write $a \vee b$ to denote the maximum between $a,b\in \mathbb{R}$.

\section{Parareal}
\label{sec:para}

Consider $N$ available computer cores and the time domain partitioned into $N$ sub-intervals of equal length. The goal of Parareal is to solve \eqref{ODE} on $[t_0,t_N]$ by solving {\it in parallel} the corresponding $N$ IVPs
\[
\frac{d \boldsymbol{u}_i}{d t}=h\left(\boldsymbol{u}_i\left(t \mid \boldsymbol{U}_i\right), t\right), \quad  t \in\left[t_i, t_{i+1}\right], \quad \boldsymbol{u}_i\left(t_i\right)=\boldsymbol{U}_i, \text{ for } i=0,\ldots,N-1,
\]
where $\boldsymbol{u}_i\left(t \mid \boldsymbol{U}_i\right)$ denotes the solution of the IVP \eqref{ODE} at time $t$, with initial condition $\boldsymbol{u}(t_i)=\boldsymbol{U}_i$.
We note that the only known initial value is $\boldsymbol{U}_0 =\boldsymbol{u}^0\in \mathbb{R}^d$ at time $t_0$. In contrast, the other initial conditions $\boldsymbol{U}_i$, $i=1,\ldots, N-1$, need to be determined such that they satisfy the continuity conditions $\boldsymbol{U}_i=\boldsymbol{u}_{i-1}\left(t_i|\boldsymbol{U}_{i-1}\right)$. Parareal aims to derive such initial conditions iteratively, running the accurate but computationally expensive solver $\f$ in parallel, and the less precise but computationally cheaper solver $\g$ sequentially. These solvers can be either two different solvers, or the same integrator with different time steps. 

We now describe in detail how Parareal works and exemplify it for a simple one-dimensional ODE in Figure~\ref{fig:para_evolution}.
Let us denote by $\boldsymbol{U}_i^k$ the Parareal approximation of $\boldsymbol{U}_i=\boldsymbol{u}_i(t_i)$ at iteration $k\geq 0$. At iteration $k=0$, Parareal initializes $\{\boldsymbol{U}_{i}^{0}\}_{i=1}^{N-1}$ using a \textit{sequential} run of the coarse solver $\g$ (the red line in the left plot of Figure~\ref{fig:para_evolution}) on one core, obtaining $\boldsymbol{U}_i^0=\g(\boldsymbol{U}_{i-1}^0)$, $i=1,\ldots, N-1$. At iteration $k\geq 1$, such approximations are then propagated \textit{in parallel} via $\f$ on $N$ cores, obtaining $\f(\boldsymbol{U}_{i-1}^{k-1})$, $i=1,\ldots, N$. Although more accurate than those computed via $\g$, these approximations are, in general, still different from the true values $\boldsymbol{U}_i$, since $\f$ is started from the (possibly) wrong initial conditions. To correct this, these values are updated \textit{sequentially}, using the so-called predictor-corrector rule
\begin{equation}
    \boldsymbol{U}_i^{k}=\mathscr{G}(\boldsymbol{U}_{i-1}^{k})+\mathscr{F}(\boldsymbol{U}_{i-1}^{k-1})-\mathscr{G}(\boldsymbol{U}_{i-1}^{k-1})=\mathscr{G}(\boldsymbol{U}_{i-1}^{k})+(\mathscr{F}-\mathscr{G})(\boldsymbol{U}_{i-1}^{k-1}),
\label{eq:update_rule}
\end{equation}
with $i=1, \ldots, N-1$, $k\geq 1$  \cite{gander2007analysis}. Here, the prediction $\mathscr{G}(\boldsymbol{U}_{i-1}^{k})$, run \textit{sequentially} at iteration $k$, is corrected by adding the discrepancy $\f-\g$ between the fine and the coarse solvers computed at the previous iteration $k-1$. Given some pre-defined accuracy level $\epsilon>0$, the Parareal solution \eqref{eq:update_rule} is then considered to have $\epsilon$-converged up to time $t_L\leq t_N$, if  solutions across consecutive iterations are close enough. More precisely, it holds that
    \begin{equation}
    ||\boldsymbol{U}^k_i-\boldsymbol{U}^{k-1}_i||_{\infty} < \epsilon,\quad \forall i\leq L\leq N-1.
    \label{eq:stop_crit}
\end{equation}
This is a standard stopping criterion for Parareal, but other criteria are possible (see, e.g., \cite{samaddar2019application,samaddar2010parallelization}). In this Parareal implementation, all converged Parareal approximations $\boldsymbol{U}_i^k$, $i\leq L$, are not iterated anymore to avoid unnecessary overhead \cite{elwasif2011dependency,garrido2006convergent}. Unconverged solution values $\boldsymbol{U}_i^k$, $i>L$, are instead updated during future iterations by first running $\f$ in parallel and then using the prediction-correction rule \eqref{eq:update_rule} until all Parareal solutions have converged at some iteration $K_{\rm Para}$, that is, \eqref{eq:stop_crit} is satisfied with $L=N-1$ and $k=K_{\rm Para}$. Note that when $K_{\rm Para}=N$, the Parareal solution trivially converges to that of the fine solver in a sequential manner, requiring an additional computational cost and time coming from the (unused) parallel implementation. 

\begin{figure}
     \centering
    \includegraphics[width=0.49\textwidth, trim={1.9cm 0cm 2.5cm 0cm},clip]{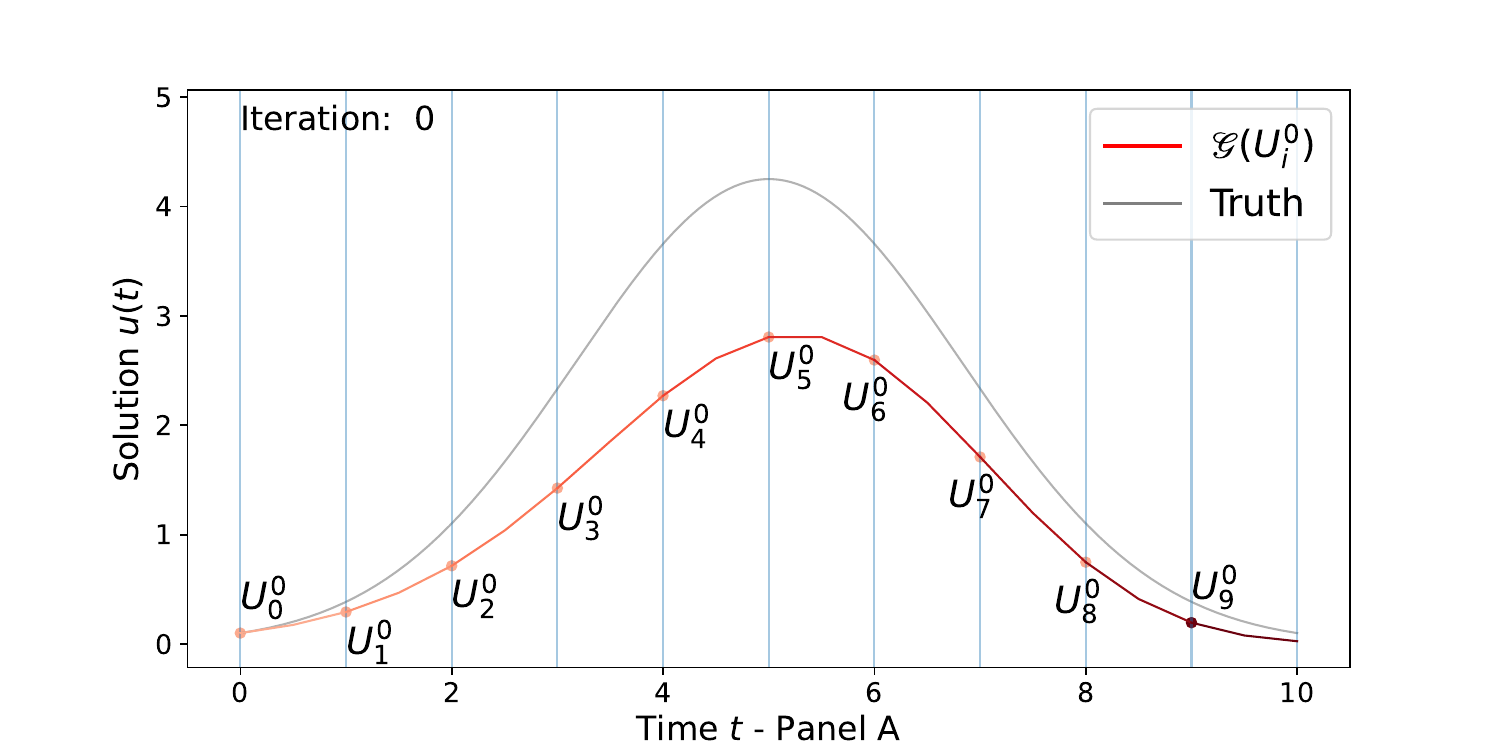}
    \includegraphics[width=0.49\textwidth, trim={2.5cm 0cm 2.2cm 0cm},clip]{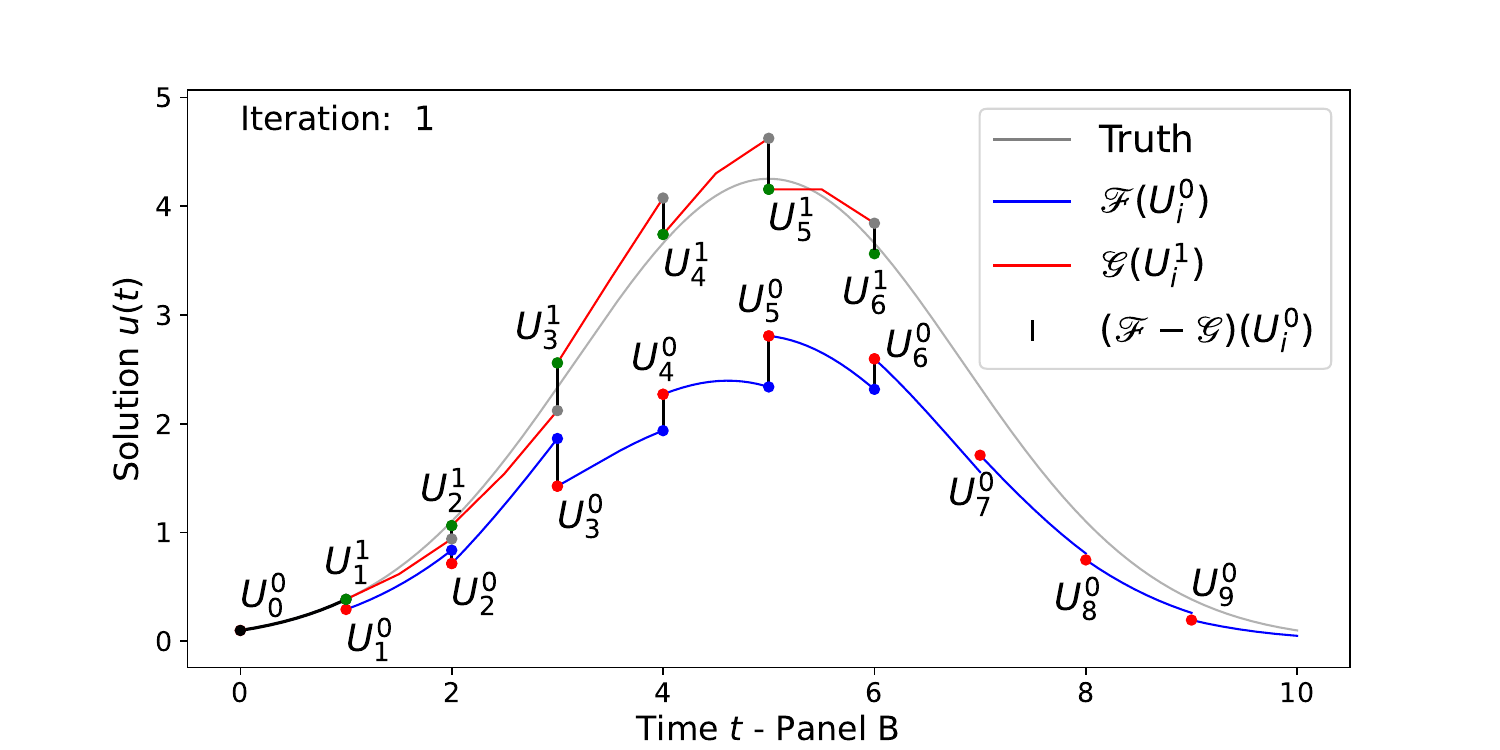}
    
    \caption{Parareal applied to an ODE in \eqref{ODE}, with $d=1$, $u(t_0)=0.1$, $t_0=0$, $t_N=10$, $N=10$. {\bf Panel A}: The gray line represents the (typically unavailable) true solution computed sequentially by $\f$. At iteration $k=0$, the coarse solver $\g$ is run sequentially (red line) to obtain initial starting conditions $U_i^0$, $i=1,\ldots, 9$. {\bf Panel B}: At iteration $k=1$, such values (red dots) are propagated in parallel via the fine solver $\f$ (blue lines). Then, $\g$ is run from $U_{i-1}^1$, $i=2,\ldots, 9$, for one time-interval (red lines), obtaining $\g(U_{i-1}^1)$ (grey dots), which is then added to $\f(U_{i-1}^0) -\g(U_{i-1}^0)$ (distance between the blue and red dots) to obtain $U_i^1$, $i=2,\ldots, 9$, (green dots) sequentially via the prediction-correction rule \eqref{eq:update_rule}. Note that \eqref{eq:update_rule} is shown only up to $i=6$ for clarity.
    }
     \label{fig:para_evolution}
\end{figure}

\section{GParareal}
\label{sec:gpara}
While Parareal leads to speed-up and more efficient use of resources with respect to running the fine solver sequentially, achieving convergence in $K_{\rm Para}\leq N$ iterations, $K_{\rm Para}$ can still be a sensible fraction of $N$, which leaves room for improvements (see Table~\ref{tab:nonaut_scaling}). GParareal was proposed for this \cite{pentland2023gparareal}. Instead of computing the predictor-corrector \eqref{eq:update_rule} at iteration $k$ using information from the previous iteration $k-1$, as done in Parareal,
GParareal uses data from the current iteration $k$, using the following expression
\begin{equation}
    \boldsymbol{U}_i^{k} = \mathscr{F}(\boldsymbol{U}_{i-1}^{k}) = (\f-\g+\g )(\boldsymbol{U}_{i-1}^{k}) =\g (\boldsymbol{U}_{i-1}^{k})+(\f-\g )(\boldsymbol{U}_{i-1}^{k}),
    \label{eq:gpara_update}
\end{equation}
with $i=1,\ldots, N-1$, and $k\geq 1$. 
Notice that stopping on the first equality would require a serial computation of $\f(\boldsymbol{U}_{i-1}^{k})$, defeating the goal of parallelization. Instead, Gaussian processes are used to predict the correction term $\f-\g$ using the known (updated) initial condition $\boldsymbol{U}_{i-1}^{k}$. When solving a $d$-dimensional system, GParareal uses the information about all the $d$ coordinates while producing predictions for each of the $d$ variables independently. 
This is done by using 
a dedicated one-dimensional GP per coordinate instead of a multi-output GP 
for $(\f-\g )(\boldsymbol{U}_{i-1}^{k}) \in \R^d$, which reduces the runtime and allows parallelization.

We now describe how the GPararel algorithm works. The first step is defining a GP prior for learning the correction function for each of the $d$ coordinates  as
\begin{equation}
(\f-\g)_s \sim GP(\mu_{\rm GP}^{(s)}, \mathcal{K}_{\rm GP}), \enspace s=1,\ldots, d,
    \label{eq:gp_prior}
\end{equation}
where $\mu_{\rm GP}^{(s)}: \R^d \longrightarrow{} \R$ is the mean and $\mathcal{K}_{\rm GP}: \R^d \times \R^d \longrightarrow \R$ is the variance kernel functions, respectively. Here, the mean function is taken to be zero, while for the variance, we consider the squared exponential kernel, defined for any $\boldsymbol{U}, \boldsymbol{U}'\in \mathbb{R}^d$  as
\begin{equation}
    \mathcal{K}_{\rm GP}(\boldsymbol{U},\boldsymbol{U}') = \sigma_{\rm o}^2 \exp(-\|\boldsymbol{U}-\boldsymbol{U}'\|^2 / \sigma_{\rm i}^2),
    \label{eq:se_kern}
\end{equation}
where $\sigma_{\rm o}^2$ and $\sigma_{\rm i}^2$ are the output and input length scales, respectively. At every Parareal iteration, we collect several evaluations of the correction function $\f-\g$
which are stored in a dataset. For every iteration $k>0$, denote 
the collection of all the inputs $\boldsymbol{U}_{i-1}^j\in\mathbb{R}^d$ and of all the outputs $(\f-\g)(\boldsymbol{U}_{i-1}^j)\in\mathbb{R}^d$, $i=1,\ldots, N$, $j=0,\ldots, k-1$, by $\mathcal{U}_k$ and $\mathcal{Y}_k$, respectively (we will use $U\in \mathbb{R}^{Nk\times d}$  and  $Y\in \mathbb{R}^{Nk\times d}$ to denote their matrix analogs, respectively
). The dataset $\mathcal{D}_k$ is composed of pairs of inputs and their corresponding outputs, that is
\begin{equation}\label{Dk}
\mathcal{D}_k:= \{(\boldsymbol{U}_{i-1}^{j}, (\f-\g)(\boldsymbol{U}_{i-1}^{j})), \enspace i=1,\ldots,N, \enspace j=0,\ldots,k-1\}.
\end{equation}
Conditioning the prior (\ref{eq:gp_prior}) using the observed data $\mathcal{D}_k$, the posterior distribution for each $s$th coordinate, $s=1,\ldots, d$, of $\boldsymbol{U}'\in\mathbb{R}^d$ is given by
\begin{equation}
    ((\f-\g)(\boldsymbol{U}'))_s|\mathcal{D}_k \sim \; GP(\mu_{\mathcal{D}_k}^{(s)}(\boldsymbol{U}'), \mathcal{K}_{\mathcal{D}_k}(\boldsymbol{U}',\boldsymbol{U}')),\label{eq:gp_posterior}
\end{equation}
with posterior mean $\mu_{\mathcal{D}_k}^{(s)}(\boldsymbol{U}') \in \mathbb{R}$ and variance $\mathcal{K}_{\mathcal{D}_k}(\boldsymbol{U}', \boldsymbol{U}') \in \mathbb{R}^+$ given by
\begin{eqnarray}
    \mu^{(s)}_{\mathcal{D}_k}(\boldsymbol{U}')&=& \mathcal{K}({U}, \boldsymbol{U}')^\top \left(\mathcal{K}({U},{U}) + \sigma_{\rm reg}^2 \mathbb{I}_{Nk}\right)^{-1} {Y}_{(\cdot,s)}, \label{eq:gp_posterior_m}\\
    \mathcal{K}_{\mathcal{D}_k}(\boldsymbol{U}', \boldsymbol{U}') &= &\mathcal{K}_{\rm GP}(\boldsymbol{U}', \boldsymbol{U}') - \mathcal{K}({U}, \boldsymbol{U}')^\top \left(\mathcal{K}({U},{U}) + \sigma_{\rm reg}^2 \mathbb{I}_{Nk}\right)^{-1} \mathcal{K}({U}, \boldsymbol{U}'), \label{eq:gp_posterior_v}
\end{eqnarray}
where $\mathcal{K}({U}, \boldsymbol{U}')\in \mathbb{R}^{Nk}$ is a vector of covariances between every input collected in ${U}$ and  $\boldsymbol{U}'$ defined as $(\mathcal{K}({U},\boldsymbol{U}'))_{r}=\mathcal{K}_{\rm GP}({{U}_{(r, \cdot)}}^\top, \boldsymbol{U}' )$, $r=1,\ldots, Nk$, and $\mathcal{K}({U},{U})\in \mathbb{R}^{Nk\times Nk}$ is the covariance matrix, with $(\mathcal{K}({U}, {U}))_{r,q} = \mathcal{K}_{\rm GP}({{U}_{(r,\cdot)}} ^\top, {{U}_{(q,\cdot)}}^\top )$, $r,q=1,\ldots, Nk$. Note that \eqref{eq:gp_posterior_v} slightly differs from the posterior variance in \cite{pentland2023gparareal}, as we added a regularization term $\sigma_{\rm reg}^2$, known as nugget, jitter, or regularization strength, to improve the condition number of the covariance matrix upon inversion (see Supplementary Material \ref{app:impl_det} for the detailed discussion).

An important step in the training of the GP is the optimization of the hyperparameters $\boldsymbol{\theta}:=(\sigma_{\rm i}^2, \sigma_{\rm o}^2, \sigma_{\rm reg}^2)$ for each of the $d$ components,
which is done at every iteration by numerically maximizing the marginal log-likelihood (see Supplementary Material \ref{app:impl_det} for implementation details):
\begin{equation}
   \log p({Y}_{(\cdot,s)}|{U}, \boldsymbol{\theta}) \propto - {{Y}_{(\cdot,s)}}^\top[\mathcal{K}({U},{U})+\sigma_{\rm reg}^2 \mathbb{I}_{Nk}]^{-1}{Y}_{(\cdot,s)} - \log {\rm det}(\mathcal{K}({U},{U})),
    \label{eq:gp_llik}
\end{equation}
where $\mathcal{K}(\cdot, \cdot)$ depends on $\boldsymbol{\theta}$ through the kernel $\mathcal{K}_{\rm GP}$ in \eqref{eq:se_kern}. 
Once the optimal hyperparameters $\boldsymbol{\theta}^*$ are determined, the $d$ scalar posterior distributions of the GP~\eqref{eq:gp_posterior} can be used to make a prediction at $\boldsymbol{U}'$.  

Going back to GParareal, the training of the GP happens once per iteration after new data are collected from the parallel run of $\f$. 
The (posterior) prediction~\eqref{eq:gp_posterior} is then computed sequentially, and the derived mean~\eqref{eq:gp_posterior_m} is used as a  point prediction for $\f-\g$ in the  GParareal predictor-corrector rule~\eqref{eq:gpara_update}, as in \cite{pentland2023gparareal}. An alternative (not considered in this work) could be to propagate the whole posterior distribution by also including its variance, as done in the stochastic version of Parareal proposed in \cite{pentland2022stochastic}, as it would provide uncertainty quantification for the final Parareal solution.

The flexibility and empirical performance of GPs come with the high computational cost required to invert the covariance matrix in \eqref{eq:gp_posterior_m} when making a prediction, and to evaluate the log-likelihood~\eqref{eq:gp_llik} during the hyper-parameter optimization. In particular, the cost of training the GP at iteration $k$ is $O((kN)^3)$, cubic in the number of time intervals and processors $N$.  This raises a trade-off. On the one hand, it is desirable to use all the available computational resources to speed up the algorithm, with modern clusters in Tier~2 HPC facilities offering $10^2$-$10^4$ processor cores. On the other hand, $N$ needs to be relatively low as, otherwise, the cost of the GP will negate any parallel gains. Indeed, as seen in Figure~\ref{fig:nonaut_scal_speedup}, a few hundred cores are sufficient to negatively impact the performance.

Computationally cheaper alternatives of matrix inversion exist and are widely used, for example, in parallel kernel ridge regression (KRR) \cite{rudi2017falkon}. The impact of replacing the matrix inverses with such operations has been extensively studied in the GP and KRR literature (see, e.g., \cite{meanti2020kernel} and references therein). In this work, we pursue a different approach, as described in the following section.

\begin{table}[t]
{\footnotesize
    \centering
    \begin{tabular}{c|c|c}%
         Model & Notation & $\fhat(\boldsymbol{U}_{i-1}^{k})=$%
         \\[0.1cm]
         \hline
         &&%
         \\%
         Parareal & 
         $\fhat_{\rm Para}$ & 
         $(\f-\g)(\boldsymbol{U}_{i-1}^{k-1})$%
         \\[0.4cm]
         \multirow{ 2}{*}{$m$-nn Parareal} & \multirow{ 2}{*}{$\fhat_{m\mhyphen\text{nn}}$}&$\widehat{(\f-\g)}(\boldsymbol{U}_{i-1}^{k})=\sum_{l=1}^m w_l \mathbf{y}^{(l\mhyphen\textrm{nn})}_{\boldsymbol{U}_{i-1}^{k}}$ as in \eqref{eq:fhat_mnn} and \eqref{eq:fhat_mnn:U}
         \\[0.4cm]
         \multirow{ 2}{*}{GParareal} & \multirow{ 2}{*}{$\fhat_{\rm GPara}$} &$\widehat{(\f-\g)}(\boldsymbol{U}_{i-1}^{k})=(\mu_{\mathcal{D}_{k}}^{(1)}(\boldsymbol{U}_{i-1}^{k}), \ldots, \mu_{\mathcal{D}_{k}}^{(d)}(\boldsymbol{U}_{i-1}^{k}))^\top$,%
         \\
         [0.15cm]
          &  & with $\mu_{\mathcal{D}_{k}}^{(s)}(\boldsymbol{U}_{i-1}^{k})$, $1\leq s\leq d$, as in (\ref{eq:gp_posterior_m}) and $\mathcal{D}_{k}$ in \eqref{Dk}%
          \\[0.5cm]
         \multirow{ 4}{*}{nnGParareal} & 
         \multirow{ 4}{*}{$\fhat_{{\rm nnGPara}}$} &$\widehat{(\f-\g)}(\boldsymbol{U}_{i-1}^{k})= (\mu_{\mathcal{D}_{i-1,k}}^{(1)}(\boldsymbol{U}_{i-1}^{k}), \ldots, \mu_{\mathcal{D}_{i-1,k}}^{(d)}(\boldsymbol{U}_{i-1}^{k}))^\top$,  
         \\[0.15cm]
        & 
        & where, for $s=1,\ldots, d$, $\mu_{\mathcal{D}_{i-1,k}}^{(s)}(\boldsymbol{U}_{i-1}^{k})$ is as in (\ref{eq:gp_posterior_m}), with   %
\\[0.2cm]
          &  &   $\mathcal{D}_{i-1,k}:= \{(\boldsymbol{U}^{(l\mhyphen\textrm{nn})}_{\boldsymbol{U}_{i-1}^k}, \mathbf{y}^{(l\mhyphen\textrm{nn})}_{\boldsymbol{U}_{i-1}^{k}}),\enspace l=1,\ldots,m \}\subset \mathcal{D}_k$\\\hline
    \end{tabular}
    }
    \caption{Definition of the correction term $\fhat$ of the predictor-corrector rule (\ref{eq:update_rule_generic}) for different Parareal variants.}
    \label{tab:fhat}
\end{table}
\section{Nearest Neighbor GParareal (nnGParareal)}
\label{sec:nngp}
We start this section by reformulating the Parareal and GParareal predictor-corrector rules, interpreting them as a learning task, and introduce our approach to data dimensionality reduction based on the choice of $m\in \mathbb{N}$ nearest neighbors (nns). In Subsections~\ref{sec:nngp_algo} and \ref{sec:subset_choice}, we describe the nnGParareal algorithm, and discuss alternative choices for selecting the reduced dataset beyond nns, respectively. In Subsection~\ref{sec:choose_m}, we explain how to choose the number $m$ of neighbors, while in Subsection \ref{sec:comp_complx} we provide the computational complexity and speed-up calculation of nnGParareal. We conclude with Subsection~\ref{sec:error_analysis}, where we derive an upper bound of the nnGParareal error.

\subsection{Rethinking the predictor-corrector rule}\label{Section4.1}
The predictor-corrector rule (\ref{eq:update_rule}) can be rewritten as 
\begin{equation}
    \boldsymbol{U}_i^k = \g(\boldsymbol{U}_{i-1}^{k}) + \fhat(\boldsymbol{U}_{i-1}^{k}),
    \label{eq:update_rule_generic}
\end{equation}
where $\fhat: \R^d \longrightarrow \R^d$, $\boldsymbol{U} \longmapsto \fhat(\boldsymbol{U})$, specifies how the correction or discrepancy function $\f-\g$ is computed or approximated based on some observation $\boldsymbol{U} \in \R^d$. Both Parareal \eqref{eq:update_rule} and GParareal \eqref{eq:gpara_update} update rules can be written in terms of $\fhat$, as shown in Table~\ref{tab:fhat}. A natural way to compare different approaches is through their (prediction) error $\|(\f-\g)(\boldsymbol{U}) - \fhat(\boldsymbol{U})\|$. We can expect lower prediction errors to yield faster convergence due to the increased accuracy in correcting the error committed by $\g$.

Interestingly, the prediction error at a test observation $\boldsymbol{U}' \in \R^d$ of the original Parareal predictor-corrector rule (\ref{eq:update_rule}), $\fhat_{\rm Para}$, at iteration $k$ is similar (in the sense of evolutions in the first few intervals) to that obtained by the $m$-nearest neighbor ($m$-nn) algorithm\footnote{Although commonly referred to as $k$-nn, here we use the letter $m$ to indicate the number of nns to avoid confusion with $k$, the iteration number.}, as discussed below. Such algorithm is a non-parametric supervised learning method commonly used as a baseline for regression tasks. For a given $\boldsymbol{U}'$, the algorithm determines the $m$ closest points $\boldsymbol{U}^{(l\mhyphen\textrm{nn})}_{\boldsymbol{U}'}$, $l=1,\ldots, m$, in the dataset $\mathcal{U}_k$ at iteration $k$, and then returns as prediction the weighted linear combination of the outputs $\f-\g$ of the $m$ nearest neighbors, with weights $w_l, l=1,\ldots, m$. That is, the $m$-nn prediction $\hat f_{m\mhyphen\textrm{nn}}$ is given by
\begin{equation}
\fhat_{m\mhyphen\text{nn}}(\boldsymbol{U}') = \sum_{l=1}^m w_l \mathbf{y}^{(l\mhyphen\textrm{nn})}_{\boldsymbol{U}'}, 
    \label{eq:fhat_mnn}
\end{equation}
with 
$\mathbf{y}^{(l\mhyphen\textrm{nn})}_{\boldsymbol{U}'}=(\f-\g)(\boldsymbol{U}^{(l\mhyphen\textrm{nn})}_{\boldsymbol{U}'})\in \mathcal{Y}_k$ being the correction term $\f-\g$ evaluated at the $l$th nn $\boldsymbol{U}^{(l\mhyphen\textrm{nn})}_{\boldsymbol{U}'}$ of $\boldsymbol{U}'$, defined as
\begin{align}
    \boldsymbol{U}^{(l\mhyphen\textrm{nn})}_{\boldsymbol{U}'}&\in_R\{\boldsymbol{U}\in\mathcal{U}_k: \mathcal{U}^{\rm norm}_{{\boldsymbol{U}',}(l)}=\boldsymbol{U}\},
   \label{eq:fhat_mnn:U}
\end{align}
where $\in_R$ means that, provided that the set has cardinality more than $1$, the $l$th nn is randomly chosen from it. Here, $\mathcal{U}^{\rm norm}_{{\boldsymbol{U}'},(l)}$ denotes the $l$th ordered statistics of the sample formed out of Euclidean distances between $\boldsymbol{U}'$ and each of the observations in $\mathcal{U}_k$, that is\footnote{Recall that the pair $(\mathcal{U}_k,\mathcal{Y}_k)$ define the dataset $\mathcal{D}_k$ at iteration $k$.}
\begin{equation*}\label{distance}
\mathcal{U}^{\rm norm}_{\boldsymbol{U}'}:=\{\|\boldsymbol{U}_{i-1}^{j}-\boldsymbol{U}'\|: \boldsymbol{U}_{i-1}^{j} \in \mathcal{U}_k, \enspace i=1,\ldots, N, j=0,\ldots, k-1\}.
\end{equation*}
A special case is when $m=1$, for which the prediction equals the output of the closest point. While the Euclidian distance is a natural choice, alternatives like Minkowski or Mahalanobis distances may also be used. However, our findings (not shown) indicate that the choice of distance has a marginal impact on performance. Therefore, we consider the Euclidean distance in this work.
\begin{figure}[t]
     \centering
    \includegraphics[width=1\textwidth]{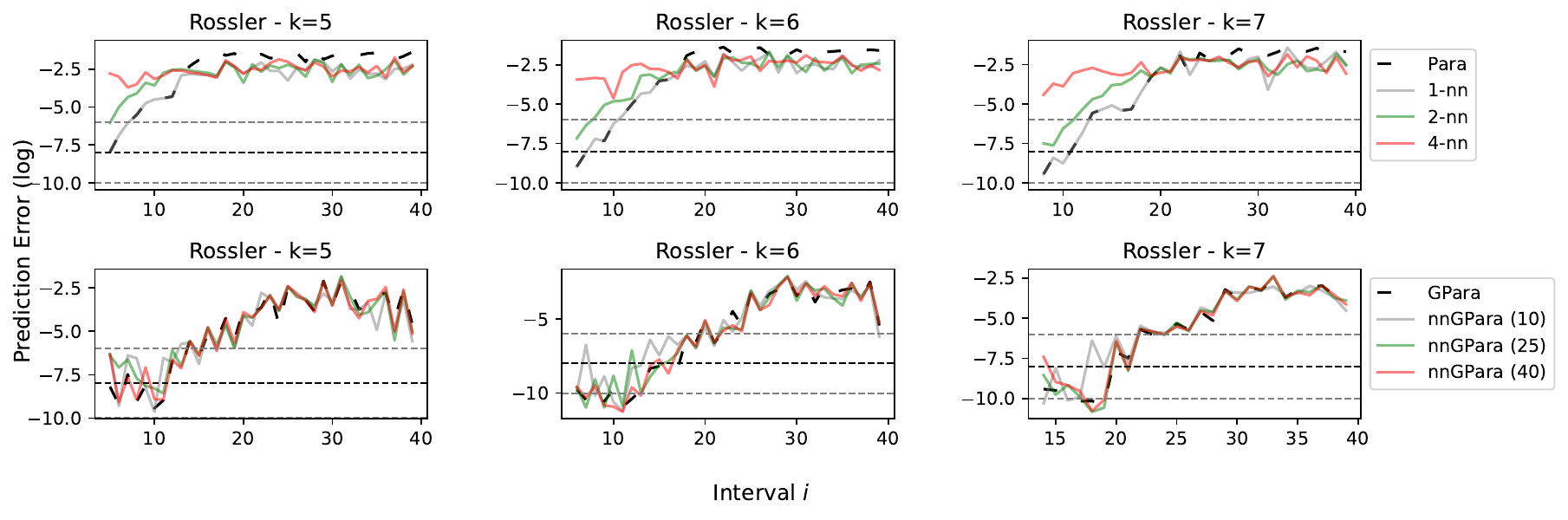}
    \caption{Comparison of the prediction errors of Parareal versus $m$-nn (top), and of GParareal versus nnGParareal (bottom) for iteration $k$. The $x$-axes start from the first unconverged interval $i$. Parareal and $1$-nn match on the first few intervals, with the worsening performance of $m$-nn for higher $m$. Instead, GParareal and nnGParareal show consistent alignment across $m$ values (see parentheses in the legend).}
     \label{fig:rossler_pred_err_both}
\end{figure}

The connection between $\fhat_{\rm Para}$ and $\fhat_{m\mhyphen\text{nn}}$ is illustrated in the top row of Figure~\ref{fig:rossler_pred_err_both}, where the $\fhat_{\rm Para}$ prediction error is compared to that of $\fhat_{m\mhyphen\text{nn}}$ across intervals $i$ on the $x$-axes for different values of $m$ and iterations $k$. Here, we choose uniform weights for the $m$-nn algorithm, i.e. $w_l=1/m, l=1,\ldots, m$, as commonly done in the literature. Note that in each subplot, the leftmost value of $i$ on the x-axis is the first unconverged interval. Parareal and $1$-nn behave equivalently for the first few intervals and then diverge. Surprisingly, %
the prediction error decreases/deteriorates in $m$ (with almost flat error curves for $m\in[10,15]$, results not shown), requiring thus more iterations than Parareal to convergence when used as predictor rule in \eqref{eq:update_rule_generic}, in an algorithm that we call $m$-nn Parareal.
Intuitively, this is because $\hat f_{m-{\rm nn}}$ is not optimal when using uniform weights, since closer inputs carry more information in a local smoothing task. We refer to Supplementary Material~A for an illustration of the data distribution for the $2$-dimensional Brussellator system. %

While increasing $m$ has a detrimental effect for $m$-nn Parareal compared to Parareal, this is not the case for $m$-nnGParareal due to the fact that uniform weights are replaced with ``optimal'' weights computed by the GP, as GPs have been extensively used as non-parametric smoothers in deterministic regression tasks (see, e.g.,~\cite{booker1999rigorous, huang2006sequential,mohammadi2016analytic,peng2014choice,ranjan2011computationally,taddy2009bayesian}). As shown in~\cite{vecchia1988estimation}, $m$-nnGPs, that is GPs trained on $m$-nns instead of the whole dataset, may 
maintain the same accuracy as the full GP by selecting a smaller but informative number of data points,  reducing the dimension of the associated covariance matrix, and, thus, the model training time. Figure~\ref{fig:rossler_pred_err_both} (bottom) compares the prediction errors of $\fhat_{\rm GPara}$ and $\fhat_{\rm nnGPara}$, both described in Table~\ref{tab:fhat}. As expected, they show a better alignment than Parareal and $m$-nn Parareal, with the prediction error of $m$-nnGP approaching that of the GP as $m$ increases, being already very close for $m=40$. This analogy is also observed in the two algorithms, both converging in a similar number of iterations.

\subsection{The nnGParareal algorithm}
\label{sec:nngp_algo}

The nnGParareal algorithm consists of the following steps:
\medskip
\begin{enumerate}[leftmargin=*]%
    \item Rescale %
    the ODE/PDE system via change of variables such that each coordinate takes values in $[-1,1]$. 
    \medskip
    \item Iteration $k=0$: take $\boldsymbol{U}_{0}^0=\boldsymbol{u}^0$, and initialize $\boldsymbol{U}_{i}^0$, $i=1,\ldots,N-1$, using a \textit{sequential} run of the coarse solver $\g$, that is $\boldsymbol{U}_{i}^0 = \g(\boldsymbol{U}_{i-1}^0)$. 
    Set $\mathcal{D}_0=\emptyset$. 
    \medskip
    \item Iterations $k \geq 1$: First, compute 
    $\f(\boldsymbol{U}_{i-1}^{k-1})$, $i=1,\ldots,N$, in parallel. 
    Update the dataset by adding the newly computed observations $\mathcal{D}_k = \mathcal{D}_{k-1} \cup \{(\boldsymbol{U}_{i-1}^{k-1}, (\f-\g)(\boldsymbol{U}_{i-1}^{k-1})), i=1,\ldots,N\}$. 
    Then, for each unconverged solution in the interval $i=L+1,\ldots, N-1$, update the initial conditions $\boldsymbol{U}_i^{k}$ using the predictor-corrector rule~\eqref{eq:update_rule_generic} with $\fhat=\fhat_{\rm nnGPara}$. 
In particular, first find the $m$ closest neighbors to $\boldsymbol{U}_{i-1}^{k}$ in $\mathcal{U}_{k}$, i.e.  $\mathcal{U}_{i-1,k}:=\{\boldsymbol{U}^{(l\mhyphen\textrm{nn})}_{\boldsymbol{U}_{i-1}^{k}}\}_{l=1}^m \subset \mathcal{U}_{k}$, and the associated correction terms $\mathcal{Y}_{i-1,k}:=\{\mathbf{y}^{(l\mhyphen\textrm{nn})}_{\boldsymbol{U}_{i-1}^{k}}\}_{l=1}^m \subset \mathcal{Y}_{k}$; the corresponding pairs of elements of these sets form $\mathcal{D}_{i-1,k}\subset \mathcal{D}_k$, as in Table \ref{tab:fhat}.
   Concatenate the elements of $\mathcal{U}_{i-1,k}$ and of $\mathcal{Y}_{i-1,k}$ to obtain ${{U}}\in \mathbb{R}^{m\times d}$  and ${{Y}}\in \mathbb{R}^{m\times d}$, respectively. Train one $m$-nnGP per system coordinate $s$ using \eqref{eq:gp_posterior_m} and \eqref{eq:gp_posterior_v}, to form the predictions $\mu^{(s)}_{\mathcal{D}_{i-1,k}}(\boldsymbol{U}_{i-1}^{k})$, $s = 1,\ldots, d$, as follows:
    \begin{enumerate}
        \item Estimate the optimal nnGP hyperparamters $\boldsymbol{\theta}^*$ by numerically maximizing the marginal log-likelihood \eqref{eq:gp_llik} from a random initial value $\boldsymbol{\theta}_0$.
        \item Repeat step (a) for several ($n_{\rm start}$) random restarts $\boldsymbol{\theta}_0$. Choose as hyperparameter $\boldsymbol{\theta}^*$  the one achieving the largest value of \eqref{eq:gp_llik}, and use it to compute the prediction $\mu^{(s)}_{\mathcal{D}_{i-1,k}}(\boldsymbol{U}_{i-1}^{k})$ with (\ref{eq:gp_posterior_m}).
    \end{enumerate}
    Form the prediction $\fhat_{\rm nnGPara}(\boldsymbol{U}_{i-1}^{k}) = (\mu^{(1)}_{\mathcal{D}_{i-1,k}}(\boldsymbol{U}_{i-1}^{k}), \ldots, \mu^{(d)}_{\mathcal{D}_{i-1,k}}(\boldsymbol{U}_{i-1}^{k}))^\top$.
        \medskip
    \item Repeat steps 3-4 until convergence, that is until \eqref{eq:stop_crit} is satisfied with $L=N-1$.
\end{enumerate}
\medskip

The rescaling in Step 1 has two advantages: first, normalization is known to improve the performance of learning models whenever the inputs have different scales. Second, the $\epsilon$-covergence criterion in (\ref{eq:stop_crit}) depends on the scale of the data for every system of interest, while the normalization allows to compare the nnGParareal performance across different ODE/PDE systems. Normalization is simple to apply in the context of ODEs/PDEs, as it reduces to a change of variables (see~\cite{langtangen2016scaling}, Section~2.1.3 for more details). 
In Step 4, we use random restarts to explore the loss landscape and avoid local optima. Note also that the operations (a) and (b) in Step 4  can be parallelized, as we placed  independent scalar nnGPs for each coordinate. 

The proposed nnGParareal has theoretical advantages over GParareal beyond the computational cost. The fact that hyperparameters $\boldsymbol{\theta}$ are optimized for every prediction (over the unconverged intervals) instead of once per iteration lends itself to a better adaptation of nnGP to local structures, where nonstationary behavior patterns are observed, especially in a spatiotemporal setting~\cite{gramacy2015local}.

\subsection{Choice of the reduced dataset}
\label{sec:subset_choice}
At each iteration $k$ and unconverged interval $i=L+1,\ldots, N-1$, nnGParareal uses a GP trained on a subset of $m$ data points from the full dataset $\mathcal{D}_k$, where the $m$ points are the $m$ nns. Although some intuition has been provided to motivate this decision in Section~\ref{Section4.1}, there exist other possibilities. When predicting $\boldsymbol{U}_i^k$, one may for example consider some intuitive \lq\lq rule-of-thumbs\rq\rq, selecting $m$ data points among observations from previous iterations, i.e., $\boldsymbol{U}_i^{k-m},\ldots, \boldsymbol{U}_i^{k-1}$, $k\geq m$ (selecting thus \lq\lq by columns\rq\rq), or from the preceding intervals, i.e., $\boldsymbol{U}_{i-m}^k,\ldots, \boldsymbol{U}_{i-1}^k$, $i\geq m$ (selecting thus \lq\lq by rows\rq\rq). Alternatively, one may incorporate into the GP kernel $\mathcal{K}_{\rm GP}$ the time information carried by interval $i$ and iteration $k$. For example, similar in spirit to~\cite{datta2016nonseparable}, the dataset $\mathcal{D}_k$ can be augmented with the tuple $(i,k)$ 
prior to selecting $m$ data points (possibly also nns) from it.
Supplementary Material~\ref{app:nngp_time} provides a comparison of the performance of these and other alternative approaches for the selection of the $m$ data points for a given subset. We show that they are either equivalent to nns or suboptimal in terms of higher number of iterations to converge or computational time. We also discuss some relevant literature.

\subsection{Choice of \texorpdfstring{$m$}{m}}%
\label{sec:choose_m}
There seems to be a consensus in the nnGP literature that the specific number of nns used has a negligible effect on performance, as long as a reasonable number of them is chosen \cite{datta2016hierarchical,datta2016nearest,gramacy2016lagp,gramacy2015local,stein2004approximating}, even as small as $10-15$~\cite{datta2016hierarchical}. This is also our finding, as observed in Figure \ref{fig:nngp_m_distr}, where we report the histograms of the numbers of iterations needed by nnGParareal to converge for several ODE systems obtained for several values of $m$ and random seeds per $m$ value. The shapes of empirical distributions indicate the consistent performance of the algorithm regardless of the value of $m$ and the specific execution. %
Further numerical investigations are reported in Supplementary Material~\ref{app:choose_m}, showing marginal improvements with increasing $m$, with a largely unaffected speed-up.

\begin{figure}
    \centering    
    \includegraphics[width=1\linewidth]{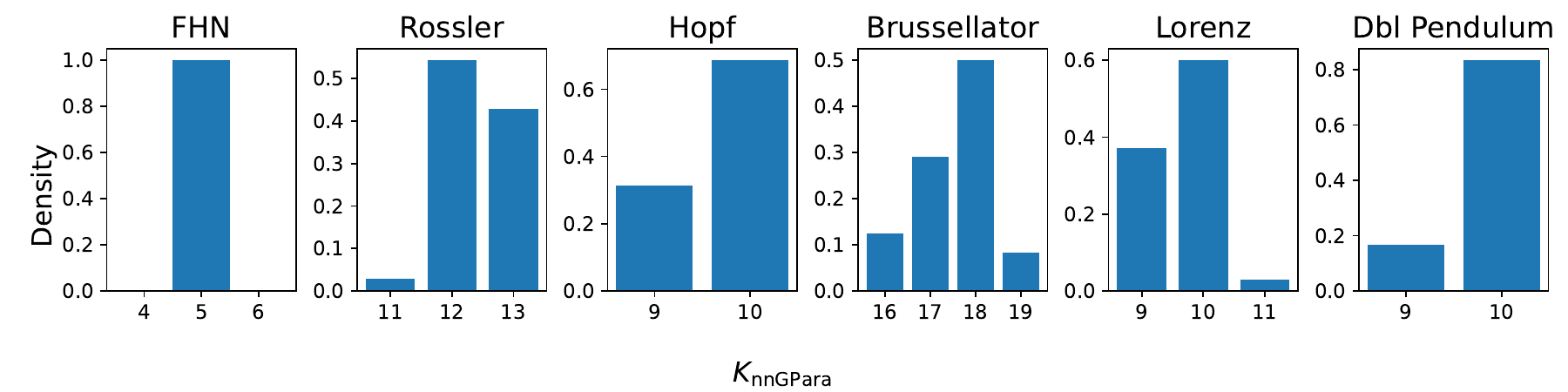}
    \caption{Histograms of the number of iterations $K_\textrm{nnGPara}$ needed by nnGParareal to converge for several ODE systems obtained using $7$ values of $m\in [10,20]$, and $5$ random seeds per each $m$ value.}
    \label{fig:nngp_m_distr}
\end{figure}

\subsection{Computational complexity}
\label{sec:comp_complx}
Let $T_\g$ and $T_\f$ be the time it takes to run $\g$ and $\f$ over one interval $[t_i,t_{i+1}]$, respectively.
Here, we compute the parallel speed-up $S_{\rm alg}\in \mathbb{R}^+$ of using the Parareal, GParareal, and nnGParareal algorithms as opposed to running $\f$ serially across $N$ intervals. The speed-up describes the relative performance of two systems processing the same problem and is computed as the ratio of the serial runtime $T_\textrm{Serial}:=N T_\f$ over the parallel runtime $T_{\rm alg}$, that is $S_{\rm alg}:=T_\textrm{Serial}/T_{\rm alg}$. $S_{\rm alg}$ can be approximated theoretically, as done below, or computed empirically (from measured runtimes including overheads), as done in Section \ref{sec:num_exp}. We refer to the latter as $\hat S_{\rm alg}$.
Let $\tmdl(k)$ be the total cost of evaluating $\fhat_\textrm{model}$ during iteration $k$, which includes both model training and predicting. If an algorithm converges in $K_{\rm alg}$ iterations, its worst-case runtime, excluding communication overheads, can be approximated by 
\begin{equation}
\label{Tparallel}\hspace{-.15cm}T_{\rm alg}\hspace{-.1cm}\approx \hspace{-.1cm}N T_{\g} + \hspace{-.1cm}\sum_{k=1}^{K_{\rm alg}} (T_{\f} + (N-k)T_{\g}+\tmdl(k))\hspace{-.1cm}=\hspace{-.1cm}K_{\rm alg} T_{\f} + (K_{\rm alg}+1)(N-\frac{K_{\rm alg}}{2})T_{\g}+\tmdl,
\end{equation}
with $\tmdl= \sum_{k=1}^{K_{\rm alg}}\tmdl(k)$. Here, the first term corresponds to the initialization cost of running the coarse solver, while the sum is the worst-case cost per iteration, assuming that only one interval converges per iteration, requiring thus $N-k$ runs of $\g$ at iteration $k$.
The speed-up is then easily computed as
\begin{equation}
    S_{\rm alg} \approx  \left( \frac{K_{\rm alg}}{N} + (K_{\rm alg}+1)\left(1-\frac{K_{\rm alg}}{2N}\right)\frac{T_{\g}}{T_{\f}}+\frac{\tmdl}{N T_\f}\right)^{-1}\leq \left(\frac{K_{\rm alg}}{N}\right)^{-1}=:S_{\rm alg}^*.
    \label{eq:speed}
\end{equation} 
From this relation, it is clear that the speed-up $S_{\rm alg}$ is maximized when $K_{\rm alg}$, the ratio $T_{\g}/T_{\f}$ and $\tmdl/T_\textrm{Serial}$ are as small as possible, with the first indicating how quickly the scheme converges, the second determining the amount of serial cost which cannot be parallelized and the third the impact of the model cost. The upper bound $S^*_{\rm alg}$ is computed ignoring all overheads (and the cost of running $\g$), model cost, and including only the unavoidable cost of running the fine solver $K_{\rm alg}$ times.

For Parareal, $\tmdl=T_{\rm Para} \in O(1)$, since $\fhat_{\rm Para}$ is a simple look-up operation. This would be the fastest parallel algorithm to converge if the three algorithms were converging at the same iteration $K$, which is typically not the case. For the GP, $\tmdl(k)=T_{\rm GP}(k)$ is the wallclock time needed to run the $d$ independent GPs in parallel at iteration $k$, which %
includes fitting the model, optimizing the hyperparameters, and making predictions. This quantity is challenging to estimate beforehand, as it depends on the size of the covariance matrix to invert and the iterations required to optimize the log-likelihood  when computing the optimal hyperparameters. Nonetheless, since the size of the dataset $\mathcal{D}_k$ for one coordinate is of order $O(kN)$ by iteration $k$, and $(N-k)$ predictions are made, $T_{\rm GP}(k)$ scales as 
\[
\underbrace{(\tfrac{d}{N} \vee 1) \; O\left(d (Nk)^2 + (Nk)^3\right)}_{\text{Training}} + \underbrace{(\tfrac{d}{N} \vee 1) \;(N-k) \; O(dNk)}_{\text{Post. mean prediction}},
\]
where  $(d/N\vee 1)$ is implied by the fact that the $d$ GPs are independent and can be parallelized over the $N$ cores, $d (Nk)^2$ corresponds to computing the kernel matrix $\mathcal{K}({X},{X})$, $(Nk)^3$ for its inversion, and $O(dNk)$ is the cost of one posterior mean evaluation. Overall, we have
\[
T_{\rm GP} =\sum_{k=1}^{K_{\rm GPara}}T_{\rm GP}(k)\in O\left((d/N \vee 1) (d \vee Nk) K_{\rm GPara}^{3 }N^{2} \right).
\]
If $K_{\rm Para}=K_{\rm GPara}$, the total cost of the GP needs to be negligible compared to $T_\textrm{Serial}$ to achieve the same speed-up $S_{\rm Para}$. For $S_{\rm GPara}$ to be as large as possible, we also need $T_{\rm GP}$ to be insignificant with respect to $K_{\rm GPara}T_{\f}$, the leading term in \eqref{Tparallel}.

Unlike GParareal, the runtime $T_{{\rm nnGPara}}$ 
of nnGParareal can be accurately estimated as the reduced subset is of constant size $m$. 
Let $n_{\rm start}$ be the number of random restarts performed during the optimization procedure, $n_{\rm reg}$ the numbers of tested jitters values $\sigma^2_{\rm reg}$ 
(see also Supplementary Material~\ref{app:impl_det}), %
and $T_{{\rm GP}_m}$ the cost of training, hyperparameter optimization, and prediction for a GP based on a dataset of size $m$. Then, the model cost of nnGParareal at iteration $k$ is
\[
T_{\rm nnGP}(k) = (N-k) \left((dn_{\rm reg}  n_{\rm start})  /N \vee 1 \right) T_{{\rm GP}_m}=(N-k)T_{\rm m-nnGP},
\]
where the rescaling term $(N-k)$ follows from the need to retrain the nnGP at every unconverged interval at iteration $k$, while $T_{\rm m-nnGP}$ is the cost of using the model to make a single prediction, including training. %
Unlike GParareal, $T_{\rm nnGP}(k)$ decreases as $k$ increases, see Figure \ref{fig:tom_lab_scal_gp_int_runtime}. Similarly to GParareal, the quantity $d n_{\rm reg} n_{\rm start}$ can be massively parallelized over the $N$ cores. Moreover, $T_{\rm m-nnGP}$ is almost identical across intervals and iterations. Hence, to estimate $T_{\rm nnGPara}$, it is sufficient to estimate $T_{{\rm GP}_m}$, which is straightforward given the small sample size required. %
Although the advantage of using nnGParareal may not immediately clear from the above formula, $T_{\rm m-nnGP}$ is sensibly cheaper than $T_{\rm GP}$, as
\[
T_{\rm m-nnGP}\in O\left(\left((dn_{\rm reg}  n_{\rm start})  /N \vee 1 \right)  ((d \vee m)m^2 + \log(kN))\right).
\]
 The term $(d \vee m)m^2$ comes from the cost of evaluating and inverting a covariance matrix of size $m$, while $\log(kN)$ provides the cost of determining the $m$ nearest neighbors. Among several approaches developed for the latter, we opt for kd-trees~\cite{friedman1977algorithm} for structured storage of the observations and fast retrieval of the closest points. Searching the $m$ nns has an average complexity of $O(\log N)$. Updating the tree structure for the whole iteration $k$ has a total cost of $O(N \log(kN))$, since the insertion of a tree element is in $O(\log n)$, with the number $n$ of observations currently in the tree, and the cost for finding $1$ or $m$ neighbors is the same (see~\cite{friedman1977algorithm} or~\cite{skrodzki2019kd} for a summary of kd-tree operations cost). Hence, $T_{\rm nnGP}= \sum_{k=1}^{K_{{\rm nnGPara}}} T_{\rm nnGP}(k) $ becomes
\[
T_{\rm nnGP}  \in  O(\left((dn_{\rm reg}  n_{\rm start})  /N \vee 1 \right)K_{{\rm nnGPara}} N ((d \vee m)m^2 + \log(K_{{\rm nnGPara}}N)),
\]
which, for small values of $m$ (e.g., $15-20$), is much lower than that of $T_{\rm GP}$. %

\subsection{Error Analysis}
\label{sec:error_analysis}
In this section, we provide an upper bound for the error between the exact and the nnGParareal solutions at iteration $k$ and interval $i$. A similar bound has been provided for GParareal~\cite[Theorem 3.5]{pentland2023gparareal}, which relies on the posterior mean consistency of GPs. There, the GP error is controlled by the fill distance, defined as 
\[
h_{\mathcal{D}} = \sup_{\boldsymbol{U} \in \mathcal{U}} \inf_{\boldsymbol{U}_j \in \mathcal{D}} \|\boldsymbol{U} - \boldsymbol{U}_j\|.
\]
for a given dataset $\mathcal{D}=\{(\mathcal{U},\mathcal{Y})\}$ and $\mathcal{U} \subset \R^d$. This provides a measure of how well the observations cover the sample space. Given that the GP is trained once and then used to potentially predict at any point $\boldsymbol{U} \in \mathcal{U}$, this notion is meaningful to quantify how the prediction error depends on the available observations. However, in our setting, it is not very helpful. Indeed, since the model is trained to make a prediction based on nns, a better approach commonly used in the literature is to quantify the ``density'' of observations $\boldsymbol{U}_j$ near the test point $\boldsymbol{U}' \in \mathcal{U}$:
\[
d_\delta(\boldsymbol{U}') = \max_{\boldsymbol{U} \in B_\delta(\boldsymbol{U}')} \min_{\boldsymbol{U}_j \in \mathcal{D}_{i,k} } \|\boldsymbol{U} - \boldsymbol{U}_j\|,
\]
for some fixed $\delta >0$, where $B_\delta(\boldsymbol{U}')=\{\boldsymbol{U} \in \mathcal{U}: \|\boldsymbol{U}-\boldsymbol{U}'\| \leq \delta \}$ is a ball of radius $\delta$ centered around the test observation $\boldsymbol{U}'$. Then, the error of the nnGP for the squared exponential kernel \eqref{eq:se_kern} can be bounded using $d_\delta(\boldsymbol{U}')$~\cite{wu1993local}, as shown below. Following a strategy similar to the Parareal error bound~\cite{gander2008nonlinear} yields the desired result. We begin by recalling the necessary assumptions and results.

Set $\Delta t = t_{i+1}-t_i$ and define the error of on the $i$th interval at the $k$th \mbox{iteration as}
\[
E_{i}^{k}:=\|\boldsymbol{U}_{i}-\boldsymbol{U}_{i}^{k}\|,
\]
with $E_0^k=0, k\geq 0$.
\begin{assumption}[Exact fine solver $\f$]
\label{ass:1}
The fine solver $\f$ \mbox{solves \eqref{ODE} exactly:}
\[
\boldsymbol{U}_i=\f(\boldsymbol{U}_{i-1}), \quad i=1,\ldots,N.
\]
\end{assumption}

\begin{assumption}[One-step coarse solver $\g$]
\label{ass:2}
$\g$ is a one-step numerical solver with uniform local truncation error $O(\Delta t^{p+1})$ for $p \geq 1$, such that 
\[
\f(\boldsymbol{U}) - \g(\boldsymbol{U}) =  c^{(p+1)}(\boldsymbol{U}) \Delta t^{p+1} +  c^{(p+2)}(\boldsymbol{U}) \Delta t^{p+2} + \ldots,
\]
where $\boldsymbol{U} \in \R^d$ and the functions $c^{(j)}: \mathbb{R}^d \longrightarrow \mathbb{R}^d$, $j=p+1,p+2,...$ are $\mathcal{C}^\infty$, i.e., infinitely differentiable functions. Hence, it follows that
\begin{equation}
    \|(\f - \g)(\boldsymbol{U})-(\f-\g)(\boldsymbol{U}')\| \leq C_1 \Delta t^{p+1} \|\boldsymbol{U}-\boldsymbol{U}'\|,
     \label{eq:ass_2}
\end{equation}
where $\boldsymbol{U}' \in \R^d$ and $C_1 >0$ is the Lipschitz constant for $c^{(p+1)}(\boldsymbol{U})$.%
\end{assumption}

\begin{assumption}[$\g$ is Lipschitz]
\label{ass:3}
    $\g$ satisfies the Lipschitz condition
    \begin{equation}
        \|\g(\boldsymbol{U})-\g(\boldsymbol{U}')\| \leq (1+C_2 \Delta t) \|\boldsymbol{U}-\boldsymbol{U}'\|,
        \label{eq:ass_3}
    \end{equation}
    for $\boldsymbol{U},\boldsymbol{U}' \in \R^d$ and for some $C_2 >0$.
\end{assumption}

\begin{assumption}[$\boldsymbol{U}_{i}^{k-1} \in \mathcal{D}_{i,k}$]
\label{ass:4}
    $\mathcal{D}_{i,k}$ is constructed such that $\boldsymbol{U}_{i}^{k-1} \in \mathcal{D}_{i,k}$.
\end{assumption}
This last assumption allows us to bound the nnGP error in terms of the distance between the test observation $\boldsymbol{U}_i^k$ and the approximation for the same initial condition obtained during the previous iteration, $\boldsymbol{U}_i^{k-1}$, as they are both in $\mathcal{D}_{i,k}$. As seen in Figure~\ref{fig:brus_dataset_vis_para_both}, this is a reasonable assumption when $\mathcal{D}_{i,k}$ is chosen using the nns, as the approximation obtained at the previous iteration tends to be the closest observation to the test one. 

\begin{definition}[RKHS]
    We define the reproducing kernel Hilbert space (RKHS) $H_\mathcal{K}(\mathcal{U})$ corresponding to the kernel $\mathcal{K}(\cdot, \cdot)$ introduced in Section~\ref{sec:gpara} as a Hilbert space of functions $f: \mathcal{U} \rightarrow \R$ equipped with an inner product $\langle\cdot, \cdot \rangle_{H_{\mathcal{K}}(\mathcal{U})}$.
\end{definition}
\noindent We refer the reader to, e.g., \cite{berlinet2011reproducing} and ~\cite{muandet2017kernel}, for a  comprehensive discussion on RKHS.

\begin{theorem}[Adapted from~\cite{wu1993local}, Theorem 5]
\label{th:gp_constistency}
Let the kernel $\mathcal{K}$ be as in (\ref{eq:se_kern}),
$f=\f-\g$, and, $\widehat{f}=\widehat{f}_{ {\rm nnGPara}}$, to shorten the notation. Assume that $f \in H_{\mathcal{K}}(\mathcal{U})$ and let $\delta>0$ be given. Then, there exists $C>0$ such that
\[
\|\simplef(\boldsymbol{U}')-\widehat{f}(\boldsymbol{U}')\| \leq C d_\delta(\boldsymbol{U}') \|\simplef\|_{H_{\mathcal{K}}(\mathcal{U})},
\]
for all test observations $\boldsymbol{U}' \in \mathcal{U}$, where $\|\simplef\|_{H_{\mathcal{K}}(\mathcal{U})}^2 = \langle\simplef,\simplef\rangle_{H_{\mathcal{K}}(\mathcal{U})}$.
\end{theorem}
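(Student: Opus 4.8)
The plan is to reduce the statement to the local error estimate for kernel (radial basis function) interpolation of Wu and Schaback, which is precisely the result the theorem is ``adapted from''. The bridge is the standard equivalence between a GP posterior mean and the minimum-RKHS-norm interpolant. Concretely, fixing a coordinate $s$ and writing $f_s:=(\f-\g)_s$, with zero prior mean and in the noise-free limit $\sigma_{\rm reg}^2\to 0$ the posterior mean \eqref{eq:gp_posterior_m} built on the node set $\mathcal{U}_{i-1,k}=\{\boldsymbol{U}^{(l\mhyphen\textrm{nn})}_{\boldsymbol{U}_{i-1}^{k}}\}_{l=1}^m$ coincides with the kernel interpolant $s_{f_s}$, i.e.\ the unique minimiser of $\|g\|_{H_{\mathcal{K}}(\mathcal{U})}$ over functions $g$ with $g(\boldsymbol{U}_j)=f_s(\boldsymbol{U}_j)$ at all nodes $\boldsymbol{U}_j\in\mathcal{U}_{i-1,k}$. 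When $\sigma_{\rm reg}^2>0$ the posterior mean is instead a smoothing approximant whose RKHS norm is still bounded by $\|f_s\|_{H_{\mathcal{K}}(\mathcal{U})}$ and which matches the data up to an $O(\sigma_{\rm reg})$ misfit; I would note that this contributes only a lower-order term that can be absorbed into the constant, and then argue in the interpolation limit.

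Second, I would invoke the classical power-function bound for the pointwise error of a kernel interpolant: by the reproducing property and Cauchy--Schwarz in $H_{\mathcal{K}}(\mathcal{U})$, writing $f_s - s_{f_s}$ as the inner product of $f_s$ against the representer $\mathcal{K}_{\rm GP}(\cdot,\boldsymbol{U}')-\sum_j\lambda_j\mathcal{K}_{\rm GP}(\cdot,\boldsymbol{U}_j)$ with the optimal coefficients $\lambda_j$, one obtains
\[
|f_s(\boldsymbol{U}')-s_{f_s}(\boldsymbol{U}')|\le P_{\mathcal{U}_{i-1,k}}(\boldsymbol{U}')\,\|f_s\|_{H_{\mathcal{K}}(\mathcal{U})},
\]
where $P_{\mathcal{U}_{i-1,k}}(\boldsymbol{U}')$ is the power function at $\boldsymbol{U}'$ associated with the node set. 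This step is routine and uses only $f_s\in H_{\mathcal{K}}(\mathcal{U})$.

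Third, the heart of the argument: I would invoke Wu--Schaback's local estimate (their Theorem~5) specialised to the squared-exponential kernel \eqref{eq:se_kern}. Their point is that the power function at $\boldsymbol{U}'$ depends only on how densely the nodes fill a neighbourhood of $\boldsymbol{U}'$, so that $P_{\mathcal{U}_{i-1,k}}(\boldsymbol{U}')\le C\,d_\delta(\boldsymbol{U}')$ for the prescribed $\delta>0$, with $C=C(\delta,\sigma_{\rm o}^2,\sigma_{\rm i}^2,d)$ independent of the node configuration. (Wu--Schaback actually derive spectral rates for analytic kernels; the linear-in-$d_\delta$ majorant stated here is the weaker consequence that suffices, and is valid on the bounded rescaled domain $\mathcal{U}\subseteq[-1,1]^d$.) Taking Euclidean norms over the coordinates $s=1,\dots,d$, using $\widehat{f}=\widehat{f}_{{\rm nnGPara}}=(\mu^{(1)}_{\mathcal{D}_{i-1,k}},\dots,\mu^{(d)}_{\mathcal{D}_{i-1,k}})^\top$ and the coordinatewise reading of $\|f\|_{H_{\mathcal{K}}(\mathcal{U})}$, then redefining $C$, yields $\|f(\boldsymbol{U}')-\widehat{f}(\boldsymbol{U}')\|\le C\,d_\delta(\boldsymbol{U}')\,\|f\|_{H_{\mathcal{K}}(\mathcal{U})}$.

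The main obstacle is handling the nugget $\sigma_{\rm reg}^2$: the clean RKHS interpolation/power-function theory assumes exact interpolation, whereas nnGParareal regularises. I would deal with this either by stating the bound in the $\sigma_{\rm reg}^2\to 0$ limit — customary, and consistent with how the analogous GParareal bound is presented in \cite{pentland2023gparareal} — or by exploiting that the regularised estimator still has RKHS norm $\le\|f_s\|_{H_{\mathcal{K}}(\mathcal{U})}$ and nodal misfit $O(\sigma_{\rm reg})$, so the correction is lower order. A secondary technical point is checking that $B_\delta(\boldsymbol{U}')$ actually contains data points, so that $d_\delta(\boldsymbol{U}')$ is finite and the Wu--Schaback hypotheses apply; this is immediate since the $m$ nearest neighbours of $\boldsymbol{U}'$ — in particular $\boldsymbol{U}_i^{k-1}$ by Assumption~\ref{ass:4} — lie in $\mathcal{D}_{i-1,k}$, so for any $\delta$ at least as large as the distance to the furthest retained neighbour the ball is non-empty of nodes.
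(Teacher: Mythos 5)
Your proposal is correct and follows essentially the same route as the paper, which offers no independent proof of this theorem but simply cites Theorem~5 of~\cite{wu1993local} together with the observation that the GP posterior mean with squared-exponential kernel coincides (in the noise-free limit) with RBF interpolation using a Gaussian basis. Your elaboration via the power-function bound, the local Wu--Schaback estimate $P_{\mathcal{U}_{i-1,k}}(\boldsymbol{U}')\le C\,d_\delta(\boldsymbol{U}')$, the coordinate-wise treatment, and the remark that the nugget $\sigma_{\rm reg}^2$ must be handled in the interpolation limit or absorbed as a lower-order correction is a faithful and more detailed account of exactly the adaptation the paper intends.
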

\noindent Observe that the assumption $\simplef \in H_{\mathcal{K}}(\mathcal{U})$ implies that the function we wish to approximate using the GP is in the RKHS induced by the kernel. This might not be the case if $\simplef$ is not sufficiently smooth, and a more suitable kernel would be required (see~\cite{kanagawa2018gaussian} for a discussion). Further, note that Theorem \ref{th:gp_constistency} was originally developed for local radial basis function (RBF) interpolation. While there are strong connections between Gaussian process regression (GPR) and RBF interpolation, finding the corresponding covariance function for a given basis function is not always possible. Nonetheless, GPR with the squared exponential kernel is equivalent to RBF interpolation with Gaussian basis, a special case of the result in~\cite{wu1993local}.

\begin{theorem}[nnGParareal error bound]
Suppose $\f$ and $\g$ satisfy Assumptions~\ref{ass:1}-\ref{ass:3}, $\mathcal{D}_{i,k}$ satisfies Assumption~\ref{ass:4}, and that the conditions of Theorem \ref{th:gp_constistency} hold. Then, the error of the nnGParareal solution to an autonomous ODE \eqref{ODE} with \\ $h(\boldsymbol{u}(t),t) = h(\boldsymbol{u}(t))$ satisfies
    \[
       E_i^k \leq \gamma \eta^k \sum_{l=0}^{n-k-1}\binom{l+k}{l} \left( \alpha + \beta +\eta \right) ^{l}, \enspace k+1\leq i\leq N
    \]
    with $E_0^k=0, \forall k\geq 0,$ and
     $\alpha=C_1 \Delta t^{p+1}$, $\beta=1+C_2 \Delta t$, $\gamma=C_3 \Delta t^{p+1}$, and $\eta = 2C \|\simplef\|_{H_{\mathcal{K}}(\mathcal{U})}$.
    
\end{theorem}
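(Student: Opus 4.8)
The plan is to mirror the classical Parareal error analysis of \cite{gander2008nonlinear}, but with the deterministic coarse-correction replaced by the nnGP surrogate, using Theorem~\ref{th:gp_constistency} together with Assumption~\ref{ass:4} to control the surrogate error. First I would write the nnGParareal update as $\boldsymbol{U}_i^k = \g(\boldsymbol{U}_{i-1}^k) + \fhat(\boldsymbol{U}_{i-1}^k)$ and the exact recursion (Assumption~\ref{ass:1}) as $\boldsymbol{U}_i = \f(\boldsymbol{U}_{i-1}) = \g(\boldsymbol{U}_{i-1}) + (\f-\g)(\boldsymbol{U}_{i-1})$. Subtracting gives
\[
E_i^k \le \|\g(\boldsymbol{U}_{i-1}) - \g(\boldsymbol{U}_{i-1}^k)\| + \|(\f-\g)(\boldsymbol{U}_{i-1}) - \fhat(\boldsymbol{U}_{i-1}^k)\|.
\]
The first term is handled by Assumption~\ref{ass:3}, contributing $\beta E_{i-1}^k$. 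For the second term I would insert $\pm(\f-\g)(\boldsymbol{U}_{i-1}^k)$ and split it into a ``true discrepancy Lipschitz'' part $\|(\f-\g)(\boldsymbol{U}_{i-1}) - (\f-\g)(\boldsymbol{U}_{i-1}^k)\| \le \alpha E_{i-1}^k$ via \eqref{eq:ass_2}, and a ``surrogate approximation'' part $\|(\f-\g)(\boldsymbol{U}_{i-1}^k) - \fhat(\boldsymbol{U}_{i-1}^k)\|$.

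The surrogate part is where Theorem~\ref{th:gp_constistency} enters: it bounds this by $C\, d_\delta(\boldsymbol{U}_{i-1}^k)\, \|\simplef\|_{H_{\mathcal{K}}(\mathcal{U})}$. The key step is then to bound the density quantity $d_\delta(\boldsymbol{U}_{i-1}^k)$ by a multiple of $E_{i-1}^{k-1}$. Here Assumption~\ref{ass:4} is essential: since $\boldsymbol{U}_{i-1}^{k-1} \in \mathcal{D}_{i-1,k}$, every point in $B_\delta(\boldsymbol{U}_{i-1}^k)$ has a dataset neighbor within distance $\delta + \|\boldsymbol{U}_{i-1}^k - \boldsymbol{U}_{i-1}^{k-1}\| \le \delta + E_{i-1}^k + E_{i-1}^{k-1}$; more carefully, one argues (as one does for Parareal, where the previous iterate plays the same role) that $d_\delta(\boldsymbol{U}_{i-1}^k)$ is controlled by the distance to the previous-iteration approximation, hence by $E_{i-1}^{k-1}$ up to constants (the $\delta$ and lower-order $\Delta t$ terms get absorbed, and the factor $2$ in $\eta = 2C\|\simplef\|_{H_{\mathcal{K}}(\mathcal{U})}$ presumably comes from a triangle-inequality split $\|\boldsymbol{U}_{i-1}^k - \boldsymbol{U}_{i-1}^{k-1}\| \le E_{i-1}^k + E_{i-1}^{k-1}$ combined with the $C_3\Delta t^{p+1}$ absorption). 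Collecting, I obtain a two-index recursion of the form
\[
E_i^k \le (\alpha+\beta)\, E_{i-1}^k + \gamma\, E_{i-1}^{k-1} + \text{(contribution of order }\eta E_{i-1}^{k-1}\text{)},
\]
i.e.\ after regrouping, $E_i^k \le (\alpha+\beta+\eta)\,E_{i-1}^k + \gamma\,E_{i-1}^{k-1}$, with $E_i^0$ absorbed into the $k=0$ base case and $E_0^k = 0$.

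Finally I would solve this linear two-parameter recurrence by induction on $k$ (outer) and $i$ (inner). For $k=0$ one uses the standard Parareal-type bound on $E_i^0$ coming from the coarse sweep; for the inductive step, unrolling the inner recursion in $i$ from $E_k^k$ (the first unconverged interval) produces exactly the binomial-weighted geometric sum $\gamma \eta^k \sum_{l=0}^{n-k-1}\binom{l+k}{l}(\alpha+\beta+\eta)^l$, using the Vandermonde/hockey-stick identity for the binomial coefficients that accumulate when a rank-one recursion is iterated in two indices — this is the same combinatorial bookkeeping as in \cite{gander2008nonlinear,pentland2023gparareal}. The main obstacle I anticipate is the rigorous justification that $d_\delta(\boldsymbol{U}_{i-1}^k) \lesssim E_{i-1}^{k-1}$: one must handle the fixed radius $\delta$ (which does not vanish with $\Delta t$) and the fact that $\mathcal{D}_{i-1,k}$ contains only $m$ nearest neighbors, so a genuinely local covering argument near $\boldsymbol{U}_{i-1}^k$ is needed rather than a naive global one; making this precise while keeping the constant $\eta$ independent of $i,k$ is the delicate part, and likely relies on $\delta$ being chosen adaptively small (or on the smoothness of the trajectory ensuring the nearest neighbors cluster around $\boldsymbol{U}_{i-1}^k$).
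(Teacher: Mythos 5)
Your overall route is the paper's: the same three-way decomposition into a coarse-Lipschitz term $\beta E_{i-1}^{k}$ (Assumption~\ref{ass:3}), a discrepancy-Lipschitz term $\alpha E_{i-1}^{k}$ (Assumption~\ref{ass:2}), and a surrogate term controlled by Theorem~\ref{th:gp_constistency}; the same use of Assumption~\ref{ass:4} to handle $d_\delta$; and the same strategy of solving the resulting two-index recursion. Two of the difficulties you flag at the end dissolve in the paper's treatment: no local covering argument is needed, because Assumption~\ref{ass:4} supplies a single dataset point $\boldsymbol{U}_{i-1}^{k-1}$ against which to bound the minimum, giving $d_\delta(\boldsymbol{U}_{i-1}^{k}) \leq \delta + \|\boldsymbol{U}_{i-1}^{k}-\boldsymbol{U}_{i-1}^{k-1}\|$ directly; and $\delta$ is indeed chosen adaptively, $\delta \leq \|\boldsymbol{U}_{i-1}^{k}-\boldsymbol{U}_{i-1}^{k-1}\|$, which is where the factor $2$ in $\eta = 2C\|\simplef\|_{H_{\mathcal{K}}(\mathcal{U})}$ actually comes from --- not, as you guess, from the subsequent triangle split $\|\boldsymbol{U}_{i-1}^{k}-\boldsymbol{U}_{i-1}^{k-1}\| \leq E_{i-1}^{k}+E_{i-1}^{k-1}$, which instead produces the two separate error terms.

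The one concrete error is in your final recursion. The surrogate bound is $\eta\,(E_{i-1}^{k-1}+E_{i-1}^{k})$, so the cross-iteration term carries the coefficient $\eta$, and the correct recursion is $E_i^{k} \leq (\alpha+\beta+\eta)\,E_{i-1}^{k} + \eta\, E_{i-1}^{k-1}$. The constant $\gamma = C_3\Delta t^{p+1}$ does not multiply $E_{i-1}^{k-1}$ anywhere: it enters only through the $k=0$ base case, $E_i^0 \leq (\alpha+\beta+\eta)E_{i-1}^0 + \gamma$, as the local truncation error $\|(\f-\g)(\boldsymbol{U}_{i-1})\|$ of the initial coarse sweep. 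If one unrolled your recursion $E_i^{k} \leq (\alpha+\beta+\eta)E_{i-1}^{k} + \gamma E_{i-1}^{k-1}$ literally, the prefactor would scale like $\gamma^{k}$ rather than the claimed $\gamma\eta^{k}$, so the regrouping step as written does not reach the stated bound. With the corrected recursion and base case, the paper does not redo the combinatorics by hand but invokes Lemma~B.1 of \cite{pentland2022stochastic} with $\Lambda=0$, $A=\eta$, $B=\alpha+\beta+\eta$, $D=\gamma$, which yields exactly the binomial-weighted sum in the statement.
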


\begin{proof}
    Using the predictor-corrector rule for nnGParareal and Assumption~\ref{ass:1}, we write
     \[
    \begin{aligned}
        E_{i}^{k+1} 
        &= \|\boldsymbol{U}_i-\boldsymbol{U}_i^{k+1}\|= \|\boldsymbol{U}_i - \g(\boldsymbol{U}_{i-1}^{k+1}) - \widehat{f}(\boldsymbol{U}_{i-1}^{k+1})\|\\
        &= \|\f(\boldsymbol{U}_{i-1}) - \g(\boldsymbol{U}_{i-1}^{k+1}) - \widehat{f}(\boldsymbol{U}_{i-1}^{k+1})\|\\
        &= \|\f(\boldsymbol{U}_{i-1}) - \g(\boldsymbol{U}_{i-1}^{k+1}) - \widehat{f}(\boldsymbol{U}_{i-1}^{k+1}) \pm \simplef(\boldsymbol{U}_{i-1}^{k+1}) \pm \g(\boldsymbol{U}_{i-1})\|,
    \end{aligned}
    \]
    where $\pm$ denotes adding and subtracting the corresponding term.
    By using the triangle inequality and the definition of $\simplef$, we obtain
    \begin{align}
        E_{i}^{k+1} 
        \leq &\| \g(\boldsymbol{U}_{i-1}) - \g(\boldsymbol{U}_{i-1}^{k+1})\| + \|(\f-\g)(\boldsymbol{U}_{i-1}) - (\f-\g)(\boldsymbol{U}_{i-1}^{k+1})\| \nonumber \\
        &+ \|\simplef(\boldsymbol{U}_{i-1}^{k+1}) -  \widehat{f}(\boldsymbol{U}_{i-1}^{k+1})\|\leq  \beta E_{i-1}^{k+1} + \alpha E_{i-1}^{k+1} + \|\simplef(\boldsymbol{U}_{i-1}^{k+1}) -  \widehat{f}(\boldsymbol{U}_{i-1}^{k+1})\|, \label{error bound}
    \end{align}
    where $\alpha=C_1 \Delta t^{p+1}$, $\beta=1+C_2 \Delta t$. In the last inequality, we used 
   \eqref{eq:ass_3} for the first term and \eqref{eq:ass_2} for the second one. Further, by Theorem~\ref{th:gp_constistency} we write
    \begin{align*}
        \|\simplef(\boldsymbol{U}_{i-1}^{k+1}) -  \widehat{f}(\boldsymbol{U}_{i-1}^{k+1})\| 
        &\leq C \|\simplef\|_{H_{\mathcal{K}}(\mathcal{U})} d_\delta(\boldsymbol{U}_{i-1}^{k+1})\\
        &=  C \|\simplef\|_{H_{\mathcal{K}}(\mathcal{U})} \max_{\boldsymbol{U} \in B_\delta(\boldsymbol{U}_{i-1}^{k+1})} \min_{\boldsymbol{U}_j \in \mathcal{D}_{i-1,k+1} } \|\boldsymbol{U} - \boldsymbol{U}_j\|  \\
        &\leq C \|\simplef\|_{H_{\mathcal{K}}(\mathcal{U})} \max_{\boldsymbol{U} \in B_\delta(\boldsymbol{U}_{i-1}^{k+1})} \|\boldsymbol{U} - \boldsymbol{U}_{i-1}^{k}\|,
        \end{align*}
       where the last step follows from  Assumption~\ref{ass:4}.
        Using again the triangle inequality (adding and subtracting $U_{i-1}^{k+1}$) and the definition of $B_\delta$, we get
         \begin{align*}
         \|\simplef(\boldsymbol{U}_{i-1}^{k+1}) -  \widehat{f}(\boldsymbol{U}_{i-1}^{k+1})\| 
        &\leq C \|\simplef\|_{H_{\mathcal{K}}(\mathcal{U})} \max_{\boldsymbol{U} \in B_\delta(\boldsymbol{U}_{i-1}^{k+1})} \|\boldsymbol{U}_{i-1}^{k+1} - \boldsymbol{U} \| + \|\boldsymbol{U}_{i-1}^{k+1} - \boldsymbol{U}_{i-1}^{k} \|\\
        &\leq C \|\simplef\|_{H_{\mathcal{K}}(\mathcal{U})}  (\delta + \|\boldsymbol{U}_{i-1}^{k+1} - \boldsymbol{U}_{i-1}^{k} \| )\\
        & \leq  2C \|\simplef\|_{H_{\mathcal{K}}(\mathcal{U})}   \|\boldsymbol{U}_{i-1}^{k+1} - \boldsymbol{U}_{i-1}^{k} \|\\
        &\leq \eta(\|\boldsymbol{U}_{i-1}-\boldsymbol{U}_{i-1}^{k}\| + \|\boldsymbol{U}_{i-1}-\boldsymbol{U}_{i-1}^{k+1}\|)= \eta(E_{i-1}^k + E_{i-1}^{k+1}),
        \end{align*}
        where the third inequality follows from choosing $\delta \leq \|\boldsymbol{U}_{i-1}^{k+1} - \boldsymbol{U}_{i-1}^{k} \| $, and the forth one from setting $\eta =2C \|\simplef\|_{H_{\mathcal{K}}(\mathcal{U})}$ and applying another triangle inequality. Finally, using this expression in \eqref{error bound}, we obtain the following recursion
    \begin{equation*}\label{rec}
    E_i^{k+1} \leq \beta E_{i-1}^{k+1} + \alpha E_{i-1}^{k+1} + \eta(E_{i-1}^k + E_{i-1}^{k+1})=(\alpha+\beta+\eta)E_{i-1}^{k+1} +\eta E_{i-1}^{k}.
    \end{equation*}
    Notice that it holds that  
    \begin{align*}
        E_i^0&=\|\boldsymbol{U}_i-\boldsymbol{U}_i^0 + \g(\boldsymbol{U}_{i-1}) - \g(\boldsymbol{U}_{i-1})\|\leq \|\g(\boldsymbol{U}_{i-1})-\g(\boldsymbol{U}_{i-1}^0)\|+\|\boldsymbol{U}_i-\g(\boldsymbol{U}_{i-1})\|\\
      &=\beta E^0_{i-1}+\gamma\leq (\alpha+\beta+\eta)E_{i-1}^0+\gamma, \quad i\in\mathbb{N},
    \end{align*}
    where Assumptions~\ref{ass:1}-\ref{ass:3} are used and $\gamma=C_3 \Delta t^{p+1}$. This recursive inequality can be solved using the result from Lemma B.1 in \cite{pentland2022stochastic}, with their constants $\Lambda=0, A=\eta, B=(\alpha+\beta+\eta), D=\gamma, n=i$ and adjusting for the initial error being $E_i^0$ instead of $E_i^1$.
\end{proof}

\section{Numerical experiments}
\label{sec:num_exp}

In this section, we compare the performance of Parareal, GParareal, and nnGParareal on a set of different systems. We provide a detailed discussion of two low-dimensional ODEs, the non-linear non-autonomous Hopf bifurcation system and the Thomas labyrinth (known to be hard to learn from data~\cite{yang2023learning}), and two PDEs, the viscous Burgers' and the two-dimensional FitzHugh-Nagumo, which we reformulate as high-dimensional ODEs. Additional results for these systems and the other five ODEs (FitzHugh-Nagumo, R\"ossler, Brusselator, double pendulum, and Lorenz) are reported in the Supplementary Material. The algorithms are compared by looking at both $K_{\rm alg}$, the number of iterations required to converge, and $S_{\rm alg}$, the speed-up. It is clear from \eqref{eq:speed} that while minimizing $K_{\rm alg}$ is ideal, as the most expensive component of the algorithms is assumed to be $\f$, maximizing the speed-up is equally important, as the model cost may negatively affect the total wallclock time, slowing down an algorithm otherwise converging in fewer iterations. Although comparing the algorithms based on their prediction error of $\f-\g$ may be an interesting alternative, this would require the serial calculation of the fine solver and analyzing one plot per iteration, which is why we do not pursue it here, other than in Figure \ref{fig:rossler_pred_err_both} before.
\subsection{Comparing the algorithms based on \texorpdfstring{$K_{\rm alg}$}{K alg}}
Since Parareal takes only a few seconds/minutes to solve the seven considered ODE systems with $N \in [32,50]$, one would not need to use GParareal and nnGParareal. Nevertheless, we do that to compare their performances with respect to the required number of iterations to converge. Remarkably, nnGParareal requires the lowest number across the ODE systems even for a small number of nns (see Table \ref{tab:gp_vs_nngp7}, where $m=15$).
Hence, not only replacing the GP with an nnGP within Parareal is not detrimental, but it may even lead to faster convergence than GParareal. Moreover, even when $K_\textrm{GPara}=K_{ {\rm nnGPara}}$ (as for the FHN and the double pendulum), the number of observations used by the GP to make a prediction $\widehat f$ during the last iteration before convergence is much higher than that of the nnGP, as observed in the bottom part of Table~\ref{tab:gp_vs_nngp7}. Hence, in these simple settings, nnGParareal achieves a sensible data reduction while preserving a high accuracy. Remarkably, this is also the case in more challenging settings than those presented in the next sections.   %
\begin{table}[t]
{
\footnotesize
    \centering
    \begin{tabular}{lccccccc}
System& FHN & R\"ossler& Hopf& Bruss. & Lorenz  & Dbl Pend. &Thomas Lab.\\
\hline
$N$& $40$ & $40$& $32$& $32$ & $50$  & $32$ &$32$\\
\hline
 \multicolumn{8}{c}{Number of iterations for each algorithm until convergence with a normalized accuracy $\epsilon=5e^{-7}$} \\ \hline
Parareal  &  11  &  18  &  19  &  19  &  15  &  15 &30\\
 GParareal  &  \textbf{5}  &  13  &  10  &  20  &  11  &  \textbf{10} &25\\
nnGParareal&  \textbf{5}  &  \textbf{12}  &  \textbf{9}  &  \textbf{17}  &  \textbf{9}  &  \textbf{10} & \textbf{24}\\ \hline\\  
 \multicolumn{8}{c}{Number of observations used for training $\widehat{f}$ at iteration $K_{\rm alg}$} \\
 \hline
 Parareal& 1&1&1&1&1&1&1 \\
 GParareal&  187  &  338  &  249  &  274  &  383  &  224 & 472 \\
nnGParareal&  15&15&18&15&15&15&18 \\\hline
    \end{tabular}
    }
    \caption{Top: Number of iterations required by Parareal, GParareal, and nnGParareal to converge for the systems in  Supplementary Material~\ref{app:models} and Sections~\ref{tab:nonaut_scaling} and \ref{sec:exp_tomlab}. Bottom: Number of observations required by each algorithm to train $\widehat f$ at the converging iteration $K_{\rm alg}$. The results of nnGParareal are the same as for  nnGParareal with Time (see Section~\ref{sec:subset_choice} and Supplementary Material~\ref{app:nngp_time}).} %
    \label{tab:gp_vs_nngp7}
\end{table}

\subsection{Non-linear Hopf bifurcation model}
\label{sec:exp_nonaut}
The non-linear Hopf bifurcation model is a two-dimensional non-autonomous ODE~\cite{seydel2009practical}, which is given by
\begin{equation*}
    \frac{du_1}{dt} = -u_2+u_1\left(\frac{t}{T}-u_1^2 - u_2^2\right),\quad 
  \frac{du_2}{dt} = u_1+u_2\left(\frac{t}{T}-u_1^2 - u_2^2\right).
  \label{eq:non_aut}
\end{equation*}
  To obtain an autonomous system, we rewrite it as a three-dimensional model, adding time as an additional coordinate. We integrate it over $t \in [-20,500]$, taking $\boldsymbol{u}_0=(0.1, 0.1, 500)$ as the initial condition and an increasing number of time intervals, namely $N \in\{32,64,128,256, 512\}$, using $m=18$. Runge-Kutta 1 (RK1) with $2048$ steps is chosen as the coarse solver $\g$, while RK8 with $1.74e^9$ steps serves as the fine solver $\f$, making it costly to be run sequentially. %
Ideally, one would use $N$ processors for this task. However, the  HPC facilities available to us do not allow such fine-grained control. Instead, the corresponding number of cores is $47$, $94$, $141$, $282$, or $517$, which is properly considered in the speed-up computations.

The runtime breakdown for Parareal, GParareal, and nnGParareal is shown in Table~\ref{tab:nonaut_scaling}. %
The last three columns report (i) the overall wallclock time of training, hyperparameter selection, and prediction for each corresponding model, (ii) the total wallclock time of running the algorithm, including the costs of $\f$ and $\g$, the model cost, bookkeeping, and other minor operations, and (iii) the improvement in execution speed-up compared to the sequential run of the fine solver, as computed in Subsection~\ref{sec:comp_complx}, respectively.
\begin{table}[t!]
    \centering
   {\small 
\begin{tabular}{lcccccc}
    \multicolumn{7}{c}{Non-linear Hopf bifurcation model for  $N=32$}\\\\
    \hline
    Algorithm & $K_{\rm alg}$ & $T_\g$ & $T_\f$ & $\tmdl$ & $T_{\rm alg}$ & $\hat S_{\rm alg}$\\
    \hline
    Fine & $-$ & $-$ & $-$ & $-$ & 3.51e+04 & 1\\
    Parareal & 19 & 3.70e-02 & 1.09e+03 & 1.96e-03 & 2.08e+04 & 1.69\\
    GParareal & 10 & 6.24e-02 & 1.09e+03 & 6.40e+00 & 1.09e+04 & 3.21\\
    nnGParareal & 9 & 1.53e-02 & 1.09e+03 & 4.71e+00 & 9.80e+03 & \textbf{3.58}\\
    \hline\\
    \multicolumn{7}{c}{$N=64$}\\
    \hline
    Algorithm & $K_{\rm alg}$ & $T_\g$ & $T_\f$ & $\tmdl$ & $T_{\rm alg}$ & $\hat S_{\rm alg}$\\
    \hline
    Fine & $-$ & $-$ & $-$ & $-$ & 3.51e+04 & 1\\
    Parareal & 30 & 3.42e-02 & 5.50e+02 & 6.37e-03 & 1.65e+04 & 2.13\\
    GParareal & 14 & 5.31e-02 & 5.52e+02 & 5.55e+01 & 7.78e+03 & 4.52\\
    nnGParareal & 11 & 3.38e-02 & 5.50e+02 & 1.05e+01 & 6.06e+03 & \textbf{5.80}\\
    \hline\\
    \multicolumn{7}{c}{$N=128$}\\
    \hline
    Algorithm & $K_{\rm alg}$ & $T_\g$ & $T_\f$ & $\tmdl$ & $T_{\rm alg}$ & $\hat S_{\rm alg}$\\
    \hline
    Fine & $-$ & $-$ & $-$ & $-$ & 3.51e+04 & 1\\
    Parareal & 54 & 5.29e-02 & 2.72e+02 & 2.36e-02 & 1.47e+04 & 2.40\\
    GParareal & 16 & 8.49e-02 & 2.73e+02 & 4.31e+02 & 4.79e+03 & 7.33\\
    nnGParareal & 13 & 6.68e-02 & 2.72e+02 & 3.28e+01 & 3.56e+03 & \textbf{9.86}\\
    \hline\\
    \multicolumn{7}{c}{$N=256$}\\
    \hline
    Algorithm & $K_{\rm alg}$ & $T_\g$ & $T_\f$ & $\tmdl$ & $T_{\rm alg}$ & $\hat S_{\rm alg}$\\
    \hline
    Fine & $-$ & $-$ & $-$ & $-$ & 3.51e+04 & 1\\
    Parareal & 97 & 8.88e-02 & 1.96e+02 & 7.63e-02 & 1.90e+04 & 1.85\\
    GParareal & 18 & 1.36e-01 & 1.37e+02 & 3.24e+03 & 5.72e+03 & 6.15\\
    nnGParareal & 16 & 1.19e-01 & 1.36e+02 & 1.08e+02 & 2.28e+03 & \textbf{15.42}\\
    \hline\\
    \multicolumn{7}{c}{$N=512$}\\
    \hline
    Algorithm & $K_{\rm alg}$ & $T_\g$ & $T_\f$ & $\tmdl$ & $T_{\rm alg}$ & $\hat S_{\rm alg}$\\
    \hline
    Fine & $-$ & $-$ & $-$ & $-$ & 3.51e+04 & 1\\
    Parareal & 149 & 1.55e-01 & 6.80e+01 & 1.97e-01 & 1.02e+04 & 3.46\\
    GParareal & 19 & 2.41e-01 & 6.95e+01 & 1.88e+04 & 2.01e+04 & 1.75\\
    nnGParareal & 19 & 2.50e-01 & 7.06e+01 & 2.68e+02 & 1.62e+03 & \textbf{21.75}\\\hline
    \end{tabular}}  %
    \caption{Simulation study on the empirical scalability and speed-up of Parareal, GParareal, and nnGParareal (with $m=18$) for the non-linear non-autonomous Hopf bifurcation system with $d=3$ and $t_N=500$, when $\g$ is RK1, and $\f$ is RK8. $T_\f$ and $T_\g$ refer to the runtimes (in seconds) per iteration of the fine and coarse solvers, respectively. $K_{\rm alg}$ denotes the number of iterations taken to converge. $\tmdl$ includes the running time of training, hyperparameter selection, and prediction for the corresponding model employed. $T_{\rm alg}$ stands for the runtime of the algorithm, while the speed-up $\hat S_{\rm alg}$ reported in the last column is the empirical speed-up of the algorithm, computed with respect to the cost of the sequential fine solver, as described in Subsection~\ref{sec:comp_complx}.}
        \label{tab:nonaut_scaling}
    \end{table}

As expected, all algorithms require more iterations to converge as $N$ doubles, even though the increase of $K_{\rm GPara}$ and $ K_{ {\rm nnGPara}}$ is not substantial. The Parareal speed-up $S_{\rm Para}$ increases in $N$, as its model cost is negligible (requiring only the evaluation of $\fhat_{\rm Para}$, a simple look-up operation) and $T_\f$ halves as a direct effect of parallelization. This is not the case for GParareal, whose model cost is much higher than $T_\f$ for $N$ as large as $256$, slowing down its speed-up at the point of becoming slower than Parareal for $N=512$. On the contrary, nnGParareal converges sooner than the other algorithms for each $N$. Moreover,  it consistently leads to the highest speed-up, which increases in $N$, due to a much smaller model cost than GParareal, thanks to learning $\f-\g$ only on $m$ nns instead of the entire dataset.
The behavior of the empirical and theoretical speed-up of the three algorithms as a function of the number of cores is reported in Figure~\ref{fig:nonaut_scal_speedup}. %
Note the wide gap between the upper bound and theoretical speed-up for GParareal due to the model cost, and how closely nnGParareal matches both the theoretical and the ideal upper bound $K_{ {\rm nnGPara}}/N$ for up to $N=256$. 

Finally, we investigate the impact of the coarse solver accuracy (measured as the number of coarse solvers steps) on the Parareal and nnGParareal speed-ups, for $N=32$ and $1.74e^8$ steps of the fine solver. The results (see Figure \ref{Gextra} in Supplementary Material \ref{extraHopf}) show that  nnGParareal can use a coarser solver than Parareal to obtain a similar speed-up, with the latter showing non-convergent behavior for smaller numbers of coarse solver steps.
 \begin{figure}
     \centering     \includegraphics[width=0.8\linewidth]{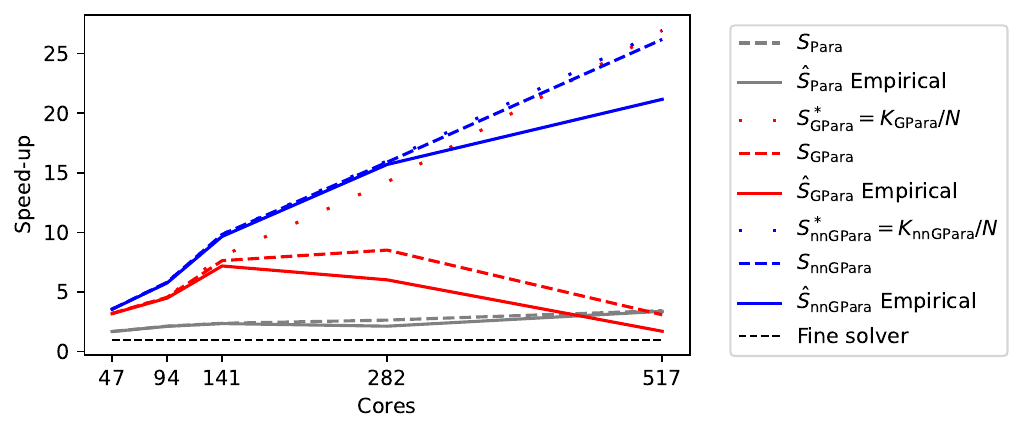}
     \caption{Speed-ups for the Hopf bifurcation system as a function of the number of cores: upper bound $S_{\rm alg}^*$ (dotted lines), theoretical $S_{\rm alg}$ (dashed lines) and empirical $\hat S_{\rm alg}$ (solid lines) values for alg = Para, GPara and nnGPara, standing for Parareal, GParareal, and nnGParareal, respectively.}
     \label{fig:nonaut_scal_speedup}
 \end{figure}
\subsection{Thomas labyrinth}
\label{sec:exp_tomlab}
After showcasing the good empirical performance of nnGParareal for the Hopf bifurcation system, we challenge it on the chaotic Thomas labyrinth system~\cite{thomas1999deterministic}. This is one of the five most difficult systems (out of the $131$ known chaotic systems listed in~\cite{gilpin2021chaos}) to learn from data using sparse kernel flows, when judging based on the symmetric mean absolute percentage error criterion~\cite{yang2023learning}.
This is a three-dimensional system representative of a large class of auto-catalytic models that occur frequently in chemical reactions~\cite{rasmussen1990coreworld}, ecology
\cite{deneubourg1989collective}, and evolution~\cite{kauffman1993origins}. It is described by the following equations
\begin{equation}
        \dfrac{dx}{dt} = b \sin y - ax, \quad
        \dfrac{dy}{dt} = b \sin z - ay,  \quad
        \dfrac{dz}{dt} = b \sin x - az, 
\label{eq:tomlab}
\end{equation}
where $a=0.5$ and $b=10$. We integrate it over $t \in [0, 10]$ for $N=32$ and $N= 64$, $t \in [0, 40]$ for $N=128$, and $t \in [0, 100]$ for $N=256$ and $N= 512$ intervals. Following~\cite{gilpin2021chaos}, we take $\boldsymbol{u}_0=(4.6722764,5.2437205e^{-10},-6.4444208e^{-10})$ as initial condition, such that the system exhibits chaotic dynamics. Further, we use RK1 with $10N$ steps for the coarse solver $\g$ and RK4 with $1e^9$ steps for the fine solver $\f$. We use the normalized version (see Step 1 of nnGParareal) with a normalized accuracy $\epsilon=5e^{-7}$. Also in this case, nnGParareal achieves the highest speed-up for all $N$ other than $N=64$, when $S_{\rm GPara}=3.79$ and $S_{{\rm nnGPara}}=3.65$ (we refer the reader to Table~\ref{tab:tomlab} in Supplementary Material~\ref{app:tomlab_scal_table}). Note also that the speed-ups for different $N$ are not comparable anymore, unless solving for the same $t_N$. %

When designing this case study, we limited the maximum runtime  for each algorithm to $48$ hours, with GParareal failing to
converge for $N=256$ and $N=512$. As before, GParareal suffers from drastic performance losses for high numbers of intervals. Moreover, the situation worsens here due to the high number of iterations required to converge, as the complexity is $O(k^4)$.
Nevertheless, one may still be interested in checking how close to convergence GParareal is. This can be observed in %
Figure~\ref{fig:tom_lab_scal_gp_int_runtime}, where we show both the percentage of converged intervals (red lines) and the training time per iteration (blue lines) for GParareal (top figures) and nnGParareal (bottom figures). For $N=256$ and $N=512$, less than $40\%$ intervals have converged for GParareal, with a training time of $2.5$ and $5$ hours, respectively. On the contrary, nnGParareal converges in $12$ and $7$ minutes, respectively, with training times less than $15$ seconds.

\begin{figure}
\centering\includegraphics[width=0.58\textwidth, , trim={0.2cm 7.5cm 0.2cm 0.0cm},clip]{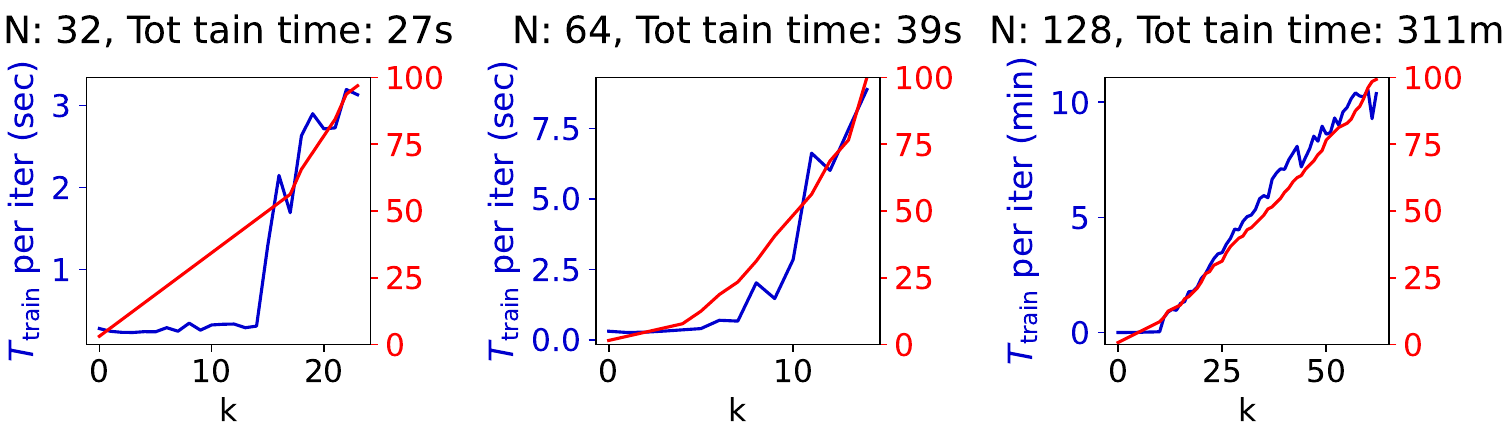}
    \includegraphics[width=0.41\textwidth, , trim={4.3cm 0.2cm 3.9cm 7.7cm},clip]{figures/tom_lab_scal_gp_int_runtime_upd.pdf}
    \includegraphics[width=0.58\textwidth, , trim={0.2cm 7.5cm 0.2cm 0.0cm},clip]{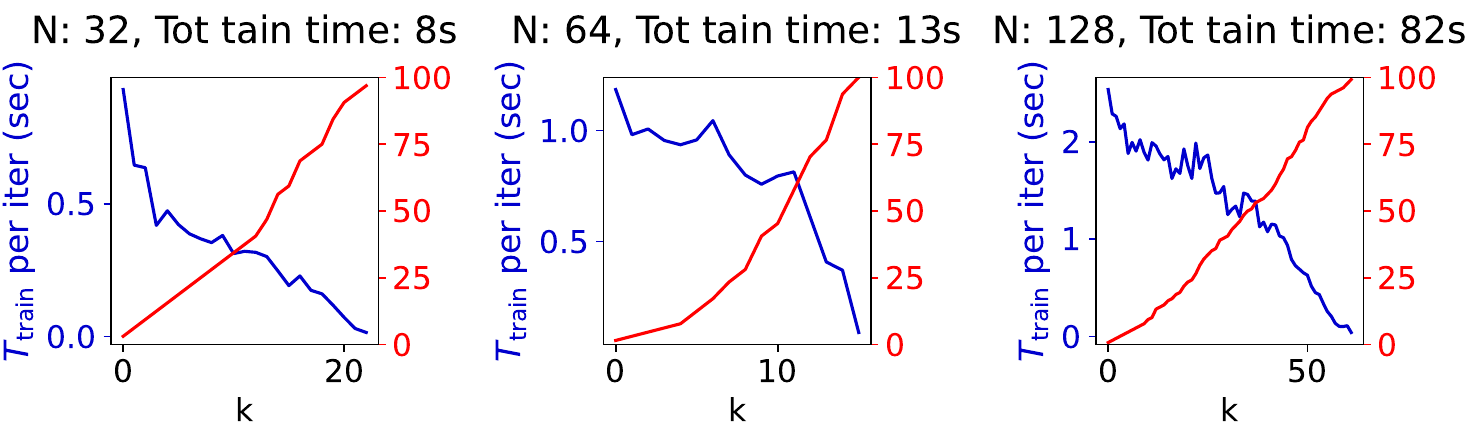}
    \includegraphics[width=0.41\textwidth, , trim={4.3cm 0.2cm 3.9cm 7.7cm},clip]{figures/tom_lab_scal_nngp_int_runtime_upd.pdf}
    \caption{GParareal's (top) and nnGParareal's (bottom) percentages of converged intervals (red) and training time per iteration (blue) for Thomas labyrinth~\eqref{eq:tomlab}. The latter is increasing for GParareal and decreasing for nnGParareal. The aggregated model cost across $k$ is shown in the title. GParareal failed to converge for $N=256$ and $N=512$ within the computational time budget of $48$ hours.}
    \label{fig:tom_lab_scal_gp_int_runtime}
\end{figure}

\subsection{Viscous Burgers' equation}
\label{sec:exp_burg}
We now focus on non-linear PDEs, starting with the viscous Burgers' equation with Dirichlet boundary conditions. This convection-diffusion model is a one-dimensional system defined as
\begin{equation}
    v_t = \nu v_{xx} - vv_x, \qquad (x,t)\in [-L,L] \times(t_0,t_N],
    \label{eq:burg}
\end{equation}
with initial condition $v(x, t_0)=v_0(x)$, $x\in [-L,L]$, and boundary conditions \\$v(-L,t)=v(L,t)$, $v_x(-L,t)=v_x(L,t)$, $t \in [t_0, t_N]$. Here, $\nu$ is the diffusion coefficient. We discretize the spatial domain using finite difference~\cite{fornberg1988generation} and $d+1$ equally spaced points $x_{j+1}=x_j+\Delta x$, where $\Delta x = 2L/d$ and $j=0,\ldots,d$. This results in a $d$-dimensional ODE system.  In the numerical experiments, we choose $N=d=128$, $L=1$, $\nu=1/100$, $v_0(x)=0.5(\cos (\frac{9}{2}\pi x)+1)$, $m=18, t_0=0$, and consider two values for the time horizon, namely $t_N=5$ and $t_N=5.9$. %
We use RK1 with $4N$ steps for the coarse solver $\g$ and RK8 with $5.12e^6$ steps for the fine solver $\f$, and the normalized accuracy value  $\epsilon=5e^{-7}$. 
By extending the time horizon without increasing the course solver's discretization points, we make the problem more challenging, as the information coming from $\g$ is less accurate. We choose $t_N=5.9$, as larger values of $t_N$ make all algorithms fail due to $\g$ being too poor. This is a more challenging setting than for the Thomas labyrinth, where the number of coarse solver steps is adjusted proportionally to $N$ for larger $t_N$ and $N$. Table~\ref{tab:burges} shows the simulation results. 

\begin{table}[t]
    \centering
    {\small
\begin{tabular}{lcccccc}
\multicolumn{7}{c}{Viscous Burgers' PDE with $t_N=5$}\\\\
\hline
Algorithm & $K_{\rm alg}$ & $T_\g$ & $T_\f$ & $\tmdl$ & $T_{\rm alg}$ & $\hat S_{\rm alg}$\\
\hline
Fine & $-$ & $-$ & $-$ & $-$ & 4.75e+04 & 1\\
Parareal & 10 & 1.09e-01 & 3.82e+02 & 7.59e-03 & 3.82e+03 & 12.43\\
GParareal & 6 & 1.44e-01 & 3.77e+02 & 3.72e+03 & 5.98e+03 & 7.94\\
nnGParareal & 9 & 1.41e-01 & 3.73e+02 & 4.32e+02 & 3.79e+03 & \textbf{12.54}\\
\hline\\
\multicolumn{7}{c}{$t_N=5.9$}\\
\hline
Algorithm & $K_{\rm alg}$ & $T_\g$ & $T_\f$ & $\tmdl$ & $T_{\rm alg}$ & $\hat S_{\rm alg}$\\
\hline
Fine & $-$ & $-$ & $-$ & $-$ & 4.76e+04 & 1\\
Parareal & 90 & 4.62e-02 & 3.74e+02 & 3.66e-02 & 3.37e+04 & 1.41\\
GParareal & 8 & 1.30e-01 & 3.75e+02 & 9.38e+03 & 1.24e+04 & 3.84\\
nnGParareal & 14 & 1.20e-01 & 3.74e+02 & 6.40e+02 & 5.88e+03 & \textbf{8.09}\\
\hline
\end{tabular}}
\caption{Performance of Parareal, GParareal and nnGParareal (with $m=18$) on viscous Burgers' PDE \eqref{eq:burg}, rewritten and solved as a $d$-dimensional ODE, with $N=d=128$, with RK1 for $\g$ and  RK8 for $\f$. Time is shown in seconds. $T_\f$ and $T_\g$ refer to the runtime per iteration of the fine and coarse solvers, respectively. $K_{\rm alg}$ denotes the number of iterations to converge. $\tmdl$ includes the running time of training, hyperparameter selection, and prediction for the corresponding model employed across all iterations. $T_{\rm alg}$ and $\hat S_{\rm alg}$ are the runtime and the empirical parallel speed-up of the algorithms, respectively.}%
\label{tab:burges}
\end{table}
 Unlike for ODEs, nnGParareal requires more iterations than GParareal to converge when considering PDEs. This is also the case for the two-dimensional FitzHugh-Nagumo PDE in~Section~\ref{sec:exp_fhn}, suggesting a loss of information when considering the ($m$) nnGP instead of the GP trained on the entire dataset. Higher values of $m$  helps in reducing $K_{{\rm nnGPara}}$, lowering it down to 7 for $t_N=5$ and 12 for $t_N=5.9$ for $m$ as large as 30 (see Figure \ref{fig:Burges_perf_across_m_K} in Supplementary Material~\ref{app:burg_perf_across_m}). At the same time though, the model cost increases too, so the highest speed-up may not be necessarily achieved by a larger $m$ and a lower $K_{{\rm nnGPara}}$ (see Figure \ref{fig:Burges_perf_across_m_speed} in Supplementary Material~\ref{app:burg_perf_across_m}). In general, even if $K_{{\rm nnGPara}}>K_{\rm GPara}$ for all considered $m$ values, the nnGParareal speed-up is always higher than that of the other algorithms, particularly when considering $t_N=5.9$, since (a) nnGParareal converges faster than Parareal for $t_N=5$ and much faster for $t_N=5.9$; (b) the high model cost of GParareal negatively offsets its gains in terms of $K_{\rm alg}$. As previously observed, Gparareal is an effective alternative to Parareal for moderate values of $N$, while the proposed nnGParareal makes it competitive and advantageous for large $N$ as well. 
 Remarkably, both GParareal and nnGParareal are less sensible to the sub-optimal coarse solver chosen for $t_N=5.9$ than Parareal, with $K_{\rm GPara}$ and $K_{{\rm nnGPara}}$ rising slightly compared to $K_{\rm Para}$. This cushioning effect is particularly desired when in doubt about the accuracy of the selected $\g$. 

We conclude this section by investigating whether the performance observed for Parareal and nnGParareal may result from asymmetries in the finer temporal/spatial resolutions, as an unbalanced discretization may \lq\lq artificially\rq\rq enhance speed-up gains. To assess whether this is the case, we perform two additional simulations. First, we increase the spatial discretization resolution to $d = 400, 815, 1128$, while simultaneously reducing the temporal resolution of the fine solver $\mathcal{F}$ to $10^4, 10^5,$ and $6\times10 ^5$ steps, respectively. In all tested scenarios, nnGParareal consistently achieves greater speed-ups than Parareal (results not shown), mirroring the findings summarized in Table \ref{tab:burges}. Although this adjustment partially reduces the  differences in orders of magnitude of the temporal and spatial discretization errors, it still does not quantify whether the two errors are effectively balanced.
Direct quantification and balancing of spatial and temporal discretization errors (error equilibration) are inherently challenging tasks in complex high-dimensional PDE problems, like those considered here. Without access to analytical solutions, error estimation typically demands highly accurate numerical approximations in both time and space, making it computationally infeasible.
To overcome this limitation, we conduct an additional numerical experiment using the heat equation, a PDE with an explicitly known analytical solution,  enabling the derivation of balanced discretization errors. For this experiment, we choose the initial condition $u_0(x)=\sin(2\pi x)$, and evaluate the performance of both Parareal and nnGParareal in a setting specifically designed to ensure balanced discretization errors, benchmarking their results against a sequential exact solver as baseline. This experiment, detailed in Supplementary Material \ref{app:heat}, demonstrates parallel speed-ups of $8.85$ for Parareal and $23.50$ for nnGParareal.

\subsection{FitzHugh-Nagumo PDE}
\label{sec:exp_fhn}
In this section, we solve the two-dimensional non-linear FitzHugh-Nagumo PDE model~\cite{ambrosio2009propagation}, an extension of the corresponding ODE system described in Supplementary Material~\ref{sys:fhn}. This PDE system models a set of cells constituted by a small nucleus of pacemakers near the origin, immersed among an assembly of excitable cells. %
We discretize both spatial dimensions using finite difference and $\Tilde{d}$ equally spaced points, yielding an ODE with $d=2\Tilde{d}^2$ dimensions. We then test the scalability properties of the algorithms, increasing the dimension $d$. In particular, we consider $\Tilde{d}=10,12,14,16$, and thus $d=200, 288, 392, 512$ for $N=512$, $m=20$, RK8 with $10^8$ steps for $\f$ and different $t_N$, coarse solvers and number of $\g$ steps. %
We refer the reader to Supplementary Material~\ref{sys:fhn_pde} for the model equations and the simulation setup.

As for the viscous Burgers' PDE, if it terminates within the allocated 48 hours time budget, GParareal requires the smallest number of iterations to converge, followed by nnGParareal (see Figure~\ref{fig:fhn_pde_speeedup}
, left panel). Nevertheless, except for $d=200$, when $\hat S_{\rm Para}>\hat S_{{\rm nnGPara}}$, nnGParareal outperforms the other two algorithms in terms of speed-up, as shown in Figure \ref{fig:fhn_pde_speeedup} 
(a detailed breakdown of all running times and training costs is reported in Table~\ref{tab:fhn_pde} in Supplementary Material~\ref{app:FHN_PDE_scal}). The gap between theoretical and empirical speed-ups for nnGParareal can be closed using a more efficient parallel implementation (i.e. one where the overhead from parallelization is lower), while that, albeit small, between upper bound and theoretical speed-up by decreasing the model training cost (i.e. by using a cheaper model).

 \begin{figure}[t]
     \centering
\begin{minipage}{.35\textwidth}
   {\scriptsize 
\begin{tabular}{lcc}

Algorithm & d=200 & d=288\\
\hline
Parareal & 25 & 67 \\
GParareal & 8 & 7  \\
nnGParareal & 12 & 10  \\\\

Algorithm &  d=392 & d=512 \\
\hline
Parareal  & 44 & 79 \\
GParareal & $>11$ & $>9$ \\
nnGParareal  & 5 & 6 \\
\end{tabular}}
\end{minipage}
     \begin{minipage}{.64\textwidth}
\includegraphics[width=\linewidth]{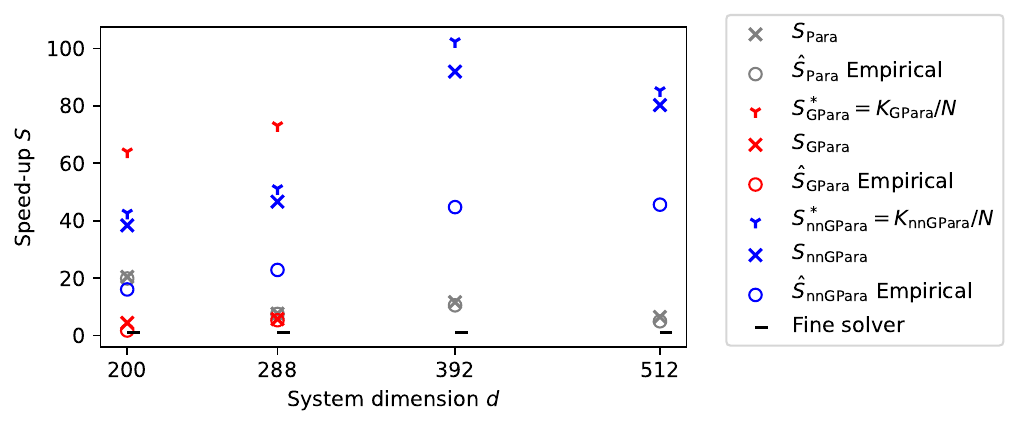}
     \end{minipage}
  \caption{FitzHugh-Nagumo PDE model. Left: Number of iterations $K$ needed to converge for Parareal, GParareal, and nnGParareal as a function of the dimension $d$. Right: observed, ideal, and theoretical  speed-ups of the three algorithms, with the latter computed according to the formulas in Subsection~\ref{sec:comp_complx}. For $N=256$ and $N= 512$, GParareal failed to converge within the computational time budget.}
\label{fig:fhn_pde_speeedup}
 \end{figure}


\section{Discussion}

\label{sec:discussion}

In this paper, we introduce nnGParareal, a new PinT scheme for the solution of high-dimensional and challenging systems of ODEs (or PDEs). The algorithm is built upon GParareal, a PinT method combining Parareal and Gaussian processes, where the latter are used as an emulator to learn the prediction accuracy $\f-\g$ between the fine $\f$ and coarse $\g$ solvers. 
By using the entire dataset instead of data coming from the last iterations, GParareal was shown to converge in fewer iterations than Parareal (even when Parareal failed due to numerical instability of the coarse solver) and to be able to use legacy solutions (e.g. solutions from prior runs of the IVP for different initial conditions) to further accelerate its convergence. However, GParareal suffers from the curse of dimensionality in both the number of data points $kN$ (where $k$ is the iteration and $N$ the number of time intervals) and the model dimension $d$.

The idea of nnGParareal is to replace the GP trained on the entire dataset with one trained on $m$ data points. Among several approaches, we numerically illustrate how choosing $m$ as the nearest neighbors leads to the highest  prediction accuracy and the smaller number of iterations needed to converge. By embedding the derived nnGP within Parareal, we obtain an algorithm which mostly preserves or even improves the accuracy of GParareal (measured in terms of number of iterations needed to converge), thanks to the partial relaxation of the second-order stationarity assumption made by the GPs \cite{datta2016hierarchical}, while drastically reducing its computational cost and wallclock time, thus improving the speed-up with respect to running $\f$ sequentially. Indeed, the time complexity of nnGParareal is now only $O(dm^3N+dN\log (kN))$ at iteration $k$ instead of $O(d(kN)^3)$ of GParareal. This allows nnGParareal to tackle GParareal's scalability issues, consistently yielding the highest speed-up compared to GParareal and Parareal across several different systems of ODEs and PDEs. This makes our algorithm better suited to the modern requirements and capabilities of HPC, thus expanding the temporal scale over which we can solve ODEs and PDEs with time parallelization. Nevertheless, the performance of nnGParareal might be further improved by leaving the Gaussian processes framework and selecting a learned model capable of training on all $ d$ coordinates simultaneously, with a cheaper hyperparameter optimization routine, as recently proposed in \cite{GGT2024} via random weights neural networks.\\
\indent Finally, it is worth recalling that both GParareal and nnGParareal use only the underlying GP and nnGP posterior means for prediction, ignoring the underlying uncertainty in the GP and nnGP posteriors. Propagating the full uncertainty will yield a probabilistic version of GParareal and nnGParareal, respectively, as discussed in \cite{pentland2023gparareal}.  We leave this for future work.

\section*{Acknowledgements}
\label{sec:acknowledgements}
GG is funded by the Warwick Centre of Doctoral Training in Mathematics and Statistics. GG thanks the hospitality of the University of St. Gallen; it is during a visit to this institution that part of the results in this paper were obtained. Calculations were performed using the Warwick University HPC facilities on Dell PowerEdge C6420 compute nodes each with 2 x Intel Xeon Platinum 826 (Cascade Lake) 2.9 GHz 24-core processors, with 48 cores per node and 192 GB DDR4-2933 RAM per node. Python code accompanying this manuscript is made available at \href{https://github.com/Parallel-in-Time-Differential-Equations/Nearest-Neighbors-GParareal}{https://github.com/Parallel-in-Time-Differential-Equations/Nearest-Neighbors-GParareal}.

\bibliographystyle{siamplain}
\bibliography{references}


\appendix
%

\section{Models}
\label{app:models}

\subsection{FitzHugh–Nagumo}
\label{sys:fhn}
The deterministic FitzHugh-Nagumo (FHN) is a model for an animal nerve axon \cite{nagumo1962active}. It is a commonly used example of a relatively simple system to learn, does not exhibit chaotic behavior, and is described by the following equations
\[
\frac{{d} u_1}{{d} t}=c\left(u_1-\frac{u_1^3}{3}+u_2\right), \quad \frac{{d} u_2}{{d} t}=-\frac{1}{c}\left(u_1-a+b u_2\right),
\]
with $a = 0.2$, $b=0.2$, and $c=3$. Here, $u_1$ is the axon membrane voltage, with $u_2$ representing a linear recovery variable. We integrate over $t \in [0,40]$ using $N=40$ intervals, taking $\boldsymbol{u}_0=(-1,1)$ as initial condition. We use Runge-Kutta~2  (RK2) with 160 steps for the coarse solver $\g$, and RK4 with $1.6e^5$ steps for the fine solver $\f$. This is the same setting as in \cite{pentland2023gparareal}, which allows almost direct comparison, although we use the normalized version of the system with $\epsilon=5e^{-7}$. 

Figure~\ref{fig:fhn-conv-prec} shows the performance of different Parareal variants. The left plot displays the number of intervals converged per iteration $k$ and the total number of Parareal iterations. %
The right plot shows the pointwise solution error of each algorithm with respect to the solution obtained by the fine solver. Upon convergence, Parareal guarantees that the error at each initial condition is below $\epsilon$. However, there is no such constraint for the points in between, as the Parareal scheme does not address these. The error naturally increases for chaotic systems due to the diverging nature of infinitesimally different trajectories, as seen in Figure~\ref{fig:rossler_res}.
\begin{figure}[ht]
     \centering
     \begin{subfigure}[b]{0.5\textwidth}
        \centering
        \includegraphics[width=1\textwidth]{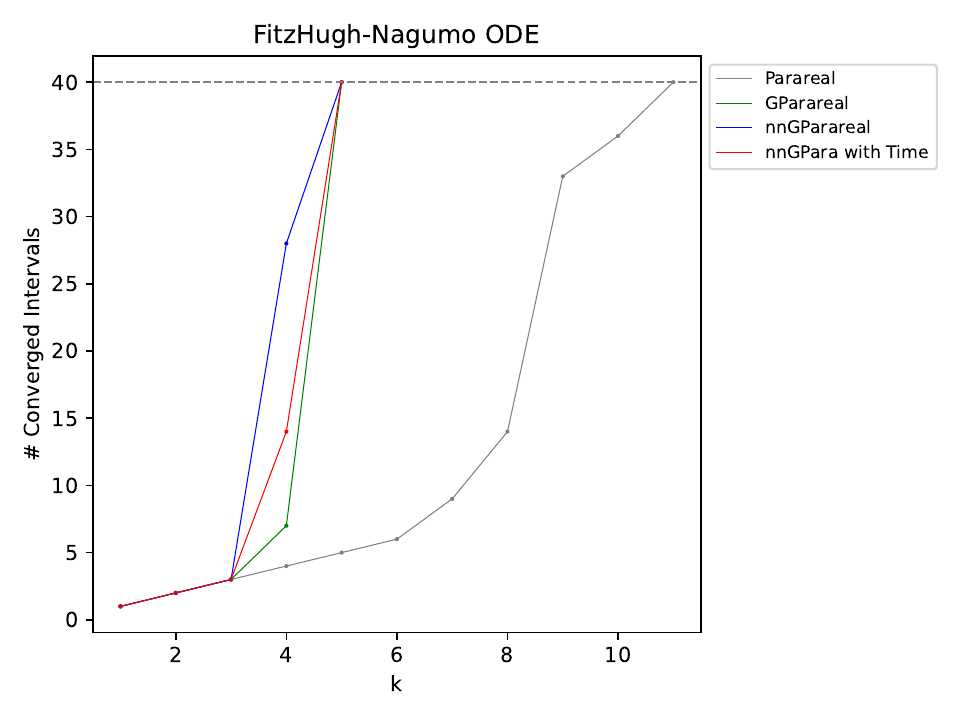}
     \end{subfigure}%
     \hfill
     \begin{subfigure}[b]{0.5\textwidth}
        \centering
        \includegraphics[width=1\textwidth]{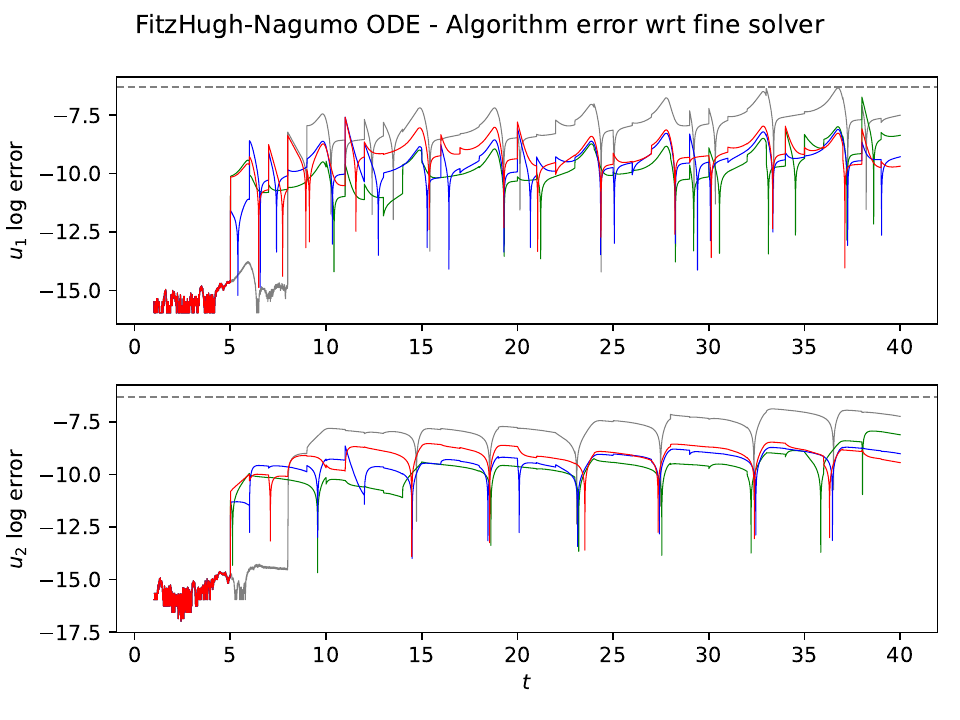}
     \end{subfigure}%
     \caption{FitzHugh-Nagumo model. Left: number of converged intervals per iteration $k$. Right: accuracy of different Parareal algorithms compared to the fine solver.}
     \label{fig:fhn-conv-prec}
\end{figure}

\subsection{R\"ossler}
\label{sys:rossler}
The R\"ossler system is a model for turbulence \cite{rossler1976equation}, described by the following equations
\[
\frac{{d} u_1}{{d} t}=-u_2-u_3, \quad \frac{{d} u_2}{{d} t}=u_1+a u_2, \quad \frac{{d} u_3}{{d} t}=b+u_3\left(u_1-c\right),
\]
which, for $a=0.2$, $b=0.2$, and $c=5.7$,  exhibits chaotic behavior. This configuration is commonly used throughout the literature. We integrate over $t \in [0,340]$ using $N=40$ intervals, taking $\boldsymbol{u}_0=(0,-6.78,0.02)$ as initial condition. We use RK1 with $9e^4$ steps for the coarse solver $\g$, and RK4 with $4.5e^8$ steps for the fine solver $\f$. This is the same setting as \cite{pentland2023gparareal}, and in our empirical illustrations, we use the normalized version of this system with $\epsilon=5e^{-7}$.
\begin{figure}[ht!]
     \centering
     \begin{subfigure}[b]{0.5\textwidth}
        \centering
        \includegraphics[width=1\textwidth]{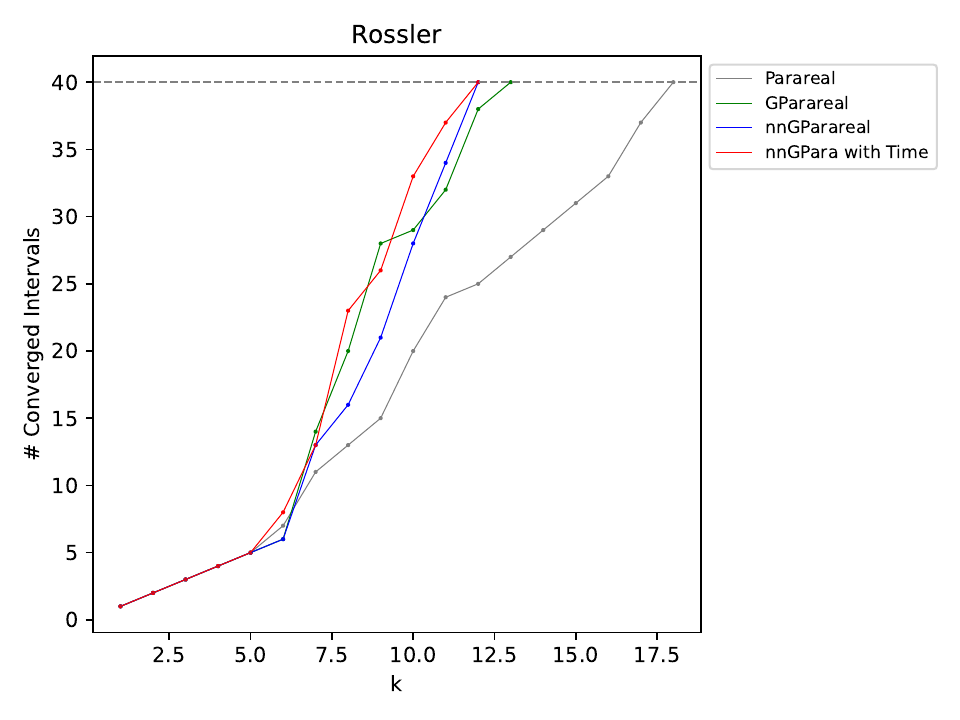}
     \end{subfigure}%
     \hfill
     \begin{subfigure}[b]{0.5\textwidth}
        \centering
        \includegraphics[width=1\textwidth]{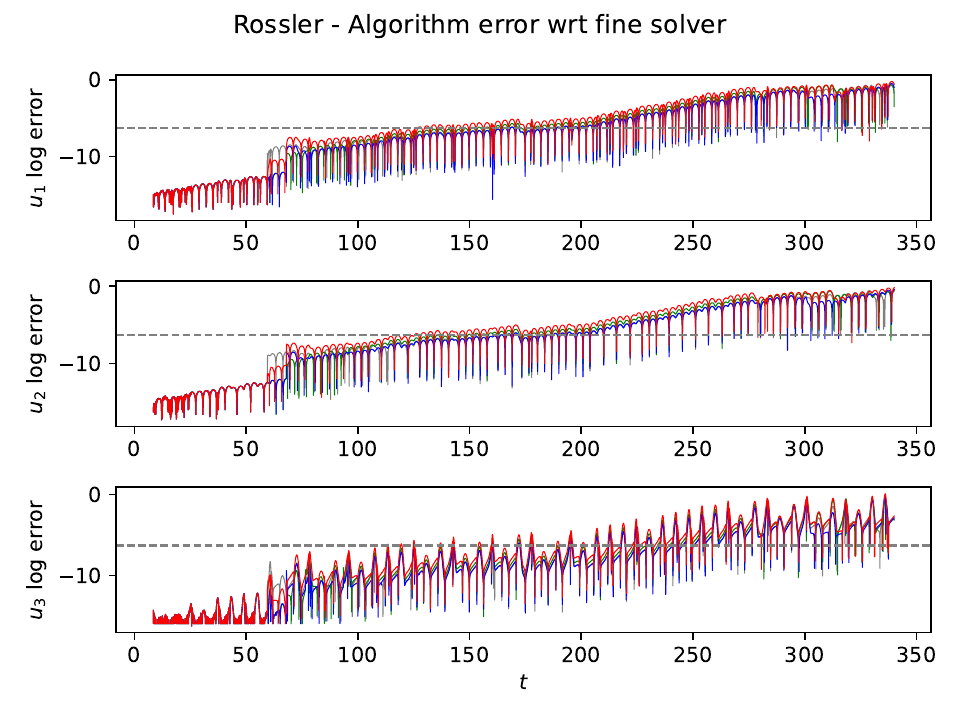}
     \end{subfigure}%
     \caption{R\"ossler system. Left: number of converged intervals per iteration $k$. Right: accuracy of different Parareal algorithms models compared to that of the fine solver.}
     \label{fig:rossler_res}
\end{figure}

\begin{figure}[ht]
     \centering
     \begin{subfigure}[b]{0.5\textwidth}
        \centering
        \includegraphics[width=1\textwidth]{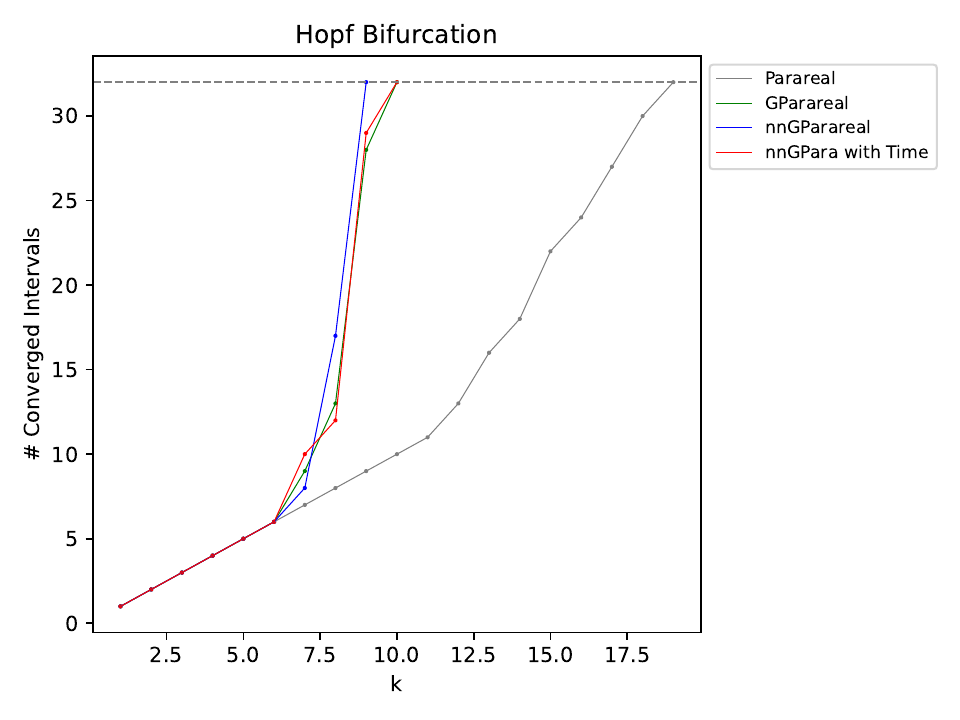}
     \end{subfigure}%
     \hfill
     \begin{subfigure}[b]{0.5\textwidth}
        \centering
        \includegraphics[width=1\textwidth]{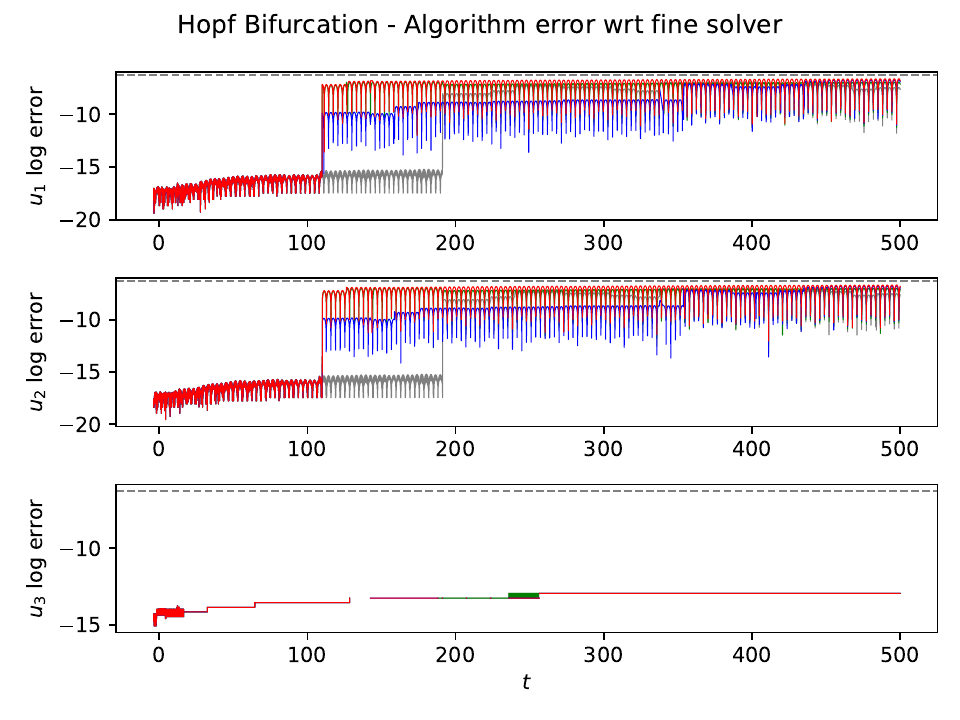}
     \end{subfigure}%
     \caption{Non-linear Hopf bifurcation model. Left: number of converged intervals per iteration $k$. Right: accuracy of different Parareal models compared to the fine solver.}
\end{figure}

\subsection{Brusselator}
\label{sys:brus}
The Brusselator models an autocatalytic chemical reaction \cite{lefever1971chemical}. It is a stiff, non-linear ODE, defined as
\begin{equation*}
\frac{d u_1}{d t}=a+u_1^2 u_2-(b+1) u_1, \quad 
\frac{d u_2}{d t}=b u_1-u_1^2 u_2,
\end{equation*}
where we set $a=1$ and $b=3$. We integrate over $t \in [0, 100]$ using $N=32$ intervals, taking $\boldsymbol{u}_0=(1, 3.7)$ as initial condition. We use RK4 with $2.5e^2$ steps for the coarse solver $\g$, and RK4 with $2.5e^4$ steps for the fine solver $\f$. We use the normalized version and set a normalized $\epsilon=5e^{-7}$.
\begin{figure}[ht]
     \centering
     \begin{subfigure}[b]{0.5\textwidth}
        \centering
        \includegraphics[width=1\textwidth]{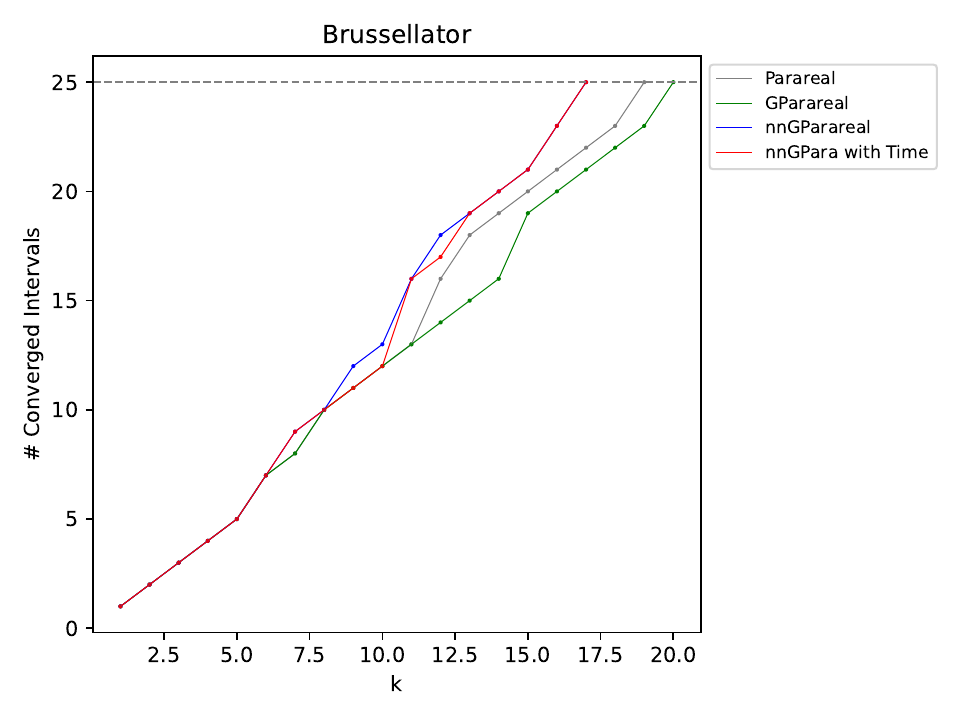}
     \end{subfigure}%
     \hfill
     \begin{subfigure}[b]{0.5\textwidth}
        \centering
        \includegraphics[width=1\textwidth]{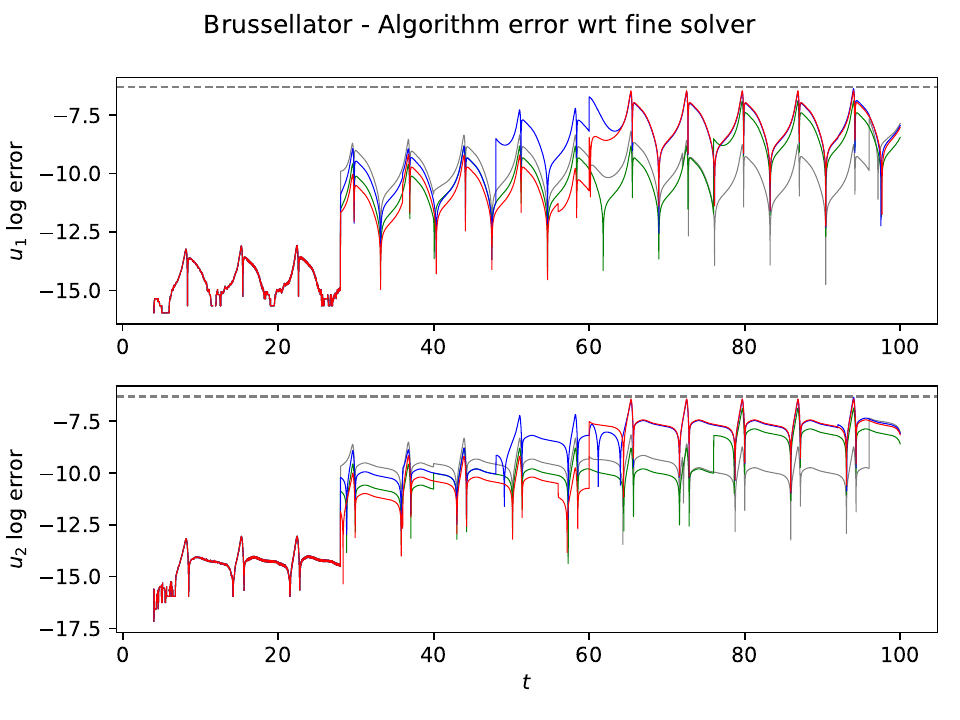}
     \end{subfigure}%
     \caption{Brusselator system. Left: number of converged intervals per iteration $k$. Right: accuracy of different Parareal models compared to the fine solver.}
\end{figure}

\subsection{Double pendulum}
\label{sys:pend}
This is a model for a double pendulum, adapted from \cite{danby1997computer}. It consists of a simple pendulum of mass $m$ and rod length $\ell$ connected to another simple pendulum of equal mass $m$ and rod length $\ell$, acting under gravity $g$. The model is defined by the following equations
\[
\begin{aligned}
\frac{{d} u_1}{{~d} t} & =u_3, \\
\frac{{d} u_2}{{~d} t} & =u_4, \\
\frac{{d} u_3}{{~d} t} & =\frac{-u_3^2 f_1\left(u_1, u_2\right)-u_4^2 \sin \left(u_1-u_2\right)-2 \sin \left(u_1\right)+\cos \left(u_1-u_2\right) \sin \left(u_2\right)}{f_2\left(u_1, u_2\right)}, \\
\frac{{d} u_4}{{~d} t} & =\frac{2 u_3^2 \sin \left(u_1-u_2\right)+u_4^2 f_1\left(u_1, u_2\right)+2 \cos \left(u_1-u_2\right) \sin \left(u_1\right)-2 \sin \left(u_2\right)}{f_2\left(u_1, u_2\right)},
\end{aligned}
\]
with $
f_1\left(u_1, u_2\right)=\sin \left(u_1-u_2\right) \cos \left(u_1-u_2\right)$ and $
f_2\left(u_1, u_2\right)=2-\cos ^2\left(u_1-u_2\right)
$ and 
where $m, \ell$, and $g$ have been scaled out of the system by letting $\ell=g$. The variables $u_1$ and $u_2$ measure the angles between each pendulum and the vertical axis, while $u_3$ and $u_4$ measure the corresponding angular velocities. The system exhibits chaotic behavior and is commonly used in the literature. Based on the initial condition, it can be difficult to learn. We integrate over $t \in [0, 80]$ using $N=32$ intervals, taking $\boldsymbol{u}_0=(-0.5,0,0,0)$ as initial condition. We use RK1 with $3104$ steps for the coarse solver $\g$, and RK8 with $2.17e^5$ steps for the fine solver $\f$. This is a similar setting as \cite[Figure 4.10]{pentland2023gparareal}, although, like above, we use the normalized version and set a normalized $\epsilon=5e^{-7}$.
\begin{figure}[ht]
     \centering
     \begin{subfigure}[b]{0.5\textwidth}
        \centering
        \includegraphics[width=1\textwidth]{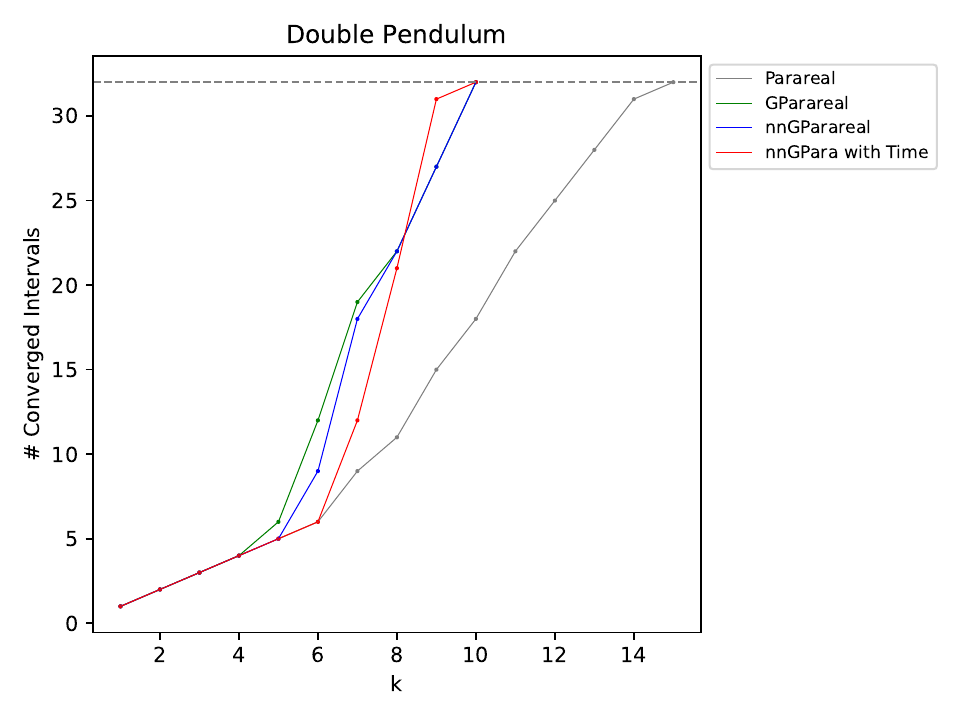}
     \end{subfigure}%
     \hfill
     \begin{subfigure}[b]{0.5\textwidth}
        \centering
        \includegraphics[width=1\textwidth]{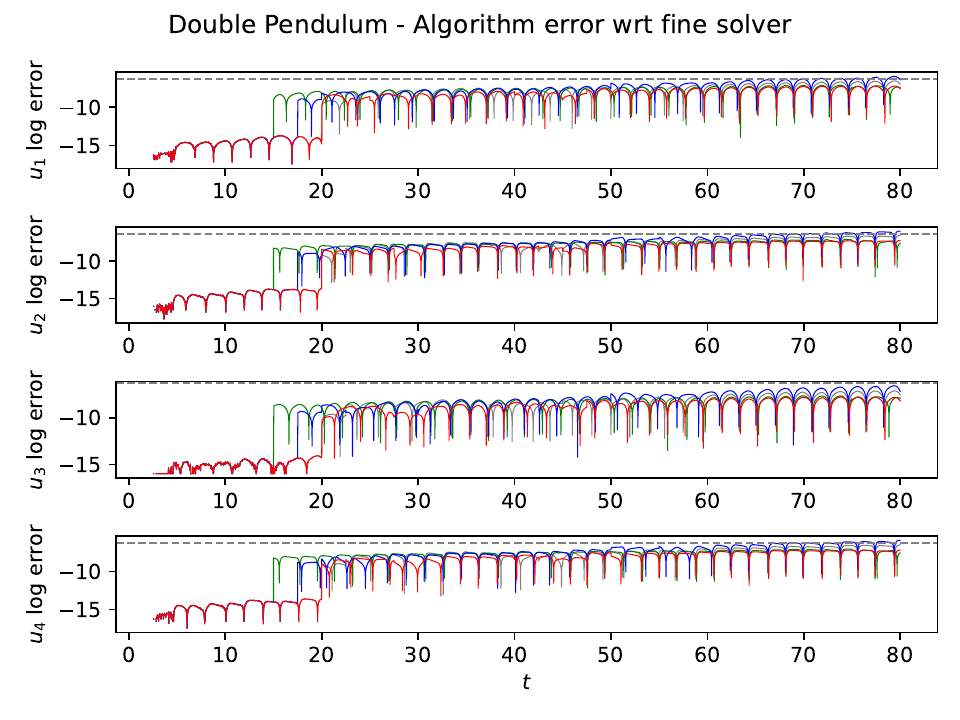}
     \end{subfigure}%
     \caption{Double pendulum. Left: number of converged intervals per iteration $k$. Right: accuracy of different Parareal models compared to the fine solver.}
\end{figure}

\subsection{Lorenz}
\label{sys:lorenz}
The Lorenz system is a simplified model used in different applications, e.g. for weather prediction \cite{lorenz1963deterministic}. It is given by
\[
\begin{aligned}
& \frac{d u_1}{d t}=\gamma_1\left(u_2-u_1\right), \quad \frac{d u_2}{d t}=\gamma_2 u_1-u_1 u_3-u_2, \quad
\frac{d u_3}{d t}=u_1 u_2-\gamma_3 u_3,
\end{aligned}
                        \]
 and exhibits chaotic behavior with $(\gamma_1, \gamma_2,\gamma_3) = (10,28,8/3)$. We consider $t \in [0, 18]$ and $N=50$ intervals, taking $\boldsymbol{u}_0=(-15, -15, 20)$ as initial condition. We use RK4 with $3e^2$ steps for the coarse solver $\g$, and RK4 with $2.25e^4$ steps for the fine solver $\f$. We use the normalized version and set a normalized $\epsilon=5e^{-7}$.
\begin{figure}[ht]
     \centering
     \begin{subfigure}[b]{0.5\textwidth}
        \centering
        \includegraphics[width=1\textwidth]{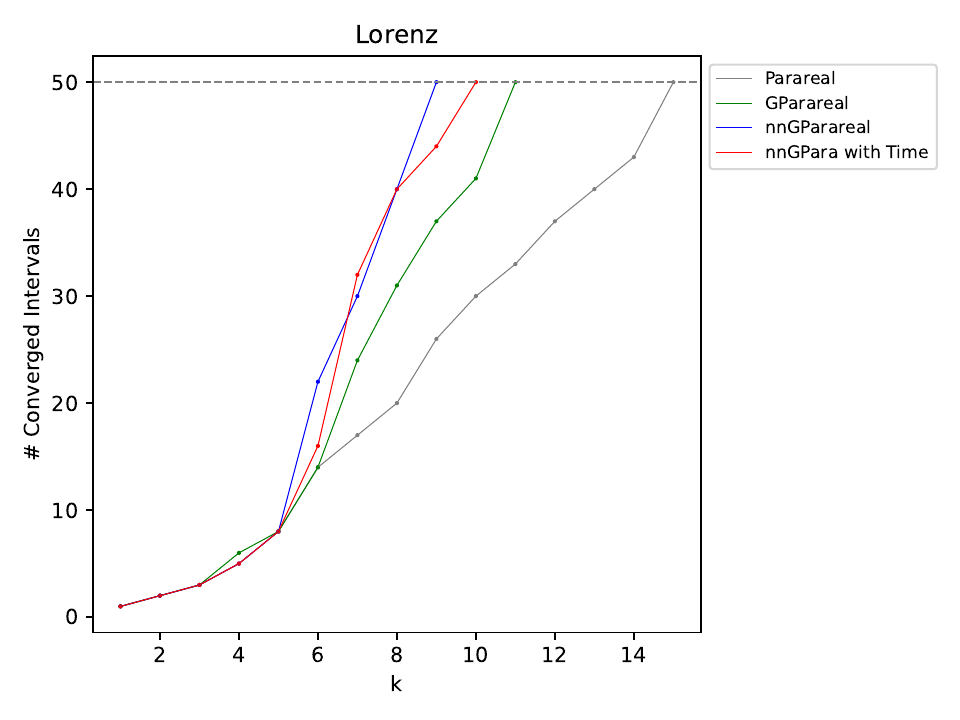}
     \end{subfigure}%
     \hfill
     \begin{subfigure}[b]{0.5\textwidth}
        \centering
        \includegraphics[width=1\textwidth]{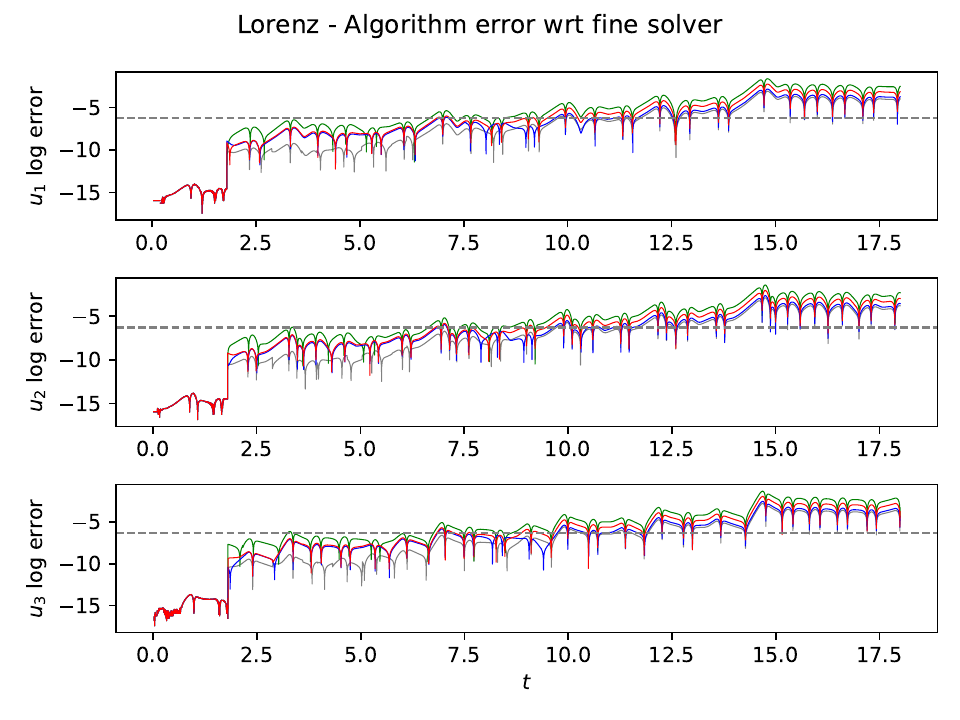}
     \end{subfigure}%
     \caption{Lorenz system. Left: number of converged intervals per iteration $k$. Right: accuracy of different Parareal models compared to the fine solver.}
\end{figure}

\subsection{FitzHugh-Nagumo PDE}
\label{sys:fhn_pde}
The two-dimensional, non-linear FHN PDE model \cite{ambrosio2009propagation} is an extension of the ODE system in Section~\ref{sys:fhn}. It represents a set of cells constituted by a small nucleus of pacemakers near the origin immersed among an assembly of excitable cells. The simpler FHN ODE system only considers one cell and its corresponding spike generation behavior. It is defined as
\begin{equation}
\begin{aligned}
    & v_t=a \nabla^2 v+v-v^3-w-c, \quad (x,t)\in (-L,L)^2 \times(t_0,t_N] \\
    & w_t=\tau\left(b \nabla^2 w+v-w\right),
\end{aligned}
    \label{eq:fhn_pde}
\end{equation}
for $v(x,t)$, $w(x,t)$ and Laplacian $\nabla^2 v=v_{xx}+v_{yy}$. We take initial conditions
\[
v(x, t_0)=v_0(x),\quad w(x, t_0)=w_0(x), \quad  x\in [-L,L],
\]
and boundary conditions
\[
\begin{aligned}
& v((x,-L), t)=v((x, L), t) \\
& v((-L, y), t)=v((L, y), t) \\
& v_y((x,-L), t)=v_y((x, L), t) \\
& v_x((-L, y), t)=v_x((L, y), t), \quad t \in [t_0, t_N].
\end{aligned}
\]
The boundary conditions for $w$ are taking the same. We discretize across spatial dimensions using finite difference and $\widetilde{d}$ equally spaced points, yielding an ODE with $d=2\widetilde{d}^2$ dimensions.  In the numerical experiments, we consider four values for $\widetilde{d}=10,12,14,16$, corresponding to $d=200, 288, 392, 512$. We set $N=512$, $L=1$, $t_0=0$, and take $v_0(x), w(0)$ randomly sampled from $[0,1]^d$ as initial condition. We use RK8 with $10^8$ steps for the fine solver $\f$. We use the normalized version with a normalized $\epsilon=5e^{-7}$. The time span and coarse solvers depend on $\widetilde{d}$, Table~\ref{tab:sim_setup_fhn_pde} describes their relation. This is to provide a realistic experiment where the user would need to adjust the coarse solver based on $t_N-t_0$.

\begin{table}
    \centering
    \begin{tabular}{cccc}
    \hline
    $d$&$\g$&$\g$ steps&$t_N$\\
        \hline
         $200$&RK2  & $3N$  & $t_N=150$ \\
         $288$&RK2  & $12N$  & $t_N=550$\\
         $392$&RK2  & $25N$  & $t_N=950$ \\
         $512$&RK4  & $25N$  & $t_N=1100$ \\
         \hline
    \end{tabular}
    \caption{Simulation setup for the two-dimensional FHN PDE. Adjusting the coarse solver based on the time horizon $t_N$ makes the simulation more realistic.}
    \label{tab:sim_setup_fhn_pde}
\end{table}

\section{\texorpdfstring{$m$}{m}-nn Parareal}
\label{app:choose_m}
In the left panel of Figure~\ref{fig:brus_dataset_vis_para_both}, we visualize the dataset $\mathcal{D}_6$ for the $2$-dimensional Brussellator collected up to iteration 6, together with a test observation $\bs{U}'= \bs{U}_{30}^6$ used during the predictor-corrector rule ($\ref{eq:update_rule_generic}$). We wish to compute $\fhat(\bs{U}_{30}^6)$ using the dataset $\mathcal{D}_6$. Note how only a few points are close to $\bs{U}_{30}^6$ in Euclidean distance. This can be better seen in the right plot, where we show $|\bs{U}-\bs{U}_{30}^6|$, the absolute coordinate-wise difference from every observation $\bs{U} \in \mathcal{U}_6$ and $\bs{U}_{30}^{6}$. The log transformation on each axis shows that a few points are very close, while the majority is far away. These nearest neighbors tend to be previous approximations of the same initial condition at previous iterations. In this example, $\bs{U}_{30}^{k}, k < 6$. Intuitively, $\fhat_{m-\text{nn}}$ is not optimal when using uniform weights since closer inputs carry more information in a local smoothing task. However, it is hard to say whether taking them to be inversely proportional to the distance will be optimal in all situations. The best approach is to learn the weights.

\begin{figure}[ht!]
     \centering
    \includegraphics[width=1\textwidth]{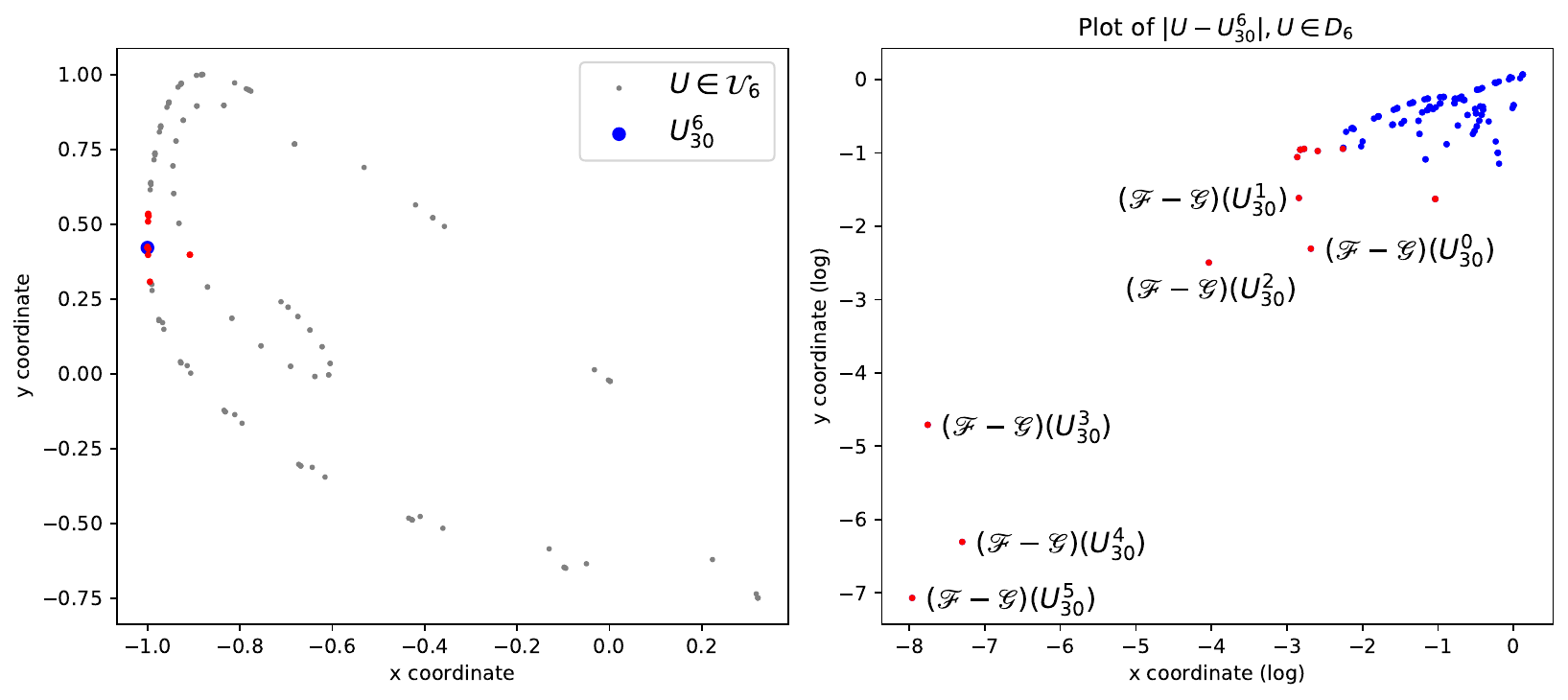}
    \caption{Visualization of the dataset for the two-dimensional Brusselator system (\ref{sys:brus}) Left: a scatterplot of the observations $\bs{U}\in\mathcal{U}_6$ accumulated by iteration 6 (gray), with the test observation $\bs{U}_{30}^{6}$ (blue); $m=15$ nearest neighbors to the test observations are marked in red. Note how far most of the points are from $\bs{U}_{30}^{6}$. Right: the absolute (log) distance coordinate-wise between $\bs{U}\in \mathcal{U}_6$ and $\bs{U}_{30}^{6}$.}
     \label{fig:brus_dataset_vis_para_both}
\end{figure}

\section{Choice of the reduced subset for the GP}
\label{app:nngp_time}
In this section, we compare the performance of alternative
approaches for selecting $m$ datapoints from the full dataset $\mathcal{D}_k$ at iteration $k$, arguing that they are either equivalent or suboptimal than the nearest
neighbors (nns). To do so, we divide them into rules-of-thumb and learned approaches, starting from the former.

An important element that has been left out of the algorithm until now is time. The initial conditions are naturally arranged temporally, where for a given iteration $k$, $\bs{U}_i^k$ depends on the value of $\bs{U}_{i-1}^k$. This gives rise to a time-ordered sequence $\bs{U}_0^k,...,\bs{U}_{N-1}^k$. Moreover, there is also a temporal evolution over Parareal iterations as the initial conditions converge, yielding the sequence $\bs{U}_i^0,...,\bs{U}_i^K$ for every interval $i$. We can picture this as a $K \times N$ matrix where each cell $(k, i)$ contains the initial condition $\bs{U}_{i}^{k-1}$. In short, the first $k$ rows contain the inputs $\mathcal{U}_k$ of the dataset $\mathcal{D}_k$. Then, we can refer to the sequence $\bs{U}_0^k,...,\bs{U}_{N-1}^k$ as a \textit{row} along the matrix, and to the sequence $\bs{U}_i^0,...,\bs{U}_i^K$ as a \textit{column}. When making a prediction $\fhat(\bs{U}_{i-1}^k)$, all we have available are the top $k$ rows, and we want to understand the best way to use the observations. GParareal's answer is to use all the data. Instead, nnGParareal ignores the temporal arrangement (rows and columns) and uses the closest cells (initial conditions) \textit{in Euclidean distance}. Here, we consider other strategies for building $\mathcal{D}_{i-1,k}$, the reduced dataset of size $m$:
\medskip
\begin{itemize}[leftmargin=*]
    \item \textit{Col + rnd}. Let $\mathcal{D}_{i-1,k}$ be formed of the \textit{column} $i-1$ up to row $m$, $\bs{U}_{i-1}^0,...,\bs{U}_{i-1}^m$, together with their corresponding outputs. If $k < m$, fill the remaining $m-k$ observations by randomly sampling from $\mathcal{D}_k$.
    \item \textit{Col only}. Similar to the above, but without sampling random observations if $k < m$. Note that this sets $m=k$, so the dataset size varies across iterations.
    \item \textit{Row + col}. Expand radially from the test observation $\bs{U}_{i-1}^k$ by striking a trade-off between previous iterations (column entries) and nearby intervals (row entries).
    \item \textit{Row-major}. Give priority to nearby intervals, thus expanding horizontally across columns first.
    \item \textit{Column-major}. Give priority to previous iterations, thus expanding vertically across rows first.
\end{itemize}
In Table~\ref{tab:nngp_time_heur}, we compare the performance of the rules-of-thumb above with the nns across several systems. As anticipated, the nn strategy achieves the best performance across all systems. 
\begin{table}
    \centering
    \small
\begin{tabular}{lcccccc}
\hline
Subset choice & FHN  & R\"ossler  & Hopf  & Brusselator  & Lorenz  & Double Pendulum  \\
\hline
 nn &  \textbf{5}  &  \textbf{12}  &  \textbf{10}  &  \textbf{17}  &  \textbf{10}  &  \textbf{10}  \\
 Col + rnd &  8  &  14  &  \textbf{10}  &  19  &  13  &  11  \\
 Col only &  8  &  17  &  \textbf{10}  &  $-$  &  13  &  15  \\
 Row + col &  8  &  17  &  \textbf{10}  &  $-$  &  13  &  12  \\
 Row-major &  10  &  21  &  30  &  $-$  &  12  &  13  \\
 Column-major &  7 &  16 &  \textbf{10} &  20 &  13 &  13 \\\hline
\end{tabular}
\caption{Simulation results for a range of heuristic choices of data subset for nnGParareal. The entries are the number of iterations to convergence. The dash indicates that the algorithm failed to converge. Where possible, $m$ has been set equal to $11$. The nearest neighbors (nns) consistently achieve the best performance.}
\label{tab:nngp_time_heur}
\end{table}

Of all the strategies mentioned above, none explicitly considered the spatial distance. On the other hand, the nns do not consider time. Like before, when selecting the optimal weighting scheme for the neighbors, we employ learning to bridge the gap between the two approaches. It might not be easy, a priori, to know how much importance to give to each factor between rows, columns, and spatial information. However, we can use the data and the GP kernel $\mathcal{K}_{\rm GP}(\cdot, \cdot)$ to learn the importance of each. Note that the kernel in \eqref{eq:se_kern} already ranks observations based on their Euclidean distance, so it is enough to generalize it to have also access to temporal information. 

Drawing inspiration from \cite{datta2016nonseparable}, we incorporate the time information carried by interval $i$ and iteration $k$ into our kernel. We enrich the dataset $\mathcal{D}_k$, denoted as $\widetilde{\mathcal{D}_k}$, by including a tuple $(i,k)$ indicating which cell the observation corresponds to in the matrix. By assigning independent length scales to space $\sigma_s^2$, intervals $\sigma_i^2$ (rows), and iterations $\sigma_k^2$ (columns), we can determine the relative importance of each in a structured, automated way by tuning them according to the marginal log-likelihood function. For this, let $\widetilde{\bs{U}}_i^k=(\bs{U}_i^k,i, k)$ be an enriched observation. We employ the following compositional kernel:
\begin{equation*}
\begin{aligned}
    \mathcal{K}_{\rm GP} (\widetilde{\bs{U}}_i^k, \widetilde{\bs{U}}_{i'}^{k'}  ) &= \mathcal{K}_{\rm GP}\left ((\bs{U}_i^k, i, k), (\bs{U}_{i'}^{k'}, i', k')\right ) \\
    &= {\sigma_{\rm o}^2} \mathcal{K}_{\rm GP}^1 (\widetilde{\bs{U}}_i^k, \widetilde{\bs{U}}_{i'}^{k'} ) \mathcal{K}_{\rm GP}^2 (\widetilde{\bs{U}}_i^k, \widetilde{\bs{U}}_{i'}^{k'}  ) \mathcal{K}_{\rm GP}^3 (\widetilde{\bs{U}}_i^k, \widetilde{\bs{U}}_{i'}^{k'}  ), 
\end{aligned}
    \label{eq:kernel_time}
\end{equation*}

with
\begin{align*}
    \mathcal{K}_{\rm GP}^1 (\widetilde{\bs{U}}_i^k, \widetilde{\bs{U}}_{i'}^{k'}  ) &= \exp\left(-\|\bs{U}_i^k-\bs{U}_{i'}^{k'}\|^2/ \sigma_s^2\right),\\
\mathcal{K}_{\rm GP}^2(\widetilde{\bs{U}}_i^k, \widetilde{\bs{U}}_{i'}^{k'}  ) &= \exp(-(i-i')^2/\sigma_i^2),\\
\mathcal{K}_{\rm GP}^3 (\widetilde{\bs{U}}_i^k, \widetilde{\bs{U}}_{i'}^{k'}  ) &= \exp(-  (k-k')^2/\sigma_k^2),
\end{align*}
where $\sigma_{\rm o}^2$ is the output length scale, as in \eqref{eq:se_kern}. Collectively, $\bs{\theta}=(\sigma_{\rm o}^2, \sigma_s^2, \sigma_i^2, \sigma_k^2, \sigma_{\rm reg}^2)$ form the hyperparameters of the GP augmented to include time information. When combined with Parareal and nn training, we refer to this as \texttt{nnGParareal with Time}.  

We aim to change the choice of the $m$ neighbors to account for time. Since the kernel estimates the correlation between observations, it seems natural to build $\widetilde{D}_{i-1,k}$ for a prediction $\fhat(\widetilde{\bs{U}}_{i-1}^k)$ out of the $m$ pairs $(\widetilde{\bs{U}}, (\f-\g)(\widetilde{\bs{U}}))$ which have the highest kernel score $\mathcal{K}_{\rm GP}(\widetilde{\bs{U}}, \widetilde{\bs{U}}_{i-1}^k)$, $\widetilde{\bs{U}} \in  \widetilde{\mathcal{U}}_k$. This was already identified by \cite{vecchia1988estimation} as a natural strategy for subset selection and is also what is used in\cite{datta2016nonseparable}. However, as pointed out in the same paper, the ranking operation requires knowledge of the optimal hyperparameters $\bs{\theta}^*$ to evaluate the kernel $\mathcal{K}_{\rm GP}(\widetilde{\bs{U}}, \widetilde{\bs{U}}_{i-1}^k)$. In general, $\bs{\theta}^*$ is not available unless a data subset $\widetilde{\mathcal{D}}_{i,k}$ is proposed first, and the marginal log-likelihood is minimized for $\bs{\theta}$. This suggests a computationally expensive iterative procedure. Propose $\widetilde{\mathcal{D}}_{i-1,k}^0$ at random and estimate $\bs{\theta}_0^*$ conditional on $\widetilde{\mathcal{D}}_{i-1,k}^0$. Using $\bs{\theta}^*_0$, rank the observation in $\widetilde{D}_k$ to obtain a new subset $\widetilde{\mathcal{D}}_{i-1,k}^1$. Estimate the new optimal hyperparameters $\bs{\theta}_1^*$ based on $\widetilde{\mathcal{D}}_{i-1,k}^1$. Repeat until the data subset has stabilized or for an arbitrary number of iterations. Since each iteration requires optimizing $\boldsymbol{\theta}$, which is usually expensive, such a strategy has historically been discarded in favor of more straightforward rules. For example, \cite{datta2016hierarchical} show empirically that simply choosing the $m$ nearest neighbors gives excellent performance compared to the full GP. Furthermore, \cite{datta2016nearest} states: “As nearest neighbors correspond to points with highest spatial correlation, this choice produces excellent approximations." We observed in practice that two or three iterations of the procedure are sufficient for the subsets $\widetilde{D}_{i-1,k}^{\cdot}$ to stabilize in most cases. Moreover, the converged subset consisted of the $m$ closest spatial neighbors in over 80\% of the cases, as seen Figure~\ref{fig:nngp_time_lorenz_final_subset} for the Lorenz system. Finally, to verify that the nnGParareal with Time does not provide further performance gains over nnGParareal, we compare the two over a range of systems in Table~\ref{tab:gp_vs_nngp7_with_time}. Since they show equivalent performance, we prefer the simpler nn variant, which is much cheaper to run.

\begin{table}[t]
{
\footnotesize
    \centering
    \begin{tabular}{lcccccc}
System& FHN & R\"ossler& Hopf& Bruss. & Lorenz  & Dbl Pendulum \\
\hline
$N$& $40$ & $40$& $32$& $32$ & $50$  & $32$ \\
\hline
 \multicolumn{7}{c}{Number of iterations for each algorithm until convergence with a normalized accuracy $\epsilon=5e^{-7}$} \\ \hline
Parareal  &  11  &  18  &  19  &  19  &  15  &  15 \\
 GParareal  &  \textbf{5}  &  13  &  10  &  20  &  11  &  \textbf{10} \\
 nnGParareal&  \textbf{5}  &  \textbf{12}  &  \textbf{9}  &  \textbf{17}  &  \textbf{9}  &  \textbf{10} \\ \hline
nnGParareal with Time &  \textbf{5} &  \textbf{12} &  \textbf{9} &  \textbf{17} &  \textbf{9} &  \textbf{10}\\\hline
    \end{tabular}
    }
    \caption{Number of iterations required by Parareal, GParareal, nnGParareal and nnGParareal with Time to converge for the systems described in  Supplementary Material~\ref{app:models} and Sections \ref{tab:nonaut_scaling} and \ref{sec:exp_tomlab}.} %
    \label{tab:gp_vs_nngp7_with_time}
\end{table}
\begin{figure}[ht]
     \centering
     \includegraphics[width=0.6\linewidth]{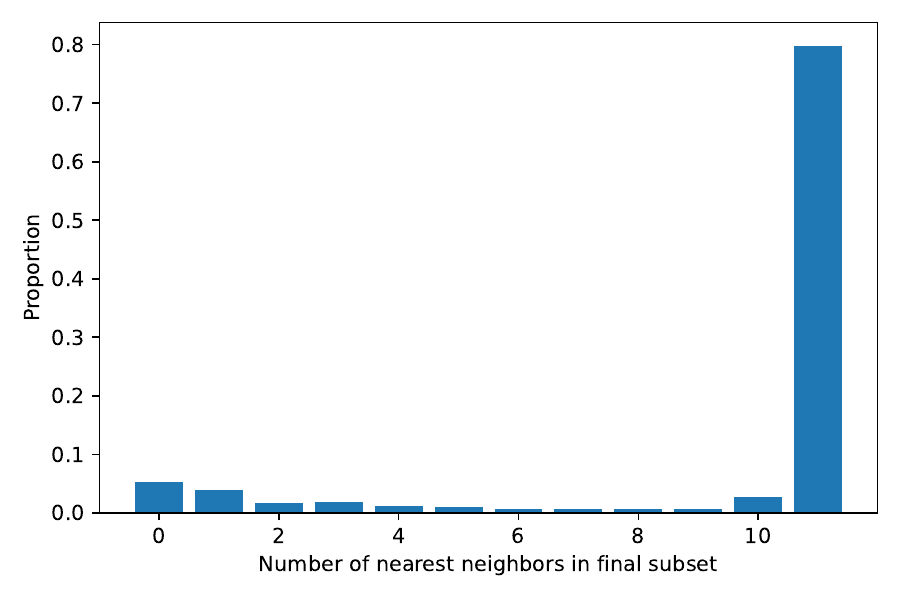}
     \caption{Histogram of the number of observations in the converged subset $\widetilde{D}_{i-1,k}^{last}$ which are nearest neighbors. The histogram has been computed aggregating over $i, k$ and multiple independent random samples of  $\widetilde{D}_{i-1,k}^{0}$.}
     \label{fig:nngp_time_lorenz_final_subset}
 \end{figure}
We conclude this section with a brief overview of the relevant literature. \cite{emery2009kriging} shows that the optimal neighbor sets are not necessarily nested as $m$ increases, which they are when constructed using a distance. \cite{stein2004approximating} propose to use a combination of points spatially near and far from the test observation $\bs{U}_i^k$. However, the authors note their approach becoming effective for relatively large values of $m$, around $m=35$. These results were later criticized by \cite{datta2016hierarchical}, who conducted extensive simulations, failing to find evidence of the benefit of adding far away points. Finally, we mention the approach of \cite{gramacy2015local}, implemented in the R package laGP \cite{gramacy2015local}, which iteratively builds the data subset by adding the observation leading to the maximum reduction in posterior variance. Through a series of approximations, they incrementally build $\mathcal{D}_{i-1,k}$ at $O(m^3)$ cost, comparable to that of the nnGP algorithm. However, they acknowledge a longer runtime than nnGP, which could be used to invert a larger subset $\mathcal{D}_{i-1,k}$. Although theoretically elegant, we found the approximate variance reduction estimate of LaGP to be too unstable with a close-to-singular correlation matrix, as in our case. One could try to stabilize it with proper tuning of the GP noise parameter $\sigma_{\rm reg}^2$. However, we did not explore this approach further due to its high computational cost.

\begin{figure}
     \centering
\includegraphics[width=0.6\linewidth]{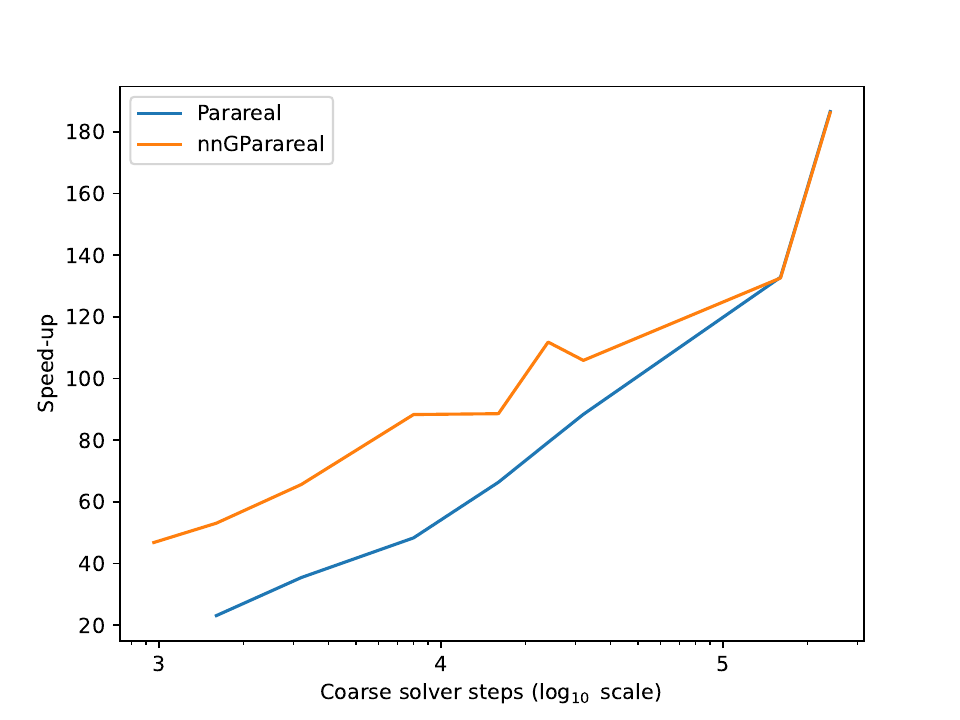}
     \caption{Effect of increased coarse solver precision (i.e., number of steps) on both Parareal and nnGParareal speed-up, for the Hopf bifurcation system with $N=32$ and $1.74e^8$ fine solver steps. 
     Parareal fails to converge for $10^3$ steps (missing data in the plot), while nnGParareal converges with a notable speed-up.}
     \label{Gextra}
 \end{figure}
\section{Hopf bifurcation's performance across \texorpdfstring{$N_\mathcal{G}$}{NG}}

\label{extraHopf}
Figure \ref{Gextra} shows the speed-ups of Parareal and nnGParareal as a function of the number of corse solver (RK1) steps for the Hopf bifurcation system with $N=32$ and $1.74 e^8$ steps of the fine solver (RK8). While the speed-ups eventually become similar for more than $10^5$ coarse steps, overall nnGParareal allows to use much cheaper coarse solver precision than Parareal. Indeed, the latter requires many more coarse steps than nnGParareal to obtain a similar speed-up (e.g., $10^4$ vs $10^3$ for a speed-up of 50), and does not converge for $10^3$ coarse solver steps.

\section{Thomas Labyrinth Scalability Table}
Table~\ref{tab:tomlab} reports more complete experimental results for the scalability study run on Thomas labyrinth in Section~\ref{sec:exp_tomlab}.
\label{app:tomlab_scal_table}

\begin{table}[ht]
    \centering
{\scriptsize
   
\begin{tabular}{lccccccc}

    \multicolumn{7}{c}{\footnotesize Thomas Labyrinth, with $N=32, d=3$, and $ t_N=10$}\\\\
    \hline
    Algorithm & $K_{\rm alg}$ & $T_\g$ & $T_\f$ & $\tmdl$ & $T_{\rm alg}$ & $\hat S_{\rm alg}$ & $S_{\rm alg}^*$\\
    \hline
    Fine & $-$ & $-$ & $-$ & $-$ & 3.18e+04 & 1&1\\
    Parareal & 30 & 1.82e-02 & 1.05e+03 & 2.47e-03 & 3.16e+04 & 1.01&1.07\\
    GParareal & 25 & 2.17e-02 & 1.06e+03 & 2.67e+01 & 2.65e+04 & 1.20&1.28\\
    nnGParareal & 24 & 2.17e-02 & 1.04e+03 & 7.50e+00 & 2.49e+04 & \textbf{1.27}&\textbf{1.33}\\
    \hline\\
    \multicolumn{7}{c}{\footnotesize $N=64, d=3$, and $ t_N=10$}\\
    \hline
    Algorithm & $K_{\rm alg}$ & $T_\g$ & $T_\f$ & $\tmdl$ & $T_{\rm alg}$ & $\hat S_{\rm alg}$ & $S_{\rm alg}^*$\\
    \hline
    Fine & $-$ & $-$ & $-$ & $-$ & 3.18e+04 & 1&1\\
    Parareal & 20 & 3.57e-02 & 5.34e+02 & 3.98e-03 & 1.07e+04 & 2.98&3.20\\
    GParareal & 15 & 4.58e-02 & 5.56e+02 & 3.86e+01 & 8.38e+03 & \textbf{3.79}&\textbf{4.26}\\
    nnGParareal & 16 & 4.29e-02 & 5.42e+02 & 1.26e+01 & 8.70e+03 & 3.65&4.00\\
    \hline\\
    
    \multicolumn{7}{c}{\footnotesize $N=128, d=3$, and $ t_N=40$}\\
    \hline
    Algorithm & $K_{\rm alg}$ & $T_\g$ & $T_\f$ & $\tmdl$ & $T_{\rm alg}$ &  $\hat S_{\rm alg}$ & $S_{\rm alg}^*$\\
    \hline
    Fine & $-$ & $-$ & $-$ & $-$ & 3.18e+04 & 1&1\\
    Parareal & 100 & 3.83e-02 & 2.70e+02 & 3.11e-02 & 2.70e+04 & 1.18&1.28\\
    GParareal & 64 & 4.78e-02 & 2.77e+02 & 1.86e+04 & 3.63e+04 & 0.87&1.5\\
     nnGParareal & 63 & 4.65e-02 & 2.74e+02 & 8.19e+01 & 1.73e+04 & \textbf{1.83}&\textbf{2.03}\\
    \hline\\
    
    \multicolumn{7}{c}{\footnotesize $N=256, d=3$, and $ t_N=100$}\\
    \hline
    Algorithm & $K_{\rm alg}$ & $T_\g$ & $T_\f$ & $\tmdl$ & $T_{\rm alg}$ & $\hat S_{\rm alg}$ & $S_{\rm alg}^*$\\
    \hline
    Fine & $-$ & $-$ & $-$ & $-$ & 3.18e+04 & 1&1\\
    Parareal & 256 & 6.85e-02 & 1.35e+02 & 1.50e-01 & 3.46e+04 & 0.92&1.00\\
    GParareal & $>$ 60 & 1.33e-01 & 1.41e+02 & $>$ 1.64e+05 & $>$ 1.73e+05 & $<$ 0.18&\\
    nnGParareal & 159 & 8.22e-02 & 1.56e+02 & 7.36e+02 & 2.56e+04 & \textbf{1.24}&\textbf{1.60}\\
    \hline\\

    \multicolumn{7}{c}{\footnotesize $N=512, d=3$, and $ t_N=100$}\\
    \hline
    Algorithm & $K_{\rm alg}$ & $T_\g$ & $T_\f$ & $\tmdl$ & $T_{\rm alg}$ & $\hat S_{\rm alg}$ & $S_{\rm alg}^*$\\
    \hline
    Fine & $-$ & $-$ & $-$ & $-$ & 3.18e+04 & 1&1\\
    Parareal & 180 & 1.33e-01 & 6.97e+01 & 2.14e-01 & 1.26e+04 & 2.53&2.84\\
    GParareal & $> 25$ & 3.17e-01 & 8.86e+01 & $> 1.71e+05$ & $> 1.73e+05$ & $< 0.18$&\\
    nnGParareal & 69 & 1.69e-01 & 8.02e+01 & 4.31e+02 & 5.98e+03 & \textbf{5.32}&\textbf{7.17}\\
    \hline
    \end{tabular}\\   
    }
    \caption{Scalability study for Thomas labyrinth (\ref{eq:tomlab}). Time is reported in seconds. $T_\f$ and $T_\g$ refer to the runtime per iteration of the fine and coarse solvers, respectively. $K_{\rm alg}$ denotes the number of iterations to converge. $\tmdl$ includes the running time of training, hyper-parameter selection, and prediction for the corresponding model. $T_{\rm alg}$ is the runtime of the algorithm. $\hat S_{\rm alg}$ and $S_{\rm alg}^*$ refer to the empirical and maximum theoretical speed-up, respectively. For nnGParareal, we choose $m=18$ nns. For $N=256, 512$, GParareal failed to converge within the computational time budget.}
    \label{tab:tomlab}
    \end{table}

\section{Scalability results for the FHN PDE model}
Table~\ref{tab:fhn_pde} reports more complete experimental results for the scalability study run on the FitzHugh-Nagumo PDE model in Section~\ref{sec:exp_fhn}.
\label{app:FHN_PDE_scal}
 \begin{table}[ht!]
    \centering
    \scriptsize
\begin{tabular}{lccccccc}
    \multicolumn{8}{c}{\footnotesize FitzHugh-Nagumo PDE, with $N=512, d=200$, and $ t_N=150$}\\\\
    \hline
    Algorithm & $K_{\rm alg}$ & $T_{\g}$ & $T_{\f}$ & $\tmdl$ & $T_{\rm alg}$ & $\hat S_{\rm alg}$ & $S_{\rm alg}^*$\\
    \hline
    Fine & $-$ & $-$ & $-$ & $-$ & 1.02e+05 & 1 & 1\\
    Parareal & 25 & 3.09e-01 & 2.05e+02 & 9.20e-02 & 5.14e+03 & \textbf{19.80}& 20.48\\
    GParareal & 8 & 3.74e-01 & 2.13e+02 & 5.63e+04 & 5.80e+04 & 1.75& \textbf{64}\\
    nnGParareal & 12 & 4.66e-01 & 2.02e+02 & 3.87e+03 & 6.31e+03 & 16.12& 42.66\\
    \hline\\
    
    \multicolumn{8}{c}{\footnotesize $N=512, d=288$, and $ t_N=550$}\\
    \hline
    Algorithm & $K_{\rm alg}$ & $T_{\g}$ & $T_{\f}$ & $\tmdl$ & $T_{\rm alg}$ & $\hat S_{\rm alg}$ & $S_{\rm alg}^*$\\
    \hline
    Fine & $-$ & $-$ & $-$ & $-$ & 1.79e+05 & 1 & 1\\
    Parareal & 67 & 3.42e-01 & 3.55e+02 & 3.38e-01 & 2.39e+04 & 7.50& 7.64\\
    GParareal & 7 & 4.59e-01 & 3.57e+02 & 3.10e+04 & 3.35e+04 & 5.34& \textbf{73.14} \\
    nnGParareal & 10 & 5.53e-01 & 3.53e+02 & 4.26e+03 & 7.80e+03 & \textbf{22.95}& 51.2\\
    \hline\\
    
    \multicolumn{8}{c}{\footnotesize $N=512,d=392$, and $t_N=950$}\\
    \hline
    Algorithm & $K_{\rm alg}$ & $T_{\g}$ & $T_{\f}$ & $\tmdl$ & $T_{\rm alg}$ & $\hat S_{\rm alg}$ & $S_{\rm alg}^*$\\
    \hline
    Fine & $-$ & $-$ & $-$ & $-$ & 2.87e+05 & 1 & 1\\
    Parareal & 44 & 5.41e-01 & 6.19e+02 & 2.21e-01 & 2.73e+04 & 10.51 & 11.63\\
    GParareal &  $>$ 11 & 6.32e-01 & 6.29e+02 & $>$ 1.37e+05 & $>$ 1.73e+05 & $<$ 0.18& $>$ 46.54\\
    nnGParareal & 5 & 8.25e-01 & 6.61e+02 & 3.08e+03 & 6.39e+03 & \textbf{44.81}& \textbf{102.4}\\
    \hline\\
    
    \multicolumn{8}{c}{\footnotesize $N=512,d=512$, and $t_N=1100$}\\
    \hline
    Algorithm & $K_{\rm alg}$ & $T_{\g}$ & $T_{\f}$ & $\tmdl$ & $T_{\rm alg}$ & $\hat S_{\rm alg}$ & $S_{\rm alg}^*$\\
    \hline
    Fine & $-$ & $-$ & $-$ & $-$ & 8.17e+05 & 1 & 1\\
    Parareal & 79 & 1.27e+00 & 2.09e+03 & 4.27e-01 & 1.65e+05 & 4.95& 6.48\\
    GParareal &  $>$ 9 & 1.01e+00 & 2.06e+03 & $>$ 1.25e+05 & $>$ 1.73e+05 & $<$ 0.18& $>$56.88\\
    nnGParareal & 6 & 1.85e+00 & 2.07e+03 & 5.39e+03 & 1.78e+04 & \textbf{45.80}& \textbf{85.33}\\
    \hline
    \end{tabular}\\    
    \caption{ Simulation study on the empirical scalability in $d$ of GParareal and nnGParareal for the FitzHugh-Nagumo PDE model (\ref{eq:fhn_pde}). $T_\f$ and $T_\g$ refer to the runtime per iteration of the fine and coarse solvers, respectively. $K_{\rm alg}$ denotes the number of iterations taken to converge. $\tmdl$ includes the running time of training, hyper-parameter selection, and prediction for the corresponding model employed. $T_{\rm alg}$ is the runtime of the algorithm. $\hat S_{\rm alg}$ and $S_{\rm alg}^*$ refer to the empirical and maximum theoretical speed-up, respectively. For $d=392, 512$, GParareal failed to converge within the computational time budget. For nnGParareal, we choose $m=20$ nns. Time is reported in seconds.}
        \label{tab:fhn_pde}
    \end{table}

\section{Burgers' performance across \texorpdfstring{$m$}{m}}
\label{app:burg_perf_across_m}
This section reports additional experimental results for the Burgers' equation in Section~\ref{sec:exp_burg}. In particular, we study the impact of different dataset sizes $m$ on nnGParareal performance. Analogous analysis has been carried out for other systems, such as Thomas labyrinth and FitzHugh-Nagumo PDE, which yielded similar results and hence are not reported here. For both $T=5$ and $T=5.9$, Figure~\ref{fig:Burges_perf_across_m_K} reports the histogram of the iterations to convergence across several choices of $m$ ranging from 11 to 30. For each $m$, we executed 100 independent runs of nnGParareal to account for the algorithm's randomness coming from the log-likelihood exploration (see Supplementary Material~\ref{app:impl_det}). The performance is largely unaffected by changes in $m$, with minor gains resulting from  higher $m$, as could be expected. Figure~\ref{fig:Burges_perf_across_m_speed} displays the empirical distribution of the speed-up corresponding to the simulations in Figure~\ref{fig:Burges_perf_across_m_K}. Although nnGParareal takes longer to converge than GParareal for Burgers' equation, on average, the speed-up gains are always greater due to the lower training cost of the model.
\begin{figure}[ht]
     \centering
    \includegraphics[width=0.49\textwidth]{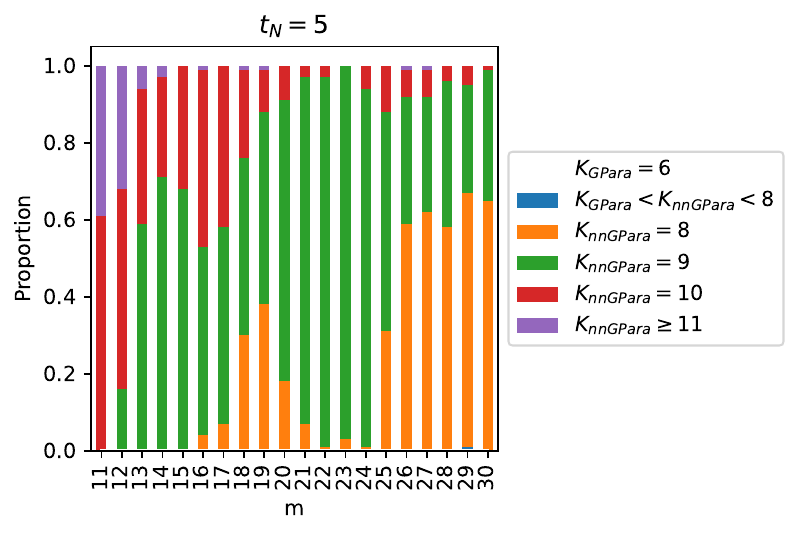}
    \includegraphics[width=0.49\textwidth]{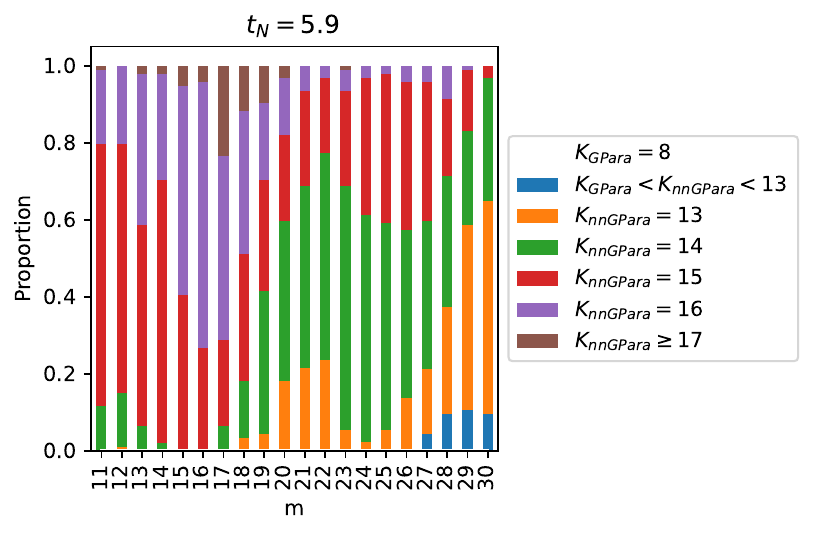}
    \caption{nnGParareal: empirical distribution of $K$ across values of $m$ for viscous Burgers' equation with 100 independent runs for each $m$. The shaded area indicates better performance than GParareal (never achieved, thus not visible).}
    \label{fig:Burges_perf_across_m_K}
\end{figure}

\begin{figure}[ht]
     \centering
    \includegraphics[width=0.49\textwidth]{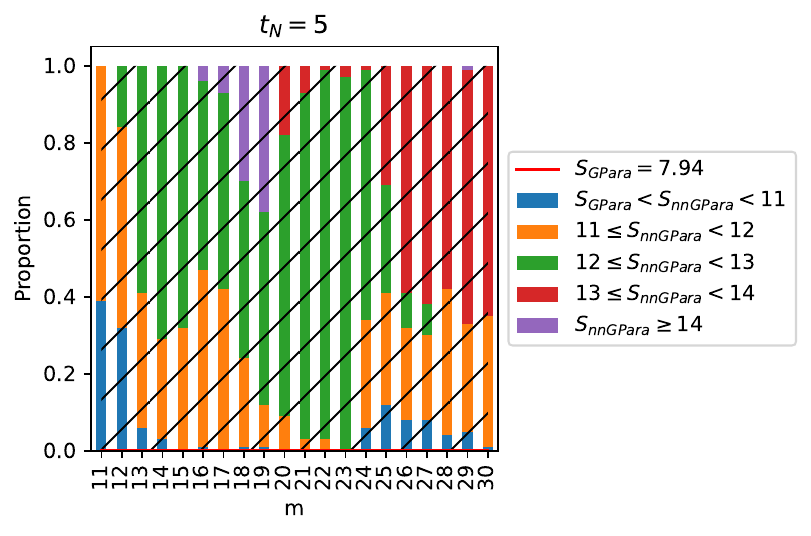}
    \includegraphics[width=0.49\textwidth]{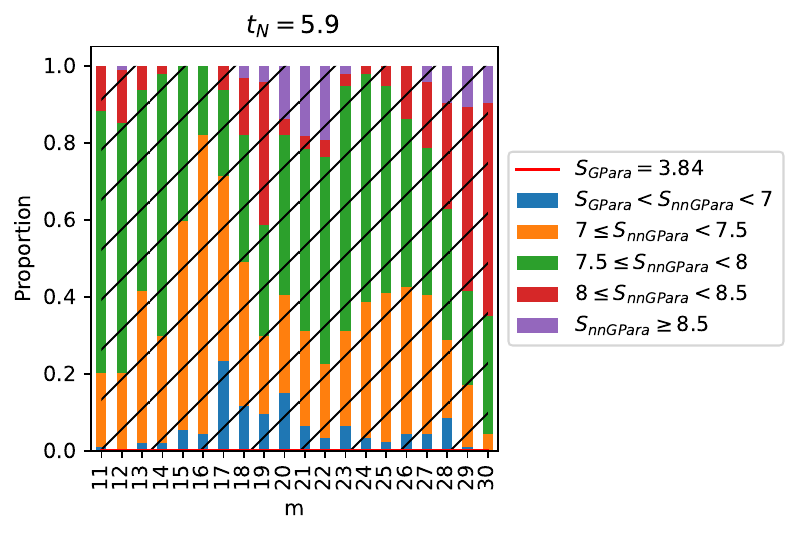}
    \caption{nnGParareal: empirical distribution of the speed-up across values of $m$ for viscous Burgers' equation with 100 independent runs for each $m$. The shaded area indicates better performance than GParareal.}
    \label{fig:Burges_perf_across_m_speed}
\end{figure}

\section{Heat equation}
\label{app:heat}
We consider the one‐dimensional heat equation on the spatial interval $x\in(0,L)$ over time $t\in(0,T]$,
\[
\frac{\partial u}{\partial t}(x,t) \;=\;\alpha\,\frac{\partial^2 u}{\partial x^2}(x,t),
\]
subject to homogeneous Dirichlet boundary conditions 
\[
u(0,t)=u(L,t)=0,\quad t\in(0,T],
\]
and initial condition $u(x,0)=u_0(x)$. The heat equation describes, among other phenomena, heat distribution in a homogeneous isotropic medium, where $u(x,t)$ represents temperature at position $x$ and time $t$, and $\alpha$ denotes thermal diffusivity.
We discretize the one dimensional spatial domain using finite difference~\cite{fornberg1988generation}, defining $d+1$ equally spaced points $x_{j+1}=x_j+\Delta x$, with $\Delta x = L/d$, $j=0,\ldots,d$, resulting in a $d$-dimensional ODE system. We choose parameters $L=1$, $T=2$, $\alpha=0.1$, $d=40$, and $u_0(x)=\sin(2\pi x)$ for our numerical experiments. Given the periodic initial condition, the exact solution is
\begin{equation}\label{heat_eq}u(x,t)=\exp\bigl(-0.4\pi^2\,t\bigr)\,\sin(2\pi x).
\end{equation}
\begin{table}
    \centering
    {\small
\begin{tabular}{llc}
\multicolumn{3}{c}{}\\\\
\hline
Parameter & Description \\
\hline
$N$ & Number of Parareal intervals & 300\\
$-$ & Number of processors for parallelization (set equal to $N$) & 300\\
$\mathscr{F}$ & Fine solver & Runge-Kutta 8\\
$\mathscr{G}$ & Coarse solver & Runge-Kutta 2\\
$N_{\mathscr{F}}$ & Number of temporal discretization steps per interval for $\mathscr{F}$ & 1000\\
$N_{\mathscr{G}}$ & Number of temporal discretization steps per interval for $\mathscr{G}$ & 2\\
$N \times N_{\mathscr{G}}$ & Total number of temporal steps for $\mathscr{G}$ across all intervals & 600\\
\hline
\end{tabular}}
\caption{Parareal simulation parameters}
\label{table1}
\end{table}
Table~\ref{table1} summarizes the Parareal simulation parameters. To ensure balanced discretization errors, we estimate temporal and spatial errors of the fine sequential solver $\mathscr{F}$ independently, proceeding as detailed before:
\begin{itemize}
\item Increase spatial (resp. temporal) resolution by a factor of $15$.
\item Keeping temporal (resp. spatial) resolution fixed, numerically approximate the solution.
\item Measure temporal (resp. spatial) discretization errors, $e_t$ (resp. $e_s$), using the Euclidean distance to the exact solution \eqref{heat_eq}.
\end{itemize}
With this approach, we obtained time and space discretization errors equal to $e_s=2.13 \times 10^{-4}$ and $e_t=8.32 \times 10^{-4}$, respectively, confirming that both errors are indeed of the same order of magnitude and balanced. Using this balanced resolution, we compared the runtimes of Parareal and nnGParareal against those of the sequential fine solver. The algorithms demonstrated significant speedups even with a minimal number of coarse solver steps, as shown in Table~\ref{tab_res}.
\begin{table}
    \centering
    {\small
\begin{tabular}{lcccccc}
\hline
Algorithm & $K_{\rm alg}$ & $T_{\mathscr{G}}$ & $T_{\mathscr{F}}$ & $T_{\rm model}$ & $T_{\rm alg}$ & $\widehat S_{\rm alg}$\\
\hline
Fine & $-$ & $-$ & $-$ & $-$ & 98.93 & 1\\
Parareal & 33 & 0.16 & 0.33 & 0.003 & 11.18 & 8.85\\
nnGParareal & 3 & 0.13 & 0.37 & 2.95 & 4.20 & \textbf{23.50}\\
\hline
\end{tabular}}
\caption{Performance of Parareal and nnGParareal ($m=18$) for the heat equation \eqref{heat_eq}. Times are in seconds. $T_{\mathscr{F}}$ and $T_{\mathscr{G}}$ represent the fine/coarse solvers' runtime per iteration, $K_{\rm alg}$ is iteration count to convergence, $T_{\rm model}$ includes training/hyperparameter selection/prediction time, and $T_{\rm alg}$ and $\widehat S_{\rm alg}$ represent total runtime and empirical parallel speedup, respectively.}
\label{tab_res}
\end{table}
While we acknowledge the moderate spatial dimension and relatively low time resolution, which typically do not warrant parallelization across $N=300$ processors, this controlled example demonstrates that our method, and PinT methods more broadly, do not rely on significant over-resolution in one discretization dimension to achieve meaningful parallel performance.

 \section{Jitter and hyperparameters optimization}
\label{app:impl_det}
The original GParareal implementation did not provide an automated strategy for tuning the jitter $\sigma_{\rm reg}^2$ in \eqref{eq:gp_posterior_m}. %
Using a fixed value may be too conservative and yield sub-optimal performance \cite{ameli2022noise,ranjan2011computationally,bostanabad2018leveraging}. The relevance of setting $\sigma_{\rm reg}^2>0$ in the case of the deterministic output, where the process has no intrinsic variance, is to reduce the numerical instability when inverting the GP covariance. There is extensive evidence of this phenomenon \cite{booker1999rigorous,  gramacy2012cases,   huang2006sequential, peng2014choice,pepelyshev2010role, taddy2009bayesian}. Numerical instability can arise when the condition number of the covariance matrix, defined as the ratio between the largest and the smallest singular value, is too large. Setting $\sigma_{\rm reg}^2>0$ affects the spectral properties of the (symmetric) covariance matrix by shifting the singular values by a constant proportional to it. This can decrease the condition number while leaving the inverse computation largely unaffected; refer to \cite{andrianakis2012effect} for a precise treatment.

We found the tuning of $\sigma_{\rm reg}^2$ to be important for (nn)GParareal, given the close distance between points in the dataset, as seen in Figure~\ref{fig:brus_dataset_vis_para_both}. This is a situation prone to numerical instability \cite{mohammadi2016analytic}. Some authors recommend setting a lower bound for the jitter to ensure the matrix does not leave the symmetric positive definite cone. These values are either found analytically \cite{peng2014choice} or via experiments \cite{ranjan2011computationally}. Others \cite{bostanabad2018leveraging, mohammadi2016analytic} advocate for the use of maximum likelihood (MLE) or leave-one-out cross-validation (LOO-CV), which can be computed in closed-form, for a data-driven tuning approach. In our application, we found MLE slightly more stable than LOO-CV in selecting $\sigma_{\rm reg}^2$, which is searched on a grid of size $n_{\rm reg}$,
with values as low as -20 on a log scale. All GParareal and nnGParareal results in this work feature automatic nugget selection. The remaining hyperparameters $(\sigma^2_{\rm i}, \sigma^2_{\rm o})$ are also tuned by MLE, using the Nelder-Mead optimization algorithm. Evidence of the advantages of MLE in the scope of hyperparameters optimization is shown in \cite{petit2021parameter}. Local minima are addressed using random restarts, i.e. by running the optimization procedure Nelder-Mead $n_{\rm start}$ times, independently, with different random initial guesses for the hyperparameters $\bs{\theta}$. Optimal hyperparameters computed during previous iterations can also be used as starting points for the optimization procedure.



\end{document}